\newtheorem{theorem}{Theorem}
\newtheorem*{theorem*}{Theorem}
\newtheorem{proposition}[theorem]{Proposition}
\theoremstyle{definition}
\newtheorem{definition}[theorem]{Definition}
\newtheorem{example}[theorem]{Example}
\newtheorem{assumption}[theorem]{Assumption}
\numberwithin{theorem}{section}
\numberwithin{equation}{section} 
\newcommand\numberthis{\addtocounter{equation}{1}\tag{\theequation}}
\newcommand*{\defeq}{\coloneqq}
\newcommand{\wrt}{\mathrm{d}}
\newcommand{\rd}{\mathrm{d}}
\newcommand{\set}[1]{\left\{ #1 \right\}}
\newcommand{\norm}[1]{\left\| #1 \right\|}
\newcommand{\radonnikodym}{Radon--Nikod{\'y}m\xspace}
\newcommand{\normal}{\mathcal{N}}
\newcommand{\reals}{\mathbb{R}}
\newcommand{\expected}{\mathbb{E}}
\newcommand{\cauchy}{\textup{Cauchy}}
\newcommand*{\arXiv}[1]{\bgroup\color{blue}\href{http://arxiv.org/abs/#1}{arXiv:#1}\egroup}
\newcommand*{\doi}[1]{\bgroup\color{blue}\href{http://dx.doi.org/#1}{doi:#1}\egroup}
\newcommand*{\email}[1]{\bgroup\color{blue}\href{mailto:#1}{#1}\egroup}
\renewcommand*{\url}[1]{\bgroup\color{blue}\href{#1}{#1}\egroup}
\DeclareMathOperator*{\argmin}{arg\,min}
\DeclareMathOperator*{\arginf}{arg\,inf}
\newcommand{\ci}{\mathbin{\perp\!\!\!\perp}} 
\begin{document}

\title{Bayesian Probabilistic Numerical Methods}

\author{Jon Cockayne\footnote{University of Warwick, j.cockayne@warwick.ac.uk} \and Chris Oates\footnote{Newcastle University and Alan Turing Institute, chris.oates@ncl.ac.uk} \and Tim Sullivan\footnote{Free University of Berlin and Zuse Institute Berlin, sullivan@zib.de} \and Mark Girolami\footnote{Imperial College London and Alan Turing Institute, m.girolami@imperial.ac.uk}}

\date{\today}

\maketitle

\begin{abstract}
The emergent field of probabilistic numerics has thus far lacked clear statistical principals.
This paper establishes \emph{Bayesian} probabilistic numerical methods as those which can be cast as solutions to certain inverse problems within the Bayesian framework.
This allows us to establish general conditions under which Bayesian probabilistic numerical methods are well-defined, encompassing both non-linear and non-Gaussian models.
For general computation, a numerical approximation scheme is proposed and its asymptotic convergence established. 
The theoretical development is then extended to pipelines of computation, wherein probabilistic numerical methods are composed to solve more challenging numerical tasks.
The contribution highlights an important research frontier at the interface of numerical analysis and uncertainty quantification, with a challenging industrial application presented.
\end{abstract}

\section{Introduction}

Numerical computation underpins almost all of modern scientific and industrial research and development.
The impact of a finite computational budget is that problems whose solutions are high- or infinite-dimensional, such as the solution of differential equations, must be discretised in order to be solved.
The result is an approximation to the object of interest. 
The declining rate of processor improvement as physical limits are reached is in contrast to the surge in complexity of modern inference problems, and as a result the error incurred by discretisation is attracting increased interest \citep[e.g.][]{Capistran2013}.

The situation is epitomised in modern climate models, where use of single-precision arithmetic has been explored to permit finer temporal resolution.
However, when computing in single-precision, a detailed time discretisation can \emph{increase} total error, due to the increased number of single precision computations, and in practice some form of \emph{ad-hoc} trade-off is sought \citep{Harvey2015}.
It has been argued that statistical considerations can permit more principled error control strategies for such models \citep{Hennig:2015jf}.

Numerical methods are designed to mitigate discretisation errors of all forms \citep{Press2007}.
Nonetheless, the introduction of error is unavoidable and it is the role of the numerical analyst to provide control of this error \citep{Oberkampf:2013hi}.
The central theoretical results of numerical analysis have in general not been obtained through statistical considerations.
More recently, the connection of discretisation error to statistics was noted as far back as \citet{Henrici:1963vy,Hull1966}, who argued that discretisation error can be modelled using a series of independent random perturbations to standard numerical methods. 
However, numerical analysts have cast doubt on this approach, since discretisation error can be highly structured; see \cite{Kahan:1996uo} and \citet[Section 2.8]{Higham:2002hj}.
To address these objections, the field of \emph{probabilistic numerics} has emerged with the aim to properly quantify the uncertainty introduced through discretisation in numerical methods.

The foundations of probabilistic numerics were laid in the 1970s and 1980s, where an important shift in emphasis occurred from the descriptive statistical models of the 1960s to the use of formal inference modalities that generalise across classes of numerical tasks.
In a remarkable series of papers, \cite{Larkin1969,Larkin1970,Larkin1972,Kuelbs1972,Larkin1974,Larkin1979,Larkin1979a}, Mike Larkin presented now classical results in probabilistic numerics, in particular establishing the correspondence between Gaussian measures on Hilbert spaces and optimal numerical methods.
Re-discovered and re-emphasised on a number of occasions, the role for statisticians in this new outlook was clearly captured in \cite{Kadane1985}:
\begin{quote}
Statistics can be thought of as a set of tools used in making decisions and inferences in the face of uncertainty.
Algorithms typically operate in such an environment.
Perhaps then, statisticians might join the teams of scholars addressing algorithmic issues.
\end{quote}
The 1980s culminated in development of Bayesian optimisation methods \citep{Mockus1989, Torn1989},
as well as the relation of smoothing splines to Bayesian estimation \citep{Kimeldorf1970a, Diaconis1983}. 

The modern notion of a probabilistic numerical method (henceforth PNM) was described in \cite{Hennig:2015jf}; these are algorithms whose output is a distribution over an unknown, deterministic quantity of interest, such as the numerical value of an integral.
Recent research in this field includes PNMs for numerical linear algebra \citep{Hennig:2015hf, Bartels:2016eh}, numerical solution of ordinary differential equations \citep[ODEs;][]{Schober:2014wt, Kersting2016, Schober:2016uh, Conrad:2016gv, Chkrebtii:2013ux}, numerical solution of partial differential equations \citep[PDEs;][]{Owhadi2015,Cockayne:2016ts, Conrad:2016gv} and numerical integration \citep{OHagan1991,Briol2016}. 

\paragraph{Open Problems}

Despite numerous recent successes and achievements, there is currently no general statistical foundation for PNMs,
due to the infinite-dimensional nature of the problems being solved.
For instance, at present it is not clear under what conditions a PNM is well-defined, except for in the standard conjugate Gaussian framework considered in \citep{Larkin1972}.
This limits the extent to which domain-specific knowledge, such as boundedness of an integrand or monotonicity of a solution to a differential equation, can be encoded in PNMs.
In contrast, classical numerical methods often exploit such information to achieve substantial reduction in discretisation error.
For instance, finite element methods for solution of PDEs proceed based on a mesh that is designed to be more refined in areas of the domain where greater variation of the solution is anticipated \citep{Strang1973}.

Furthermore, although PNMs have been proposed for many standard numerical tasks (see Section \ref{sec:state_of_art}), the lack of common theoretical foundations makes comparison of these methods difficult.
Again taking PDEs as an example, \citet{Cockayne:2016ts} placed a probability distribution on the unknown solution of the PDE, whereas \citet{Conrad:2016gv} placed a probability distribution on the unknown discretisation error of a numerical method.
The uncertainty modelled in each case is fundamentally different, but at present there is no framework in which to articulate the relationship between the two approaches.
Furthermore, though PNMs are often reported as being ``Bayesian'' there is no clear definition of what this ought to entail.

A more profound consequence of the lack of common foundation occurs when we seek to compose multiple PNMs.
For example, multi-physics cardiac models involve coupled ODEs and PDEs which must each be discretised and approximately solved to estimate a clinical quantity of interest \citep{Niederer:2011gg}.
The composition of successive discretisations leads to non-trivial error propagation and accumulation that could be quantified, in a statistical sense, with PNMs. 
However, proper composition of multiple PNMs for solutions of ODEs and PDEs requires that these PNMs share common statistical foundations that ensure coherence of the overall statistical output.
These foundations remain to be established.

\paragraph{Contributions}

The main contribution of this paper is to establish rigorous foundations for PNMs:

The first contribution is to argue for an explicit definition of a ``Bayesian'' PNM.
Our framework generalises the seminal work of \cite{Larkin1972} and builds on the modern and popular mathematical framework of \cite{Stuart:2010gja}.
This illuminates subtle distinctions among existing methods and clarifies the sense in which non-Bayesian methods are approximations to Bayesian PNMs.

The second contribution is to establish when PNMs are well-defined outside of the conjugate Gaussian context.
For exploration of non-linear, non-Gaussian models, a numerical approximation scheme is developed and shown to asymptotically approach the posterior distribution of interest.
Our aim here is not to develop new or more computationally efficient PNMs, but to understand when such development can be well-defined.

The third contribution is to discuss pipelines of composed PNMs.
This is a critical area of development for probabilistic numerics;
in isolation, the error of a numerical method can often be studied and understood, but when composed into a pipeline the resulting error structure may be non-trivial and its analysis becomes more difficult.
The real power of probabilistic numerics lies in its application to pipelines of numerical methods, where the probabilistic formulation permits analysis of variance (ANOVA) to understand the contribution of each discretisation to the overall numerical error.
This paper introduces conditions under which a composition of PNMs can be considered to provide meaningful output, so that ANOVA can be justified.

\paragraph{Structure of the Paper}

In Section~\ref{sec:pnm} we argue for an explicit definition of Bayesian PNM and establish when such methods are well-defined.
Section~\ref{sec:decision} establishes connections to other related fields, in particular with relation to evaluating the performance of PNMs.
In Section~\ref{sec:sampling} we develop useful numerical approximations to the output of Bayesian PNMs.
Section~\ref{sec:pipelines} develops the theory of composition for multiple PNMs.
Finally, in Section~\ref{sec:results} we present applications of the techniques discussed in this paper.

All proofs can be found in either the Appendix or the Electronic Supplement.

\section{Probabilistic Numerical Methods}
\label{sec:pnm}

The aim of this section is to provide rigorous statistical foundations for PNMs.

\subsection{Notation}

For a measurable space $(\mathcal{X},\Sigma_{\mathcal{X}})$, the shorthand $\mathcal{P}_{\mathcal{X}}$ will be used to denote the set of all distributions on $(\mathcal{X}, \Sigma_{\mathcal{X}})$.
For $\mu,\mu' \in \mathcal{P}_{\mathcal{X}}$ we write $\mu \ll \mu'$ when $\mu$ is absolutely continuous with respect to $\mu$.
The notation $\delta(x)$ will be used to denote a Dirac measure on $x \in \mathcal{X}$, so that $\delta(x) \in \mathcal{P}_{\mathcal{X}}$.
Let $1[S]$ denote the indicator function of an event $S \in \Sigma_{\mathcal{X}}$.
For a measurable function $f:\mathcal{X} \to \reals$ and a distribution $\mu \in \mathcal{P}_\mathcal{X}$, we will on occasion use the notation $\mu(f) = \int f(x) \mu(\wrt x)$ and $\|f\|_\infty = \sup_{x \in \mathcal{X}} |f(x)|$.
The point-wise product of two functions $f$ and $g$ is denoted $f \cdot g$.
For a function or operator $T$, $T_{\#}$ denotes the associated push-forward operator\footnote{Recall that, for measurable $T:\mathcal{X} \to \mathcal{A}$, the pushforward $T_\# \mu$ of a distribution $\mu \in \mathcal{P}_\mathcal{X}$ is defined as $T_\# \mu(A) = \mu(T^{-1}(A))$ for all $A \in \Sigma_{\mathcal{A}}$.} that acts on measures on the domain of $T$.
Let $\ci$ denote conditional independence.
The subset $\ell^p \subset \mathbb{R}^\infty$ is defined to consist of sequences $(u_i)$ for which $\sum_{i=1}^\infty |u_i|^p$ is convergent.
$C(0,1)$ will be used to denote the set of continuous functions on $(0,1)$.

\subsection{Definition of a PNM} \label{subsec:definition}

To first build intuition, consider numerical approximation of the Lebesgue integral 
\begin{equation*}
	\int x(t) \nu(\wrt t)
\end{equation*}
for some integrable function $x \colon D \to \reals$, with respect to a measure $\nu$ on $D$.
Here we may directly interrogate the integrand $x(t)$ at any $t \in D$, but unless $D$ is finite we cannot evaluate $x$ at all $t \in D$ with a finite computational budget.
Nonetheless, there are many algorithms for approximation of this integral based on information $\{x(t_i)\}_{i=1}^n$ at some collection of locations $\{t_i\}_{i=1}^n$.

To see the abstract structure of this problem, assume the state variable $x$ exists in a measurable space $(\mathcal{X}, \Sigma_{\mathcal{X}})$.
Information about $x$ is provided through an \emph{information operator} $A \colon \mathcal{X} \to \mathcal{A}$ whose range is a measurable space $(\mathcal{A}, \Sigma_\mathcal{A})$.
Thus, for the Lebesgue integration problem, the information operator is
\begin{equation}
	\label{eq:system_finite}
	A(x) = \begin{bmatrix}
		x(t_1) \\
		\vdots \\
		x(t_n)
	\end{bmatrix} = a \in \mathcal{A} .
\end{equation}
The space $\mathcal{X}$, in this case a space of functions, can be high- or infinite-dimensional, but the space $\mathcal{A}$ of information is assumed to be finite-dimensional in accordance with our finite computational budget. 
In this paper we make explicit a quantity of interest (QoI) $Q(x)$, defined by a map $Q \colon \mathcal{X} \rightarrow \mathcal{Q}$ into a measurable space $(\mathcal{Q}, \Sigma_{\mathcal{Q}})$. 
This captures that $x$ itself may not be the object of interest for the numerical problem; for the Lebesgue integration illustration, the QoI is not $x$ itself but $Q(x) = \int x(t) \nu(\wrt t)$.

The standard approach to such computational problems is to construct an algorithm which, when applied, produces some approximation $\hat{q}(a)$ of $Q(x)$ based on the information $a$, whose theoretical convergence order can be studied. 
A successful algorithm will often tailor the information operator $A$ to the QoI $Q$. 
For example, classical Gaussian cubature specifies \emph{sigma points} $\{t_i^*\}_{i=1}^n$ at which the integrand must be evaluated, based on exact integration of certain polynomial test functions. 

The probabilistic numerical approach, instead, begins with the introduction of a random variable $X$ on $(\mathcal{X}, \Sigma_\mathcal{X})$.
The true state $X = x$ is fixed but unknown; the randomness is used an abstract device used to represent epistemic uncertainty about $x$ prior to evolution of the information operator \citep{Hennig:2015jf}.
This is now formalised:

\begin{definition}[Belief Distribution]
An element $\mu \in \mathcal{P}_{\mathcal{X}}$ is a \emph{belief distribution}\footnote{Two remarks are in order: First, we have avoided the use of ``prior'' as this abstract framework encompasses both Bayesian and non-Bayesian PNMs (to be defined). Second, the use of ``belief'' differs to the set-valued \emph{belief functions} in Dempster–Shafer theory, which do not require that $\mu(E) + \mu(E^\text{c}) = 1$ \citep{Shafer1976}.} for $x$ if it carries the formal semantics of belief about the true, unknown state variable $x$.
\end{definition}

\noindent
Thus we may consider $\mu$ to be the law of $X$.
The construction of an appropriate belief distribution $\mu$ for a specific numerical task is not the focus of this research and has been considered in detail in previous work; see the Electronic Supplement for an overview of this material.
Rather we consider the problem of how one \emph{updates} the belief distribution $\mu$ in response to the information $A(x) = a$ obtained about the unknown $x$.
Generic approaches to update belief distributions, which generalise Bayesian inference beyond the unique update demanded in Bayes theorem, were formalised in \cite{Bissiri2016,Carvalho2017}. 

\begin{definition}[Probabilistic Numerical Method]
	Let $(\mathcal{X},\Sigma_{\mathcal{X}})$, $(\mathcal{A}, \Sigma_{\mathcal{A}})$ and $(\mathcal{Q},\Sigma_{\mathcal{Q}})$ be measurable spaces and let $A \colon \mathcal{X} \to \mathcal{A}$, $Q \colon \mathcal{X} \rightarrow \mathcal{Q}$ and $B \colon \mathcal{P}_\mathcal{X} \times \mathcal{A} \to \mathcal{P}_\mathcal{Q}$ where $A$ and $Q$ are measurable functions.
	The pair $M = (A , B)$ is called a \emph{probabilistic numerical method} for estimation of a quantity of interest $Q$.
	The map $A$ is called an \emph{information operator}, and the map $B$ is called a \emph{belief update operator}.
\end{definition}

The output of a PNM is a distribution $B(\mu,a) \in \mathcal{P}_{\mathcal{Q}}$.
This holds the formal status of a belief distribution for the value of $Q(x)$, based on both the initial belief $\mu$ about the value of $x$ and the information $a$ that are input to the PNM.

An objection sometimes raised to this construction is that $x$ itself is not random.
We emphasise that this work does not propose that $x$ should be considered as such; the random variable $X$ is a formal statistical device used to represent epistemic uncertainty \citep{Kadane2011,Lindley2014}.
Thus, there is no distinction from traditional statistics, in which $x$ represents a fixed but unknown parameter and $X$ encodes epistemic uncertainty about this parameter. 

Before presenting specific instances of this general framework, we comment on the potential analogy between $A$ and the likelihood function, and between $B$ and Bayes' theorem.
Whilst intuitively correct, the mathematical developments in this paper are not well-suited to these terms;
in Section~\ref{sec:conditioning} we show that Bayes formula is not well-defined, as the posterior distribution is not absolutely continuous with respect to the prior.

To strengthen intuition we now give specific examples of established PNMs:

\begin{example}[Probabilistic Integration] \label{ex:PI}

Consider the numerical integration problem earlier discussed.
Take $D \subseteq \reals^d$, $\mathcal{X}$ a separable Banach space of real-valued functions on $D$, and $\Sigma_\mathcal{X}$ the Borel $\sigma$-algebra for $\mathcal{X}$.
The space $(\mathcal{X},\Sigma_{\mathcal{X}})$ is endowed with a Gaussian belief distribution $\mu \in \mathcal{P}_{\mathcal{X}}$.
Given information $A(x) = a$, define $\mu^a$ to be the restriction of $\mu$ to those functions which interpolate $x$ at the points $\{t_i\}_{i=1}^n$; that $\mu^a$ is again Gaussian follows from linearity of the information operator \citep[see][for details]{Bogachev1998b}.
The QoI $Q$ remains $Q(x) = \int x(t) \nu(\wrt t)$.

This problem was first considered by \cite{Larkin1972}. The belief update operator proposed therein, and later considered in \cite{Diaconis1988,OHagan1991} and others, was $B(\mu,a) = Q_{\#} \mu^a$.
Since Gaussians are closed under linear projection, the PNM output $B(\mu, a)$ is a univariate Gaussian whose mean and variance can be expressed in closed-form for certain choices of Gaussian covariance function and reference measure $\nu$ on $D$.
Specifically, if $\mu$ has mean function $m \colon \mathcal{X} \rightarrow \mathbb{R}$ and covariance function $k:\mathcal{X} \times \mathcal{X} \rightarrow \mathbb{R}$, then
\begin{equation}
	B(\mu,a) = \text{N}(z^\top \mathrm{K}^{-1} (a-\bar{m}) , z_0 - z^\top \mathrm{K}^{-1} z)
\end{equation}
where $\bar{m}, z \in \mathbb{R}^n$ are defined as $\bar{m}_i = m(t_i)$, $z_i = \int k(t,t_i) \nu(\mathrm{d}t)$, $\mathrm{K} \in \mathbb{R}^{n \times n}$ is defined as $\mathrm{K}_{i,j} = k(t_i,t_j)$ and $z_0 = \iint k(t,t') (\nu \times \nu)\mathrm{d} (t \times t') \in \mathbb{R}$. 
This method was extensively studied in \citet{Briol2016}, who provided a listing of $(\nu,k)$ combinations for which $z$ and $z_0$ possess a closed-form.

An interesting fact is that the mean of $B(\mu, a)$ coincides with classical cubature rules for different choices of $\mu$ and $A$ \citep{Diaconis1988,Saerkkae2015}.
In Section~\ref{sec:decision} we will show that this is a typical feature of PNMs.
The crucial distinction between PNMs and classical numerical methods is the \emph{distributional} nature of $B(\mu,a)$, which carries the formal semantics of belief about the QoI.
The full distribution $B(\mu,a)$ was examined in \cite{Briol2016}, who established contraction to the exact value of the integral under certain smoothness conditions on the Gaussian covariance function and on the integrand.
See also \cite{Kanagawa2016,Karvonen2017}.

\end{example}

\begin{example}[Probabilistic Meshless Method] \label{ex:PMM}

As a canonical example of a PDE, take the following elliptic problem with Dirichlet boundary conditions
\begin{align*}
	-\nabla \cdot \left( \kappa \nabla x \right) &= f && \text{in } D\\
	x &= g && \text{on } \partial D \numberthis \label{eq:elliptic_pde}
\end{align*}
where we assume $D \subset \reals^d$ and $\kappa \colon D \rightarrow \mathbb{R}^{d \times d}$ is a known coefficient. 
Let $\mathcal{X}$ be a separable Banach space of appropriately differentiable real-valued functions and take $\Sigma_\mathcal{X}$ to be the Borel $\sigma$-algebra for $\mathcal{X}$.
In contrast to the first illustration, the QoI here is $Q(x) = x$, as the goal is to make inferences about the solution of the PDE itself. 

Such problems were considered in \cite{Cockayne:2016ts} wherein $\mu$ was restricted to be a Gaussian distribution on $\mathcal{X}$. 
The information operator was constructed by choosing finite sets of locations 
$T_1 = \set{t_{1,1}, \dots, t_{1,n_1}} \subset D$ 
and 
$T_2 = \set{t_{2,1}, \dots, t_{2,n_2}} \subset \partial D$ at which the system defined in Eq.~\eqref{eq:elliptic_pde} was evaluated, so that
\begin{equation*}
	A (x) = \begin{bmatrix}
		-\nabla \cdot \left( \kappa(t_{1,1}) \nabla x(t_{1,1})\right) \\
		\vdots \\
		-\nabla \cdot \left( \kappa(t_{1,n_1}) \nabla x(t_{1,n_1})\right) \\
		x(t_{2,1}) \\
		\vdots \\
		x(t_{2,n_2})
	\end{bmatrix}
	\qquad
	a = \begin{bmatrix}
		f(t_{1,1}) \\
		\vdots \\
		f(t_{1,n_1}) \\
		g(t_{2,1}) \\
		\vdots \\
		g(t_{2,n_2})
	\end{bmatrix} \;.
\end{equation*}
The belief update operator was chosen to be $B(\mu,a) = \mu^a$, where $\mu^a$ is the restriction of $\mu$ to those functions for which $A(x) = a$ is satisfied. 
In the setting of a linear system of PDEs such as that in Eq.~\eqref{eq:elliptic_pde}, the distribution $B(\mu,a)$ is again Gaussian \citep{Bogachev1998b}.
Full details are provided in \cite{Cockayne:2016ts}.

As in the previous example, we note that the mean of $B(\mu,a)$ coincides with the numerical solution to the PDE provided by a classical method \citep[the \emph{symmetric collocation} method;][]{Fasshauer1999}.
The full distribution $B(\mu,a)$ provides uncertainty quantification for the unknown exact solution and can again be shown to contract to the exact solution under certain smoothness conditions \citep{Cockayne:2016ts}.
This method was further analysed for a specific choice of covariance operator in the belief distribution $\mu$, in an impressive contribution from \cite{Owhadi2015a}.

\end{example}

\subsubsection{Classical Numerical Methods}

Standard numerical methods fit into the above framework, as can be seen by taking 
\begin{equation}
	B(\mu , a)  =  \delta \circ b(a) \label{eq:classic_NM}
\end{equation}
independent of the distribution $\mu$, where a function $b \colon \mathcal{A} \rightarrow \mathcal{Q}$ gives the output of some classical numerical method for solving the problem of interest.
Here $\delta \colon \mathcal{Q} \rightarrow \mathcal{P}_{\mathcal{Q}}$ maps $b(a) \in \mathcal{Q}$ to a Dirac measure centred on $b(a)$.
Thus, information in $a \in \mathcal{A}$ is used to construct a point estimate $b(a) \in \mathcal{Q}$ for the QoI.

The formal language of probabilities is not used in classical numerical analysis to describe numerical error. 
However, in many cases the classical and probabilistic analyses are mathematically equivalent. 
For instance, there is an equivalence between the standard deviation of $B(\mu,a)$ for probabilistic integration and the worst-case error for numerical cubature rules from numerical analysis \citep{Novak2010}.
The explanation for this phenomenon will be given in Section \ref{sec:decision}.

\subsection{Bayesian PNMs}

Having defined a PNM, we now state the central definition of this paper, that is of a \emph{Bayesian} PNM.
Define $\mu^a$ to be the conditional distribution of the random variable $X$, given the event $A(X) = a$.
For now we assume that this can be defined without ambiguity and reserve a more technical treatment of conditional probabilities for Section~\ref{sec:conditioning}.

In this work we followed \cite{Larkin1972} and cast the problem of determining $x$ in Eq.~\eqref{eq:system_finite} as a problem of Bayesian inversion, a framework now popular in applied mathematics and uncertainty quantification research \citep{Stuart:2010gja}.
However, in a standard Bayesian inverse problem the observed quantity $a$ is assumed to be corrupted with measurement error, which is described by a ``likelihood''.
This leads, under mild assumptions, to general versions of Bayes' theorem \citep[see][Section 2.2]{Stuart:2010gja}

For PNM, however, the information is \emph{not} corrupted with measurement error.
As a result, the support of the likelihood is a null set under the prior, making the standard approaches to such problems, including Bayes' theorem, ill-defined outside of the conjugate Gaussian case when unknowns are infinite-dimensional. 
This necessitates a new definition:

\begin{definition}[Bayesian Probabilistic Numerical Method] \label{def:bayesian_pnm}
	A probabilistic numerical method $M = (A,B)$ is said to be \emph{Bayesian}\footnote{The use of ``Bayesian'' contrasts with \cite{Bissiri2016}, for whom all belief update operators represent Bayesian learning algorithms to some greater or lesser extent. An alternative term could be ``lossless'', since all the information in $a$ is conditioned upon in $\mu^a$.} for a quantity of interest $Q$ if, for all $\mu \in \mathcal{P}_{\mathcal{X}}$, the output 
	\begin{equation*}
		B(\mu , a) = Q_{\#} \mu^a, 
		\quad  \text{ for } A_{\#} \mu \text{-almost-all } a \in \mathcal{A} .
	\end{equation*}
\end{definition}
That is, a PNM is Bayesian if the output of the PNM is the push-forward of the conditional distribution $\mu^a$ through $Q$.
This definition is familiar from the examples in Section \ref{subsec:definition}, which are both examples of Bayesian PNMs.

For Bayesian PNMs we adopt the traditional terminology in which $\mu$ is the \emph{prior} for $x$ and the output $Q_{\#} \mu^a$ the \emph{posterior} for $Q(x)$.
Note that, for fixed $A$ and $\mu$, the Bayesian choice of belief update operator $B$ (if it exists) is uniquely defined.

It is emphasised that the class of Bayesian PNMs is a subclass of all PNMs; examples of non-Bayesian PNMs are provided in Section~\ref{sec:state_of_art}.
Our analysis is focussed on Bayesian PNMs due to their appealing Bayesian interpretation and ease of generalisation to pipelines of computation in Section~\ref{sec:pipelines}.
For non-Bayesian PNMs, careful definition and analysis of the belief update operator is necessary to enable proper interpretation of the uncertainty quantification being provided.
In particular, the analysis of non-Bayesian PNMs may present considerable challenges in the context of computational pipelines, whereas for Bayesian PNMs this is shown in Section~\ref{sec:pipelines} to be straight-forward.

\subsection{Model Evidence}

A cornerstone of the Bayesian framework is the model evidence, or marginal likelihood \citep{MacKay1992}.
Let $\mathcal{A} \subseteq \mathbb{R}^n$ be equipped with the Lebesgue reference measure $\lambda$, such that $A_{\#}\mu$ admits a density $p_A = \wrt A_{\#} \mu / \wrt \lambda$.
Then the \emph{model evidence} $p_A(a)$, based on the information that $A(x) = a$, can be used as the basis for Bayesian model comparison.
In particular, two prior distributions $\mu$, $\tilde{\mu}$, can be compared through the Bayes factor 
\begin{align}
\mathrm{BF} := \frac{\tilde{p}_A(a)}{p_A(a)} &= \frac{\wrt A_{\#}\tilde{\mu}}{\wrt A_{\#}\mu}(a), 
\end{align}
where $\tilde{p}_A = \wrt A_{\#} \tilde{\mu} / \wrt \lambda$.
Here the second expression is independent of the choice of reference measure $\lambda$ and is thus valid for general $\mathcal{A}$.
The model evidence has been explored in connection with the design of Bayesian PNM.
For the integration and PDE examples \ref{ex:PI} and \ref{ex:PMM}, the model evidence has a closed form and was investigated in \cite{Briol2016,Cockayne:2016ts}.
In Section \ref{sec:results} we investigate the model evidence in the context of non-linear ODEs and PDEs for which it must be approximated.

\subsection{The Disintegration Theorem} \label{sec:conditioning}

The purpose of this section is to formalise $\mu^a$ and to determine conditions under which $\mu^a$ exists and is well-defined.
From Definition \ref{def:bayesian_pnm}, the output of a Bayesian PNM is $B(\mu, a) = Q_\# \mu^a$.
If $\mu^a$ exists, the pushforward $Q_\# \mu^a$ exists as $Q$ is assumed to be measurable; thus, in this section, we focus on the rigorous definition of $\mu^a$.

Unlike many problems of Bayesian inversion, proceeding by an analogue of Bayes' theorem is not possible.
Let $\mathcal{X}^a = \set{x \in \mathcal{X} : A (x) = a}$. Then we observe that, if it is  measurable, $\mathcal{X}^a$ may be a set of zero measure under $\mu$. Standard techniques for infinite-dimensional Bayesian inversion rely on constructing a posterior distribution based on its \radonnikodym derivative with respect to the prior \citep{Stuart:2010gja}. However, when $\mu^a \centernot\ll \mu$ no \radonnikodym derivative exists and we must turn to other approaches to establish when a Bayesian PNM is well-defined.

Conditioning on null sets is technical and was formalised in the celebrated construction of measure-theoretic probability by \cite{Kolmogorov:1933fe}. 
The central challenge is to establish uniqueness of conditional probabilities.
For this work we exploit the \emph{disintegration theorem} to ensure our constructions are well-defined.
The definition below is due to \citet[][p.78]{Dellacherie1978}, and a statistical introduction to disintegration can be found in \cite{Chang1997}.

\begin{definition}[Disintegration]
For $\mu \in \mathcal{P}_{\mathcal{X}}$, a collection $\{\mu^a\}_{a \in \mathcal{A}} \subset \mathcal{P}_{\mathcal{X}}$ is a \emph{disintegration} of $\mu$ with respect to the (measurable) map $A \colon \mathcal{X} \rightarrow \mathcal{A}$ if:
\begin{enumerate}
\item[1] (Concentration:) $\mu^a(\mathcal{X} \setminus \mathcal{X}^a) = 0$ for $A_\# \mu$-almost all $a \in \mathcal{A}$;
\end{enumerate}
and for each measurable $f \colon \mathcal{X} \rightarrow [0,\infty)$ it holds that
\begin{enumerate}
\item[2] (Measurability:) $a \mapsto \mu^a(f)$ is measurable;
\item[3] (Conditioning:) $\mu(f) = \int \mu^a(f) A_\# \mu(\mathrm{d}a)$.
\end{enumerate}
\end{definition}

The concept of disintegration extends the usual concept of conditioning of random variables to the case where $\mathcal{X}^a$ is a null set, in a way closely related to regular conditional distributions \citep{Kolmogorov:1933fe}.
Existence of disintegrations is guaranteed under general weak conditions:
\begin{theorem}[Disintegration Theorem; Thm. 1 of \cite{Chang1997}] \label{thm:Chang}
	Let $\mathcal{X}$ be a metric space, $\Sigma_{\mathcal{X}}$ be the Borel $\sigma$-algebra and $\mu\in \mathcal{P}_\mathcal{X}$ be Radon.
	Let $\Sigma_{\mathcal{A}}$ be countably generated and contain all singletons $\{a\}$ for $a \in \mathcal{A}$. 
	Then there exists a disintegration $\{\mu^a\}_{a \in \mathcal{A}}$ of $\mu$ with respect to $A$.
	Moreover, if $\{\nu^a\}_{a \in \mathcal{A}}$ is another such disintegration, then $\{a \in \mathcal{A} : \mu^a \neq \nu^a\}$ is a $A_\# \mu$ null set.
\end{theorem}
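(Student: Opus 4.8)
The plan is to deduce the disintegration from the classical theory of \emph{regular conditional distributions}: the metric/Radon hypotheses on $(\mathcal{X},\mu)$ guarantee that these exist, while the countable-generation and singleton hypotheses on $\Sigma_{\mathcal{A}}$ are precisely what let us transfer the conditioning from the sub-$\sigma$-algebra $\mathcal{G} \defeq A^{-1}(\Sigma_{\mathcal{A}}) \subseteq \Sigma_{\mathcal{X}}$ onto the individual fibres $\mathcal{X}^a$. Write $\nu \defeq A_{\#}\mu$. First I would pass to a well-behaved carrier: since $\mu$ is Radon there is an increasing sequence of compact sets $K_n \subseteq \mathcal{X}$ with $\mu\big(\bigcup_n K_n\big) = 1$, and the set $\mathcal{X}_0 \defeq \bigcup_n K_n$ --- an $F_\sigma$ subset of the Polish completion of the separable metric space it carries --- is a standard Borel space whose Borel $\sigma$-algebra is the trace of $\Sigma_{\mathcal{X}}$. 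Every construction below is carried out on $\mathcal{X}_0$ and then regarded as living on all of $\mathcal{X}$.

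On a standard Borel space, regular conditional distributions exist given any sub-$\sigma$-algebra, so there is a map $x \mapsto \mu_x \in \mathcal{P}_{\mathcal{X}_0}$ such that, for each $C$, the map $x \mapsto \mu_x(C)$ is $\mathcal{G}$-measurable and is a version of $\expected_\mu[\indicator[C] \mid \mathcal{G}]$. Since this map is $\mathcal{G} = A^{-1}(\Sigma_{\mathcal{A}})$-measurable into $\mathcal{P}_{\mathcal{X}_0}$, which is again standard Borel, the Doob--Dynkin lemma yields a measurable $a \mapsto \mu^a \in \mathcal{P}_{\mathcal{X}}$ with $\mu_x = \mu^{A(x)}$ for $\mu$-almost every $x$; this is the candidate disintegration, and \emph{concentration} (property~1) is the only nontrivial point to establish. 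For fixed $B \in \Sigma_{\mathcal{A}}$, the identity $\mu_x(A^{-1}(B)) = \indicator[A^{-1}(B)](x)$ (valid $\mu$-a.e.) translates into $\mu^a(A^{-1}(B)) = \indicator[B](a)$ for $\nu$-almost all $a$. Fixing a countable generating family $\{B_n\}$ of $\Sigma_{\mathcal{A}}$ and intersecting the corresponding conull sets, a Dynkin-class argument promotes this to $\mu^a(A^{-1}(C)) = \indicator[C](a)$ for \emph{all} $C \in \Sigma_{\mathcal{A}}$ and all $a$ in a single $\nu$-conull set; since $\{a\} \in \Sigma_{\mathcal{A}}$ --- which is also what makes $\mathcal{X}^a = A^{-1}(\{a\})$ measurable --- taking $C = \{a\}$ gives $\mu^a(\mathcal{X}^a) = 1$.

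Properties~2 and~3 are then inherited directly. For measurable $f \colon \mathcal{X} \to [0,\infty)$, the map $x \mapsto \int f\,\rd\mu_x$ is measurable (pass from indicators to simple functions to the general case by monotone convergence) and $\mathcal{G}$-measurable, hence equals $a \mapsto \mu^a(f)$ for a measurable function on $\mathcal{A}$, which is property~2; and integrating $\mu_x(f) = \expected_\mu[f\mid\mathcal{G}](x)$ against $\mu$ and changing variables through $A$ gives $\mu(f) = \int \mu^a(f)\, A_{\#}\mu(\rd a)$, which is property~3. For the uniqueness claim, let $\{\mu^a\}$ and $\{\nu^a\}$ both be disintegrations; applying property~3 to $f = \indicator[C \cap A^{-1}(B)]$ and using concentration of each family to rewrite $\mu^a(C \cap A^{-1}(B)) = \indicator[B](a)\,\mu^a(C)$ shows $\int_B \mu^a(C)\,\nu(\rd a) = \int_B \nu^a(C)\,\nu(\rd a)$ for every $B \in \Sigma_{\mathcal{A}}$, whence $\mu^a(C) = \nu^a(C)$ for $\nu$-almost all $a$; letting $C$ range over a countable algebra generating $\Sigma_{\mathcal{X}_0}$ and again intersecting null sets, $\mu^a$ and $\nu^a$ agree on a generating algebra and therefore coincide off a $\nu$-null set of $a$.

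The main obstacle is the first step. Regular conditional distributions do not exist over arbitrary probability spaces, so the entire force of the hypotheses ``$\mathcal{X}$ metric, $\mu$ Radon'' is spent reducing to a $\sigma$-compact, standard Borel carrier on which the usual construction (via a countable generator and the one-dimensional Kolmogorov extension) is available. Once that is secured the remaining steps --- the Doob--Dynkin factorisation through $A$, the Dynkin-class upgrade to fibrewise concentration, and the uniqueness argument --- are routine; the only real care needed is bookkeeping, namely synchronising the many ``for $\nu$-almost all $a$'' statements (each a priori depending on the $C$ or $B$ at hand) into single conull sets, which is exactly where countable generation of $\Sigma_{\mathcal{A}}$ and of $\Sigma_{\mathcal{X}_0}$ is used.
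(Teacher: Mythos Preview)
The paper does not supply its own proof of this theorem; it is stated as a direct citation of Theorem~1 in \cite{Chang1997}, with no argument given in either the main text, the Appendix, or the Electronic Supplement. There is therefore nothing in the paper to compare your proposal against.

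That said, your sketch is essentially the standard route and matches the strategy of the cited reference: use the Radon hypothesis to pass to a $\sigma$-compact (hence standard Borel) carrier, invoke existence of regular conditional distributions there, factor through $A$ via Doob--Dynkin, and then exploit countable generation of $\Sigma_{\mathcal{A}}$ together with the singleton hypothesis to upgrade the almost-sure identity $\mu^a(A^{-1}(B)) = \indicator[B](a)$ to genuine fibrewise concentration on a single $\nu$-conull set. The uniqueness argument via a countable generating algebra for $\Sigma_{\mathcal{X}_0}$ is likewise standard. Two small points worth tightening if you write this out in full: (i) in the Dynkin step, make sure your countable generating family for $\Sigma_{\mathcal{A}}$ is closed under finite intersections (or close it under such, which keeps it countable), since the $\pi$--$\lambda$ theorem needs a $\pi$-system; (ii) in the Doob--Dynkin step, the factorisation $\mu_x = \mu^{A(x)}$ is only guaranteed $\mu$-a.e., and the resulting $a \mapsto \mu^a$ is only pinned down $\nu$-a.e.\ --- which is fine, since the disintegration is only asserted to be essentially unique, but the bookkeeping should reflect this.
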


The requirement that $\mu$ is Radon is weak and is implied when $\mathcal{X}$ is a \emph{Radon space}, which encompasses, for example, separable complete metric spaces. 
The requirement that $\Sigma_{\mathcal{A}}$ is countably generated is also weak and includes the standard case where $\mathcal{A} = \mathbb{R}^n$ with the Borel $\sigma$-algebra.
From Theorem \ref{thm:Chang} it follows that $\{\mu^a\}_{a \in \mathcal{A}}$ exists and is essentially unique for all of the examples considered in this paper.
Thus, under mild conditions, we have established that Bayesian PNMs are well-defined, in that an essentially unique disintegration $\{\mu^a\}_{a \in \mathcal{A}}$ exists.
It is noted that a variational definition of $\mu^a$ has been posited as an alternative approach, for when the existence of a disintegration is difficult to establish \citep[p3 of][]{Trillos2017}.

\subsection{Prior Construction} \label{sec:rcp_prior}

The Gaussian distribution is popular as a prior in the PNM literature for its tractability, both in the fact that finite-dimensional distributions take a closed-form and that an explicit conditioning formula exists. 
More general priors, such as Besov priors \citep{Dashti:2012el} and Cauchy priors \citep{Sullivan:2016wp} are less easily accessed. 
In this section we summarise a common construction for these prior distributions, designed to ensure that a disintegration will exist.

Let $\set{\phi_i}_{i=0}^\infty$ denote an orthogonal Schauder basis for $\mathcal{X}$, assumed to be a separable Banach space in this section.
Then any $x \in \mathcal{X}$ can be represented through an expansion
\begin{equation}
	\label{eq:generic_random_field}
	x = x_0 + \sum_{i=0}^\infty u_i \phi_i
\end{equation}
for some fixed element $x_0 \in \mathcal{X}$ and a sequence $u \in \reals^\infty$.
Construction of measures $\mu \in \mathcal{P}_{\mathcal{X}}$ is then reduced to construction of almost-surely convergent measures on $\reals^\infty$ and studying the pushforward of such measures into $\mathcal{X}$.
In particular, this will ensure that $\mu \in \mathcal{P}_{\mathcal{X}}$ is Radon (as $\mathcal{X}$ is a separable complete metric space), a key requirement for existence of a disintegration $\{\mu^a\}_{a \in \mathcal{A}}$.

To this end it is common to split $u$ into a stochastic and deterministic component;
let $\xi \in \mathbb{R}^\infty$ represent an i.i.d sequence of random variables, and $\gamma \in \ell^p$ for some $p \in (1,\infty)$.
Then with $u_i = \gamma_i \xi_i$, for the prior distribution to be well-posed we require that almost-surely $u \in \ell^1$. 
Different choices of $(\xi, \gamma)$ give rise to different distributions on $\mathcal{X}$.
For instance, $\xi_i \sim \text{Uniform}(-1, 1)$, $\gamma \in \ell^1$ is termed a \emph{uniform} prior and $\xi_i \sim \normal(0, 1)$ gives a \emph{Gaussian} prior, where $\gamma$ determines the regularity of the covariance operator $\mathcal{C}$ \citep{Bogachev1998b}. The choice of $\xi_i \sim \cauchy(0, 1)$ gives a \emph{Cauchy} prior in the sense of \cite{Sullivan:2016wp}; here we require $\gamma \in \ell^{1} \cap \ell \log \ell$ for $\mathcal{X}$ a separable Banach space, or $\gamma \in \ell^2$ for when $\mathcal{X}$ is a Hilbert space.

A range of prior specifications will be explored in Section \ref{sec:results}, including non-Gaussian prior distributions for numerical solution of nonlinear ODEs.

\subsubsection{Dichotomy of Existing PNMs} \label{sec:state_of_art}

This section concludes with an overview of existing PNMs with respect to our definition of a Bayesian PNM.
This serves to clarify some subtle distinctions in existing literature, as well as to highlight the generality of our framework.
To maintain brevity we have summarised our findings in Table \ref{table:comparison}.

\afterpage{%
\clearpage
\begin{landscape}
\footnotesize
\begin{tabular}{|p{2cm}|p{2cm}|p{4cm}|p{7cm}|p{7cm}|} \hline
\textbf{Method} & \textbf{QoI} $Q(x)$ & \textbf{Information} $A(x)$ & \textbf{Non- (or Approximate) Bayesian PNMs} & \textbf{Bayesian PNMs} \\ \hline\hline
Integrator & $\int x(t) \nu(\mathrm{d} t)$ & $\{x(t_i)\}_{i=1}^n$ & \cite{Osborne2012,Osborne2012a,Gunter2014} & Bayesian Quadrature \citep{Larkin1974,Diaconis1988,OHagan1991} \\
 & $\int f(t) x(\mathrm{d}t)$ & $\{t_i\}_{i=1}^n$ s.t. $t_i \sim x$ & \cite{Kong2003,Tan2004,Kong2007} & \\
 & $\int x_1(t) x_2(\mathrm{d}t)$ & $\{(t_i,x_1(t_i))\}_{i=1}^n$ s.t. $t_i \sim x_2$ & & \cite{Oates2016a} \\ \hline\hline
Optimiser & $\arg\min x(t)$ & $\{x(t_i)\}_{i=1}^n$ & & Bayesian Optimisation \citep{Mockus1989} \\
 & & $\{\nabla x(t_i)\}_{i=1}^n$ & & \cite{Hennig2013} \\
 & & $\{(x(t_i), \nabla x(t_i)\}_{i=1}^n$ & & Probabilistic Line Search \citep{Mahsereci2015} \\
 & & $\{\mathbb{I}[t_{\min} < t_i]\}_{i=1}^n$ & & Probabilistic Bisection Algorithm \citep{Horstein1963} \\
 & & $\{\mathbb{I}[t_{\min} < t_i] \text{ + error} \}_{i=1}^n$ & \cite{Waeber2013} & \\ \hline\hline 
Linear Solver & $x^{-1}b$ & $\{xt_i\}_{i=1}^n$ & & Probabilistic Linear Solvers \citep{Hennig:2015hf,Bartels:2016eh} \\ \hline \hline
ODE Solver & $x$ & $\{\nabla x(t_i)\}_{i=1}^n$ & \citep{Skilling:1992fb} & \\
& & & Filtering Methods for IVPs \citep{Schober:2014wt,Chkrebtii:2013ux,Kersting2016,Teymur2016,Schober:2016uh} & \\
& & & Finite Difference Methods \citep{John2017} & \\
 & & $\nabla x$ + rounding error & \cite{Hull1966,Mosbach:2009kq} & \\ 
 & $x(t_{\text{end}})$ & $\{\nabla x(t_i)\}_{i=1}^n$ & Stochastic Euler \citep{Krebs2016} & \\ \hline\hline
PDE Solver & $x$ & $\{Dx(t_i)\}_{i=1}^n$ & \cite{Chkrebtii:2013ux,Raissi:2017} & Probabilistic Meshless Methods \citep{Owhadi2015,Owhadi2015a,Cockayne:2016ts,Raissi2016} \\
 & & $Dx$ + discretisation error & \cite{Conrad:2016gv} & \\ \hline
\end{tabular}
\captionof{table}{Comparison of several existing Probabilistic Numerical Methods (PNMs).}
\label{table:comparison}
\end{landscape}
\clearpage
}

\section{Decision-Theoretic Treatment} \label{sec:decision}

Next we assess the performance of PNMs from a decision-theoretic perspective \citep{Berger1985} and explore connections to average-case analysis of classical numerical methods \citep{Ritter2000}. Note that the treatment here is agnostic to whether the PNM in question is Bayesian, and also encompasses classical numerical methods.
Throughout, the existence of a disintegration $\{\mu^a\}_{a \in \mathcal{A}}$ will be assumed.

\subsection{Loss and Risk} \label{subsec:lossrisk}
Consider a generic loss function $L \colon \mathcal{Q} \times \mathcal{Q} \rightarrow \mathbb{R}$
where $L(q^\dagger, q)$ describes the loss incurred when the true QoI $q^\dagger = Q(x)$ is estimated with $q \in \mathcal{Q}$. 
Integrability of $L$ is assumed.

The belief update operator $B$ returns a distribution over $\mathcal{Q}$ which can be cast as a randomised decision rule for estimation of $q^\dagger$.
For randomised decision rules, the \emph{risk function} $r \colon \mathcal{Q} \times \mathcal{P}_{\mathcal{Q}} \rightarrow \mathbb{R}$ is defined as
\begin{equation*}
	r(q^\dagger , \nu) = \int L(q^\dagger, q) \nu(\wrt q) \;.
\end{equation*}
The \emph{average risk} of the PNM $M=(A,B)$ with respect to $\mu \in \mathcal{P}_{\mathcal{X}}$ is defined as
\begin{equation} \label{eq:bayes_risk}
	R(\mu, M) = \int r (Q(x) , B(\mu, A(x))) \mu(\wrt x)  .
\end{equation}
Here a state $x \sim \mu$ is drawn at random and the risk of the PNM output $B(\mu,A(x))$ is computed.
We follow the convention of terming $R(\mu, M)$ the \emph{Bayes risk} of the PNM, though the usual objection that a frequentist expectation enters into the definition of the Bayes risk could be raised.

Next, we consider a sequence $A^{(n)}$ of information operators indexed such that $A^{(n)}(x)$ is $n$-dimensional (i.e. $n$ pieces of information are provided about $x$).

\begin{definition}[Contraction]
	A sequence $M^{(n)} = (A^{(n)} , B^{(n)})$ of PNMs is said to \emph{contract} at a \emph{rate} $r_n$ under a belief distribution $\mu$ if $R(\mu,M^{(n)}) = O(r_n)$.
\end{definition}

This definition allows for comparison of classical and probabilistic numerical methods \citep{Kadane:1983ww,Diaconis1988}.
In each case an important goal is to determine methods $M^{(n)}$ that contract as quickly as possible for a given distribution $\mu$ that defines the Bayes risk.
This is the approach taken in average-case analysis \citep[ACA;][]{Ritter2000} and will be discussed in Section~\ref{sec:ACA}.
For Examples \ref{ex:PI} and \ref{ex:PMM} of Bayesian PNMs, \cite{Briol2016} and \cite{Cockayne:2016ts} established rates of contraction for particular prior distributions $\mu$; we refer the reader to those papers for details.

\subsection{Bayes Decision Rules}

A (possibly randomised) decision rule is said to be a \emph{Bayes rule} if it achieves the minimum Bayes risk among all decision rules.
In the context of (not necessarily Bayesian) PNMs, let $M = (A, B)$ and let 
\begin{equation*}
	\mathfrak{B}(A) = \set{B : R(\mu, (A, B)) = \inf_{B'} R(\mu, (A, B'))} .
\end{equation*}
That is, for fixed $A$, $\mathfrak{B}(A)$ is the set of all belief update operators that achieve minimum Bayes risk.

This raises the natural question of which belief update operators yield Bayes rules. Although the definition of a Bayes rule applies generically to both probabilistic and deterministic numerical methods, it can be shown\footnote{The proof is included in the Electronic Supplement.} that if $\mathfrak{B}(A)$ is non-empty, then there exists a $B \in \mathfrak{B}(A)$ which takes the form of a classical numerical method, as expressed in Eq.~\eqref{eq:classic_NM}.
Thus in general, \emph{Bayesian PNMs do not constitute Bayes rules}, as the extra uncertainty inflates the Bayes risk, so that such methods are not optimal.

Nonetheless, there is a natural connection between Bayesian PNMs and Bayes rules, as exposed in \cite{Kadane:1983ww}:

\begin{theorem} \label{thm:bayes_mean}
Let $M=(A, B)$ be a Bayesian probabilistic numerical method for the QoI $Q$.
Let $(\mathcal{Q},\langle \cdot , \cdot \rangle_{\mathcal{Q}})$ be an inner-product space and let the loss function $L$ have the form $L(q^\dagger, q) = \|q^\dagger - q\|_\mathcal{Q}^2$, where $\|\cdot\|_{\mathcal{Q}}$ is the norm induced by the inner product.
Then the decision rule that returns the mean of the distribution $B(\mu, a)$ is a Bayes rule for estimation of $q^\dagger$.
\end{theorem}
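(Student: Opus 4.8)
The plan is to use the disintegration $\{\mu^a\}_{a\in\mathcal{A}}$ to turn the minimisation of the Bayes risk over belief update operators into a pointwise-in-$a$ minimisation over output distributions, and then to solve that inner problem by the elementary fact that, under squared loss, the mean minimises mean-squared error.

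First I would fix an arbitrary belief update operator $B'$ and write the Bayes risk of $M'=(A,B')$ using the Conditioning clause of the disintegration. The integrand $x \mapsto r\big(Q(x),B'(\mu,A(x))\big)$ is non-negative (since $L=\norm{\cdot}_\mathcal{Q}^2\ge 0$) and measurable, so
\[
R(\mu,M') = \int_\mathcal{A}\Big(\int_\mathcal{X} r\big(Q(x),B'(\mu,A(x))\big)\,\mu^a(\wrt x)\Big)\,A_\#\mu(\wrt a).
\]
By the Concentration clause, $\mu^a$ is supported on $\mathcal{X}^a=\set{x:A(x)=a}$ for $A_\#\mu$-almost all $a$, so inside the inner integral we may replace $B'(\mu,A(x))$ by $B'(\mu,a)=:\nu_a$. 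Expanding $r$ gives the inner integral as $I(a,\nu_a):=\int_\mathcal{X}\int_\mathcal{Q}\norm{Q(x)-q}_\mathcal{Q}^2\,\nu_a(\wrt q)\,\mu^a(\wrt x)$.

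Next I would carry out a bias--variance decomposition for fixed $a$. Let $\bar q_a:=\int_\mathcal{X} Q(x)\,\mu^a(\wrt x)$ be the mean of $Q_\#\mu^a$ (a Pettis/Bochner integral, well-defined for $A_\#\mu$-a.e.\ $a$ by integrability of $L$ together with the Conditioning identity, and measurable in $a$ by the Measurability clause). Writing $\norm{Q(x)-q}_\mathcal{Q}^2 = \norm{Q(x)-\bar q_a}_\mathcal{Q}^2 + 2\inner{Q(x)-\bar q_a,\ \bar q_a-q}_\mathcal{Q} + \norm{\bar q_a-q}_\mathcal{Q}^2$ and integrating against $\mu^a$, the cross term vanishes by definition of $\bar q_a$, so
\[
I(a,\nu_a) = \int_\mathcal{X}\norm{Q(x)-\bar q_a}_\mathcal{Q}^2\,\mu^a(\wrt x) + \int_\mathcal{Q}\norm{\bar q_a-q}_\mathcal{Q}^2\,\nu_a(\wrt q) \ \ge\ \int_\mathcal{X}\norm{Q(x)-\bar q_a}_\mathcal{Q}^2\,\mu^a(\wrt x)=:V(a),
\]
with equality if and only if $\nu_a=\delta(\bar q_a)$. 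Hence $\inf_{\nu_a}I(a,\nu_a)=V(a)$, attained by the Dirac at the mean.

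Finally, integrating this bound over $a$ shows $R(\mu,M')\ge\int_\mathcal{A} V(a)\,A_\#\mu(\wrt a)$ for every PNM $M'=(A,B')$, with equality for the deterministic rule $B^\star(\mu,a)=\delta(\bar q_a)$. Since $M$ is Bayesian, $B(\mu,a)=Q_\#\mu^a$ for $A_\#\mu$-almost all $a$, so $\bar q_a$ is exactly the mean of $B(\mu,a)$; therefore the rule that returns the mean of $B(\mu,a)$ coincides $A_\#\mu$-a.e.\ with $B^\star$ and attains $\inf_{B'}R(\mu,(A,B'))$, i.e.\ it is a Bayes rule. I expect the main obstacle to be purely measure-theoretic: verifying that $x\mapsto r(Q(x),B'(\mu,A(x)))$ is measurable and non-negative so the Conditioning clause applies, that the substitution $B'(\mu,A(x))\to B'(\mu,a)$ is valid on an $A_\#\mu$-conull set via Concentration, and that $\bar q_a$ exists and is measurable $A_\#\mu$-a.e.; the inner minimisation is then routine.
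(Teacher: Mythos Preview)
Your proposal is correct. The paper does not actually supply a proof of this theorem: it merely records in a footnote that this is ``the fact that the Bayes act is the posterior mean under squared-error loss'' and cites \cite{Berger1985}. Your argument is precisely the standard textbook proof, carefully adapted to the disintegration framework of Section~\ref{sec:conditioning}: you use the Conditioning and Concentration clauses to reduce the Bayes-risk minimisation to a pointwise-in-$a$ problem, then the bias--variance decomposition to identify $\bar q_a$ as the unique minimiser. This is exactly the reasoning the paper implicitly relies on (and indeed the same disintegration identity, your first display, reappears as Eq.~\eqref{eq:prod_eqn} in the proof of Theorem~\ref{thm:optimal_information}, which \emph{uses} Theorem~\ref{thm:bayes_mean} as an input).
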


This well-known fact from Bayesian decision theory\footnote{This is the fact that the Bayes act is the posterior mean under squared-error loss \citep{Berger1985}.} is interesting in light of recent research in constructing PNMs whose mean functions correspond to classical numerical methods \citep{Schober:2014wt,Hennig:2015hf,Saerkkae2015,Teymur2016,Schober:2016uh}.
Theorem \ref{thm:bayes_mean} explains the results in Examples \ref{ex:PI} and \ref{ex:PMM}, in which both instances of Bayesian PNMs were demonstrated to be centred on an established classical method. 
 
\subsection{Optimal Information}

The previous section considered selection of the belief update operator $B$, but not of the information operator $A$.
The choice of $A$ determines the Bayes risk for a PNM, which leads to a problem of experimental design to minimise that risk.

The theoretical study of optimal information is the focus of the information complexity literature \citep{Traub1988,Novak2010}, while other fields such as quasi-Monte Carlo \citep[QMC, ][]{Dick2010} attempt to develop asymptotically optimal information operators for specific numerical tasks, such as the choice of evaluation points for numerical approximation of integrals in the case of QMC.
Here we characterise optimal information for Bayesian PNMs.

Consider the choice of $A$ from a fixed subset $\Lambda$ of the set of all possible information operators.
To build intuition, for the task of numerical integration, $\Lambda$ could represent all possible choices of locations $\{t_i\}_{i=1}^n$ where the integrand is evaluated.
For Bayesian PNM, one can ask for optimal information:
\begin{equation*}
	A_{\mu} \in \arginf_{A \in \Lambda} \left\{ R(\mu,M) \; \text{s.t.} \; M = (A,B), \; B = Q_{\#} \mu^A \right\}
\end{equation*}
where we have made explicit the fact that the optimal information depends on the choice of prior $\mu$.
Next we characterise $A_\mu$, while an explicit example of optimal information for a Bayesian PNM is detailed in Example \ref{ex:optimalinfo}.

\subsection{Connection to Average Case Analysis} \label{sec:ACA}

The decision theoretic framework in Section~\ref{subsec:lossrisk} is closely related to average-case analysis (ACA) of classical numerical methods \citep{Ritter2000}.
In ACA the performance of a classical numerical method $b \colon \mathcal{A} \rightarrow \mathcal{Q}$ is studied in terms of the Bayes risk $R(\mu,M)$ given in Eq.~\eqref{eq:bayes_risk}, for the PNM $M = (A,B)$ with belief operator $B(\mu, a) = \delta \circ b(a)$ as in Eq.~\eqref{eq:classic_NM}.
ACA is concerned with the study of optimal information:
\begin{equation*}
A_{\mu}^{*} \in \arginf_{A \in \Lambda} \left\{ \inf_b R(\mu,M) \; \text{s.t.} \; M = (A,B), \; B = \delta \circ b \right\} .
\end{equation*}
In general there is no reason to expect $A_\mu$ and $A_\mu^*$ to coincide, since Bayesian PNM are not Bayes rules\footnote{The distribution $Q_{\#}\mu^a$ will in general not be supported on the set of Bayes acts.}.
Indeed, an explicit example where $A_\mu \neq A_\mu^*$ is presented in Appendix~\ref{sec:optimal_counterexample}.
However, we can establish sufficient conditions under which optimal information for a Bayesian PNM is the same as optimal information for ACA:

\begin{theorem} \label{thm:optimal_information}
Let $(\mathcal{Q},\langle \cdot , \cdot \rangle_{\mathcal{Q}})$ be an inner product space and the loss function $L$ have the form $L(q^\dagger,q) = \|q^\dagger-q\|_{\mathcal{Q}}^2$ where $\|\cdot\|_{\mathcal{Q}}$ is the norm induced by the inner product.
Then the optimal information $A_\mu$ for a Bayesian PNM and $A_\mu^*$ for ACA are identical.
\end{theorem}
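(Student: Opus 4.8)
The plan is to show that, under squared-error loss, the Bayes risk of a Bayesian PNM with information operator $A$ equals \emph{exactly twice} the optimal ACA risk $\inf_{b} R(\mu,(A,\delta\circ b))$ associated with the same $A$. Since the factor of two is a global constant not depending on $A$, minimising one quantity over $\Lambda$ is equivalent to minimising the other, so the two \arginf\ sets coincide and $A_\mu = A_\mu^{*}$.

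First I would use the disintegration $\{\mu^a\}_{a\in\mathcal{A}}$ to rewrite the Bayes risk \eqref{eq:bayes_risk} as an iterated integral: by the Conditioning property, drawing $x\sim\mu$ amounts to drawing $a\sim A_\#\mu$ and then $x\sim\mu^a$, with the PNM output $B(\mu,a)=Q_\#\mu^a$ held fixed given $a$. Writing $\bar q(a)$ for the mean of $Q_\#\mu^a$ and $V(a) \defeq \int_{\mathcal{Q}} \|q-\bar q(a)\|_{\mathcal{Q}}^2\,(Q_\#\mu^a)(\wrt q)$, the expansion of the squared norm into "bias" and "variance" terms gives, for the Bayesian update,
\[
\int_{\mathcal{X}} r\bigl(Q(x),Q_\#\mu^a\bigr)\,\mu^a(\wrt x)
= \int_{\mathcal{Q}}\!\int_{\mathcal{Q}} \|q^\dagger-q\|_{\mathcal{Q}}^2\,(Q_\#\mu^a)(\wrt q^\dagger)\,(Q_\#\mu^a)(\wrt q)
= 2V(a),
\]
because $q^\dagger = Q(x)$ with $x\sim\mu^a$ is itself distributed as $Q_\#\mu^a$, and the two draws are independent with common mean $\bar q(a)$, so the cross term vanishes. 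Integrating over $a\sim A_\#\mu$ then yields $R(\mu,(A,Q_\#\mu^{\cdot})) = 2\int_{\mathcal{A}} V(a)\,A_\#\mu(\wrt a)$.

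Next I would treat the ACA side. For a deterministic rule $B=\delta\circ b$ the same disintegration gives $R(\mu,(A,\delta\circ b)) = \int_{\mathcal{A}}\bigl(\int_{\mathcal{Q}} \|q^\dagger-b(a)\|_{\mathcal{Q}}^2\,(Q_\#\mu^a)(\wrt q^\dagger)\bigr)A_\#\mu(\wrt a)$, and by Theorem~\ref{thm:bayes_mean} (the posterior mean minimises squared-error risk) the inner integrand is minimised pointwise in $a$ by $b(a)=\bar q(a)$, attaining the value $V(a)$. Two routine checks are needed here: that $a\mapsto\bar q(a)$ is measurable, hence an admissible choice of $b$ — this follows from the Measurability property of the disintegration applied to the maps $x\mapsto\ell(Q(x))$ for $\ell$ ranging over a countable separating family of continuous linear functionals on $\mathcal{Q}$ — and that $V\in L^1(A_\#\mu)$, which holds for exactly those $A$ giving finite Bayes risk, i.e.\ the $A$ over which the \arginf\ is meaningful. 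Hence $\inf_b R(\mu,(A,\delta\circ b)) = \int_{\mathcal{A}} V(a)\,A_\#\mu(\wrt a)$, so $R(\mu,(A,Q_\#\mu^{\cdot})) = 2\inf_b R(\mu,(A,\delta\circ b))$; applying $\arginf_{A\in\Lambda}$ to both sides gives $A_\mu = A_\mu^{*}$.

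The main obstacle I anticipate is not the algebra but the measure-theoretic bookkeeping: rigorously justifying the disintegration-based interchange of the $\mathcal{X}$- and $\mathcal{A}$-integrals, and ensuring $Q_\#\mu^a$ has a well-defined mean and finite second moment for $A_\#\mu$-almost every $a$. This is precisely where the standing integrability assumption on $L$ and the restriction to information operators yielding finite Bayes risk enter; once those are in place, the bias--variance identity and the appeal to Theorem~\ref{thm:bayes_mean} are immediate.
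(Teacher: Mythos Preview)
Your proposal is correct and follows essentially the same route as the paper: both proofs use the disintegration to rewrite the Bayesian PNM risk as $\int \mathbb{E}\|Q(X)-Q(\tilde X)\|_{\mathcal Q}^2\,A_\#\mu(\rd a)$ with $X,\tilde X$ i.i.d.\ from $\mu^a$, expand the squared norm around the posterior mean $\bar q(a)=b(a)$ to obtain $2V(a)$, and compare with the ACA risk $\int V(a)\,A_\#\mu(\rd a)$ to get the factor-of-two relation $R(\mu,M_{\text{BPNM}})=2R(\mu,M_{\text{BR}})$. The paper invokes Theorem~\ref{thm:bayes_mean} at the same point you do and handles the Fubini/measurability bookkeeping by a blanket integrability assumption rather than spelling out the checks you sketch, so your version is, if anything, slightly more explicit on those points.
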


It is emphasised that this result is \emph{not} a trivial consequence of the correspondance between Bayes rules and worst case optimal methods, as exposed in \cite{Kadane:1983ww}. 
To the best of our knowledge, information-based complexity research has studied $A_\mu^*$ but not $A_\mu$.

Theorem \ref{thm:optimal_information} establishes that, for the squared norm loss, we can extract results on optimal average case information from the ACA literature and use them to construct optimal Bayesian PNMs.
An example is provided next.

\begin{example}[Optimal Information for Probabilistic Integration] \label{ex:optimalinfo}

To illustrate optimal information for Bayesian PNMs, we revisit the first worked example of ACA, due to \cite{Suldin1959,Suldin1960}.
Set $\mathcal{X} = \{x \in C(0,1) : x(0) = 0\}$ and take the belief distribution $\mu$ to be induced from the Weiner process on $\mathcal{X}$, i.e.\ a Gaussian process with mean $0$ and covariance function $k(t,t') = \min(t,t')$.
Our QoI is $Q(x) = \int_0^1 x(t) \rd t$ and the loss function is $L(q,q') = (q-q')^2$.

Consider standard information $A(x) = (x(t_1),\dots,x(t_n))$ for $n$ fixed knots $0 \leq t_1 < \dots < t_n \leq 1$.
Our aim is to determine knots $t_i$ that represent optimal information for a Bayesian PNM with respect to $\mu$ and $L$. 

Motivated by Theorem \ref{thm:optimal_information} we first solve the optimal information problem for ACA and then derive the associated PNM.
It will be sufficient to restrict attention to linear methods $b(a) = \sum_{i=1}^n w_i x(t_i)$ with $w_i \in \mathbb{R}$.
This allows a closed-form expression for the average error:
\begin{equation}
\label{eq:WCE1}
R(\mu,(A , \delta \circ b)) = \frac{1}{3} - 2\sum_{i=1}^n w_i \left(t_i - \frac{1}{2} t_i^2 \right) + \sum_{i,j=1}^n w_i w_j \min(t_i,t_j) .
\end{equation}
Standard calculus can be used to minimise Eq.~\eqref{eq:WCE1} over both the weights $\{w_i\}_{i=1}^n$ and the locations $\{t_i\}_{i=1}^n$; the full calculation can be found in Chapter~2, Section~3.3 of \cite{Ritter2000}.
The result is an ACA optimal method
\begin{equation*}
b(A(x)) = \frac{2}{2n+1} \sum_{i=1}^n x( t_i^* ), \quad t_i^* = \frac{2i}{2n+1}
\end{equation*}
which is recognised as the trapezium rule with equally spaced knots.
The associated contraction rate $r_n$ is $n^{-1}$ \citep{Lee1986}.

From Theorem \ref{thm:optimal_information} we have that ACA optimal information is also optimal information for the Bayesian PNM.
Thus the optimal Bayesian PNM $M = (A,B)$ for the belief distribution $\mu$ is uniquely determined:
\begin{align*}
	A(x) & = \begin{bmatrix} x( t_1^*) \\ \vdots \\ x( t_n^* ) \end{bmatrix} ,
	& B(\mu, a) & = \mathrm{N}\left( \frac{2}{2n+1} \sum_{i=1}^n a_i \; , \; \frac{1}{3 (2n+1)^2} \right) .
\end{align*}
Note how the PNM is centred on the ACA optimal method.
However the PNM itself is not a Bayes rule; it in fact carries twice the Bayes risk as the ACA method.

This illustration can be generalised.
It is known that for $\mu$ induced from the Weiner process on $\partial^s x$, $Q$ a linear functional and $\phi$ a loss function that is convex and symmetric, equi-spaced evaluation points are essentially optimal information, the Bayes rule is the natural spline of degree $2s + 1$, and the contraction rate $r_n$ is essentially $n^{-(s+1)}$; see \cite{Lee1986} for a complete treatment.

\end{example}

This completes our performance assessment for PNMs; next we turn to computational matters.

\section{Numerical Disintegration} \label{sec:sampling}

In this section we discuss algorithms to access the output from a Bayesian PNM.
The approach considered in this paper is to form an explicit approximation to $\mu^a$ that can be sampled.
The construction of a sampling scheme can exploit sophisticated Monte Carlo methods and allow probing $B(\mu,a)$ at a computational cost that is de-coupled from the potentially substantial cost of obtaining the information $a$ itself.

The construction of an approximation to $\mu^a$ is non-trivial on a technical level.
As shown in Section~\ref{sec:conditioning}, under weak conditions on the space $\mathcal{X}$ and the operator $A$, the disintegration $\mu^a$ is well-defined for $A_{\#}\mu$-almost all $a \in \mathcal{A}$.
The approach considered in this work is based on sampling from an approximate distribution $\mu_\delta^a$ which converges in an appropriate sense to $\mu^a$ in the $\delta \downarrow 0$ limit.
This follows in a similar spirit to \cite{Ackerman2017}.


\subsection{Sequential Approximation of a Disintegration} \label{sec:rcp_approximation}

Suppose that $\mathcal{A}$ is an open subset of $\mathbb{R}^n$ and that the distribution $A_\# \mu \in \mathcal{P}_{\mathcal{A}}$, admits a continuous and positive density $p_A$ with respect to Lebesgue measure on $\mathcal{A}$.
Further endow $\mathcal{A}$ with the structure of a Hilbert space, with norm $\|\cdot\|_{\mathcal{A}}$. 

Let $\phi \colon \mathbb{R}_+ \rightarrow \mathbb{R}_+$ denote a decreasing function, to be specified, that is continuous at $0$, with $\phi(0) = 1$ and $\lim_{r \rightarrow \infty} \phi(r) = 0$.
Consider
\begin{equation*}
	\mu_\delta^a(\mathrm{d}x) \defeq \frac{1}{Z_\delta^a} \phi\left(\frac{\|A(x) - a\|_{\mathcal{A}}}{\delta} \right) \mu(\mathrm{d}x)
\end{equation*}
where the normalisation constant 
\begin{equation*}
	Z_\delta^a \defeq \int \phi\left(\frac{\|\tilde{a} - a\|_{\mathcal{A}}}{\delta} \right) p_A(\mathrm{d} \tilde{a})
\end{equation*}
is non-zero since $p_A$ is bounded away from 0 on a neighbourhood of $a \in \mathcal{A}$ and $\phi$ is bounded away from 0 on a sufficiently small interval $[0,\gamma]$.
Our aim is to approximate $\mu^a$ with $\mu^a_\delta$ for small bandwidth parameter $\delta$.
The construction, which can be considered a mathematical generalisation of approximate Bayesian computation \citep{DelMoral2012}, ensures that $\mu^a_\delta \ll \mu$.
The role of $\phi$ is to admit states $x \in \mathcal{X}$ for which $A(x)$ is close to $a$ but not necessarily equal.
It is assumed to be sufficiently regular:

\begin{assumption} \label{varphi_assumption}
There exists $\alpha > 0$ such that $C_\phi^\alpha \defeq \int r^{\alpha + n - 1} \phi(r) \mathrm{d}r < \infty$.
\end{assumption}

To discuss the convergence of $\mu_\delta^a$ to $\mu^a$ we must first select a metric on $\mathcal{P}_{\mathcal{X}}$.
Let $\mathcal{F}$ be a normed space of (measurable) functions $f \colon \mathcal{X} \to \reals$ with norm $\norm{\cdot}_{\mathcal{F}}$.
For measures $\nu,\nu' \in \mathcal{P}_\mathcal{X}$, define
\begin{equation*}
	d_{\mathcal{F}}(\nu,\nu') = \sup_{\|f\|_{\mathcal{F}} \leq 1} |\nu(f) - \nu'(f)| .
\end{equation*}
This formulation encompasses many common probability metrics such as the total variation distance and Wasserstein distance \citep{Mueller1997}.
However, not all spaces of functions $\mathcal{F}$ lead to useful theory. In particular the total variation distance between $\mu^a$ and $\mu^{a'}$ for $a \neq a'$ will be one in general. 
Furthermore depending on the choice of $\mathcal{F}$, $d_\mathcal{F}$ may be merely a pseudometric\footnote{For a pseudometric, $d_\mathcal{F}(x, y) = 0 \implies x = y$ need not hold.}.
Sufficient conditions for weak convergence with respect to $\mathcal{F}$ are now established:

\begin{assumption} \label{assumption:lipschitz_rcp}
The map $a \mapsto \mu^a$ is almost everywhere $\alpha$-H\"{o}lder continuous in $d_{\mathcal{F}}$, i.e.
	\begin{equation*}
		d_{\mathcal{F}}(\mu^a , \mu^{a'}) \leq C_\mu^\alpha \|a - a'\|_{\mathcal{A}}^\alpha
	\end{equation*}
	for some constant $C_\mu^\alpha > 0$ and for $A_{\#}\mu$ almost all $a, a' \in \mathcal{A}$.
\end{assumption}
Sufficient conditions for Assumption \ref{assumption:lipschitz_rcp} are discussed in \cite{Ackerman2017}, but are somewhat technical.

\begin{theorem} \label{thm:rcp_contraction}
Let $\bar{C}_\phi^\alpha \defeq C_\phi^\alpha / C_\phi^0$.
Then, for $\delta>0$ sufficiently small,
	\begin{equation*}
		d_{\mathcal{F}}(\mu^a_\delta , \mu^a ) \leq C_\mu^\alpha (1 + \bar{C}_\phi^\alpha) \delta^\alpha 
	\end{equation*}
	for $A_{\#}\mu$ almost all $a \in \mathcal{A}$.
\end{theorem}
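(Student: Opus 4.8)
The plan is to push everything down to the finite-dimensional space $\mathcal{A}$: I will show that $\mu^a_\delta$ is literally a \emph{mixture} of the disintegration elements $\mu^{\tilde a}$ with a mixing distribution that concentrates near $a$, and then control the error using the H\"older regularity of $a\mapsto\mu^a$ granted by Assumption~\ref{assumption:lipschitz_rcp}, together with the integrability of $\phi$ from Assumption~\ref{varphi_assumption}. Throughout I work with a fixed ``good'' $a$, i.e.\ one for which Assumption~\ref{assumption:lipschitz_rcp} holds; since the conclusion is only claimed for $A_\#\mu$-almost all $a$, this is harmless.

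\textbf{Step 1: the mixture identity.} First I would push $\mu^a_\delta$ forward through $A$. Unwinding definitions, $A_\#\mu^a_\delta$ is the probability measure on $\mathcal{A}$ with Lebesgue density $\tilde a\mapsto \phi(\norm{\tilde a-a}_{\mathcal{A}}/\delta)\,p_A(\tilde a)/Z^a_\delta$, which is effectively supported in a $\delta$-neighbourhood of $a$. For any bounded measurable $g\colon\mathcal{X}\to\reals$, apply the Conditioning and Concentration properties of the disintegration to the integrand $x\mapsto \phi(\norm{A(x)-a}_{\mathcal{A}}/\delta)\,g(x)$: on $\mathcal{X}^{\tilde a}$ the factor $\phi(\norm{A(x)-a}_{\mathcal{A}}/\delta)$ is the constant $\phi(\norm{\tilde a-a}_{\mathcal{A}}/\delta)$, and $\mu^{\tilde a}$ is concentrated there, so the disintegration identity collapses to
\begin{equation*}
	\mu^a_\delta(g) \;=\; \int_{\mathcal{A}} \mu^{\tilde a}(g)\, A_\#\mu^a_\delta(\wrt\tilde a).
\end{equation*}

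\textbf{Step 2: from mixture to metric.} Since $A_\#\mu^a_\delta$ is a probability measure, $\mu^a = \int_{\mathcal{A}}\mu^a\,A_\#\mu^a_\delta(\wrt\tilde a)$ trivially, so for every $g$ with $\norm{g}_{\mathcal{F}}\le 1$ the triangle inequality and the definition of $d_{\mathcal{F}}$ give $|\mu^a_\delta(g)-\mu^a(g)| \le \int_{\mathcal{A}} d_{\mathcal{F}}(\mu^{\tilde a},\mu^a)\,A_\#\mu^a_\delta(\wrt\tilde a)$, and taking the supremum over such $g$,
\begin{equation*}
	d_{\mathcal{F}}(\mu^a_\delta,\mu^a)\;\le\;\int_{\mathcal{A}} d_{\mathcal{F}}(\mu^{\tilde a},\mu^a)\, A_\#\mu^a_\delta(\wrt\tilde a).
\end{equation*}
Because $A_\#\mu^a_\delta\ll A_\#\mu$, Assumption~\ref{assumption:lipschitz_rcp} applies $A_\#\mu^a_\delta$-almost everywhere, so
\begin{equation*}
	d_{\mathcal{F}}(\mu^a_\delta,\mu^a)\;\le\;C_\mu^\alpha\int_{\mathcal{A}}\norm{\tilde a-a}_{\mathcal{A}}^\alpha\, A_\#\mu^a_\delta(\wrt\tilde a).
\end{equation*}

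\textbf{Step 3: the $\alpha$-th moment of the mixing measure.} It remains to show the last integral is at most $(1+\bar C_\phi^\alpha)\,\delta^\alpha$ once $\delta$ is small. I would split $\mathcal{A}$ into $\{\norm{\tilde a-a}_{\mathcal{A}}\le\delta\}$ and its complement. On the first set $\norm{\tilde a-a}_{\mathcal{A}}^\alpha\le\delta^\alpha$ and the $A_\#\mu^a_\delta$-mass is at most $1$, contributing at most $\delta^\alpha$ --- this is the ``$1$''. On the complement, substitute $u=(\tilde a-a)/\delta$ and pass to polar coordinates to rewrite the contribution as
\begin{equation*}
	\delta^\alpha\cdot\frac{\int_{\norm{u}>1}\norm{u}^\alpha\phi(\norm{u})\,p_A(a+\delta u)\,\wrt u}{\int_{\mathbb{R}^n}\phi(\norm{u})\,p_A(a+\delta u)\,\wrt u}.
\end{equation*}
Since $p_A$ is continuous and strictly positive at $a$, a dominated-convergence argument (the dominating function $\norm{u}^\alpha\phi(\norm{u})$ is integrable by Assumption~\ref{varphi_assumption}, which also forces $C_\phi^0\in(0,\infty)$) sends this ratio, as $\delta\downarrow 0$, to $\int_{\mathbb{R}^n}\norm{u}^\alpha\phi(\norm{u})\wrt u\,/\,\int_{\mathbb{R}^n}\phi(\norm{u})\wrt u = C_\phi^\alpha/C_\phi^0 = \bar C_\phi^\alpha$ (the surface area of the unit sphere cancels), with strict inequality already coming from the restriction $\norm{u}>1$; hence this contribution is $\le \bar C_\phi^\alpha\,\delta^\alpha$ for $\delta$ small. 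Adding the two pieces and combining with Step~2 yields $d_{\mathcal{F}}(\mu^a_\delta,\mu^a)\le C_\mu^\alpha(1+\bar C_\phi^\alpha)\delta^\alpha$.

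\textbf{Main obstacle.} The delicate point is Step~3's treatment of the spatial variation of $p_A$ in the tail of $\phi$: even for small $\delta$ the argument $a+\delta u$ ranges over all of $\mathbb{R}^n$, so collapsing the density ratio to the constant $\bar C_\phi^\alpha$ is not a pointwise statement but relies on the integrability furnished by Assumption~\ref{varphi_assumption} to make a dominated-convergence / uniform-integrability argument go through; this is precisely where the qualifier ``for $\delta>0$ sufficiently small'' is used, and where some care is needed if $p_A$ is only continuous and positive rather than, say, bounded. Everything else --- the mixture identity, the elementary inequality of Step~2, and the changes of variables --- is routine.
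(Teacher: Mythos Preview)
Your proposal is correct and follows the same line as the paper's proof: the mixture identity (your Step~1), the H\"older bound (your Step~2), and the asymptotic analysis of the $\alpha$-th moment of $A_\#\mu_\delta^a$ (your Step~3) all match the paper. The only cosmetic difference is in Step~3: the paper does not split the integral into $\{\norm{\tilde a-a}\le\delta\}$ and its complement, but instead observes that the \emph{full} $\alpha$-th moment equals $\delta^\alpha$ times a ratio that converges to $\bar C_\phi^\alpha$ as $\delta\downarrow 0$, and then takes the ``$1$'' in $(1+\bar C_\phi^\alpha)$ as an arbitrary slack constant absorbing the error in that convergence, rather than as a separate bound on the near-$a$ piece as you do. Your identification of the dominated-convergence subtlety when $p_A$ is merely continuous and positive (not bounded) is apt; the paper handles this at a comparable level of informality via a local expansion of the radial density $p_{R,\delta}$.
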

This result justifies the approximation of $\mu^a$ by $\mu_\delta^a$ when the QoI can be well-approximated by integrals with respect to $\mathcal{F}$.
This result is stronger than that of earlier work, such as \cite{Pfanzagl:1979hd}, in that it holds for infinite-dimensional $\mathcal{X}$, though it also relies upon the stronger H\"{o}lder continuity assumption.

The specific form for $\phi$ is not fundamental, but can impact upon rate constants.
For the choice $\phi(r) = 1[r < 1]$ we have $\bar{C}_\phi^\alpha = \frac{n}{\alpha + n}$, which can be bounded independent of the dimension $n$ of $\mathcal{A}$.
On the other hand, for $\phi(r) = \exp(-\frac{1}{2}r^2)$ it can be shown that, for $\alpha \in \mathbb{N}$,
\begin{equation}
	\bar{C}_\phi^\alpha = \frac{(\alpha + n - 1)!!}{(n-1)!!}
\end{equation}
so that the constant $\bar{C}_\phi^\alpha$ might not be bounded.
In general this necessitates effective Monte Carlo methods that are able to sample from the regime where $\delta$ can be extremely small, in order to control the overall approximation error.

\subsection{Computation for Series Priors} \label{sec: non-Gaussian comp}

The series representation of $\mu$ in Eq.~\eqref{eq:generic_random_field} of Section~\ref{sec:rcp_prior} is infinite-dimensional and thus cannot, in general, be instantiated.
To this end, define $\mathcal{X}_N = x_0 + \text{span}\{\phi_0 , \dots , \phi_N\}$ and define the associated projection operator $P_N \colon \mathcal{X} \rightarrow \mathcal{X}_N$ as
\begin{equation*}
	P_N \left( x_0 + \sum_{i=0}^\infty u_i \phi_i \right) \defeq x_0 + \sum_{i=0}^N u_i \phi_i .
\end{equation*}
A natural approach is to compute with the modified information operator $A \circ P_N$ instead of $A$.
This has the effect of updating the distribution of the first $N+1$ coefficients and leaving the tail unchanged, to produce an output $\mu_{\delta,N}^a$.
Then computation performed in the Bayesian \emph{update} step is finite-dimensional, whilst instantiation of the posterior itself remains infinite-dimensional.
A ``likelihood-informed'' choice of basis $\set{\phi_i}$ in such problems was considered in \cite{Cui:2016ev}.

Inspired by this approach, we next considered convergence of the output $\mu_{\delta,N}^a$ to $\mu_\delta^a$ in the limit $N \rightarrow \infty$.
In this section it is additionally required that $\phi$ be everywhere continuous with $\phi > 0$.
Let $\varphi = - \log \phi$, so that $\varphi$ is a continuous bijection of $\mathbb{R}_+$ to itself. 
The following are also assumed:

\begin{assumption} \label{assumption:local_Lipschitz}
For each $R > 0$, it holds that $|\varphi(r) - \varphi(r')| \leq C_R |r-r'|$ for some constant $C_R$ and all $r , r' < R$.
\end{assumption}

\begin{assumption} \label{assumption:prior_truncation}
$\|A(x) - A\circ P_N(x)\|_{\mathcal{A}} \leq \exp(m(\|x\|_{\mathcal{X}})) \Psi(N)$ for all $x \in \mathcal{X}$, where $m$ is measurable and satisfies $\mathbb{E}_{X \sim \mu}[\exp(2 m(\|X\|_{\mathcal{X}}))] < \infty$ and $\Psi(N)$ vanishes as $N$ is increased.
\end{assumption}

\begin{assumption} \label{assumption:bounded_info}
	$\sup_{x \in \mathcal{X}} \|A(x)\|_{\mathcal{A}} < \infty$.
\end{assumption}

\begin{assumption} \label{assumption:norm_bounds}
	$\norm{f}_\infty \leq C_{\mathcal{F}} \norm{f}_{\mathcal{F}}$ for some constant $C_{\mathcal{F}}$ and all $f \in \mathcal{F}$.
\end{assumption}

Assumption \ref{assumption:local_Lipschitz} holds for the case $\varphi(r) = \frac{1}{2}r^2$ with constant $C_R = R$.
Assumption \ref{assumption:prior_truncation} is standard in the inverse problem literature; for instance it is shown to hold for certain series priors in Theorem 3.4 of \cite{Cotter2010}.
Assumption \ref{assumption:bounded_info} is, in essence, a compactness assumption, in that it is implied by compactness of the state space $\mathcal{X}$ when $A$ is linear.
In this sense it is a strong assumption; however it can be enforced in our experiments, where $\mathcal{X}$ is unbounded, through a threshold map 
\begin{equation*}
	\tilde{A}(x) \defeq \begin{cases} A(x) & \text{if $\|A(x)\|_{\mathcal{A}} \leq \lambda_{\max}$,} \\ \lambda_{\max} \frac{A(x)}{\|A(x)\|_{\mathcal{A}}} & \text{if $\|A(x)\|_{\mathcal{A}} > \lambda_{\max}$,} \end{cases}
\end{equation*}
where $\lambda_{\max}$ is a large pre-defined constant.
Assumption \ref{assumption:norm_bounds} places a restriction on the probability metric $d_{\mathcal{F}}$ in which our result is stated.

The following theorem has its proof in the Electronic Supplement:
\begin{theorem} \label{thm:truncation error}
For some constant $C_\delta$, dependent on $\delta$, it holds that $d_{\mathcal{F}}(\mu_{\delta,N}^a , \mu_\delta^a) \leq C_\delta \Psi(N)$.
\end{theorem}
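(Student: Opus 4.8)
The plan is to bound the difference between the two normalised measures $\mu_{\delta,N}^a$ and $\mu_\delta^a$ by separately controlling the difference of the unnormalised weights and the difference of the normalisation constants, a standard two-term decomposition for ratios of the form $w/Z$. Both measures are absolutely continuous with respect to $\mu$, with densities $\phi(\|A(x)-a\|_{\mathcal{A}}/\delta)/Z_\delta^a$ and $\phi(\|A\circ P_N(x)-a\|_{\mathcal{A}}/\delta)/Z_{\delta,N}^a$ respectively. First I would write, for $\|f\|_{\mathcal{F}}\le 1$,
\begin{equation*}
	|\mu_{\delta,N}^a(f) - \mu_\delta^a(f)| \le \frac{1}{Z_{\delta,N}^a}\int |f(x)|\,|w_N(x) - w(x)|\,\mu(\wrt x) + |w(f)|\,\left|\frac{1}{Z_{\delta,N}^a} - \frac{1}{Z_\delta^a}\right|,
\end{equation*}
where $w(x) = \phi(\|A(x)-a\|_{\mathcal{A}}/\delta)$ and $w_N(x) = \phi(\|A\circ P_N(x)-a\|_{\mathcal{A}}/\delta)$, and then use Assumption \ref{assumption:norm_bounds} to replace $|f(x)|$ by $C_{\mathcal{F}}$.

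The core estimate is the pointwise bound on $|w_N(x) - w(x)|$. Writing $\phi = e^{-\varphi}$ and using that $|e^{-s} - e^{-t}| \le |s-t|$ (since $\phi \le 1$, the weights lie in $[0,1]$ and the map is contractive enough), it suffices to bound $|\varphi(\|A\circ P_N(x)-a\|_{\mathcal{A}}/\delta) - \varphi(\|A(x)-a\|_{\mathcal{A}}/\delta)|$. Assumption \ref{assumption:bounded_info} guarantees $\|A(x)-a\|_{\mathcal{A}}$ and hence $\|A\circ P_N(x)-a\|_{\mathcal{A}}$ are uniformly bounded by some $R$ depending on $\delta$ (note $A\circ P_N$ inherits the bound since $P_N$ maps into $\mathcal{X}$); so Assumption \ref{assumption:local_Lipschitz} applies with constant $C_R$, giving
\begin{equation*}
	|\varphi(\|A\circ P_N(x)-a\|_{\mathcal{A}}/\delta) - \varphi(\|A(x)-a\|_{\mathcal{A}}/\delta)| \le \frac{C_R}{\delta}\bigl|\|A\circ P_N(x)-a\|_{\mathcal{A}} - \|A(x)-a\|_{\mathcal{A}}\bigr| \le \frac{C_R}{\delta}\|A(x)-A\circ P_N(x)\|_{\mathcal{A}},
\end{equation*}
by the reverse triangle inequality, and then Assumption \ref{assumption:prior_truncation} bounds the last factor by $\exp(m(\|x\|_{\mathcal{X}}))\Psi(N)$. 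Integrating against $\mu$ and applying Cauchy–Schwarz together with the integrability $\mathbb{E}_{X\sim\mu}[\exp(2m(\|X\|_{\mathcal{X}}))] < \infty$ from Assumption \ref{assumption:prior_truncation} yields $\int |w_N(x)-w(x)|\,\mu(\wrt x) \le (C_R/\delta)\,\mathbb{E}_\mu[\exp(2m(\|X\|_{\mathcal{X}}))]^{1/2}\,\Psi(N)$. The same bound, integrated over $\mathcal{X}$, controls $|Z_{\delta,N}^a - Z_\delta^a| = |w_N(\mathbf{1}) - w(\mathbf{1})|$ (after identifying $Z_\delta^a = \mu(w)$ via the disintegration/pushforward identity), and since $Z_\delta^a > 0$ by the argument given after the definition of $\mu_\delta^a$, one gets $Z_{\delta,N}^a$ bounded away from $0$ for $N$ large and hence control of the reciprocal difference. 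Collecting terms, all $\delta$-dependent quantities are absorbed into a single constant $C_\delta$, leaving the advertised bound $C_\delta\Psi(N)$.

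The main obstacle I anticipate is the lower bound on $Z_{\delta,N}^a$ needed to control $1/Z_{\delta,N}^a$ uniformly in $N$: one needs that the truncated normalisation constant does not degenerate, which follows from $Z_{\delta,N}^a \ge Z_\delta^a - |Z_{\delta,N}^a - Z_\delta^a| \ge Z_\delta^a - (C_R/\delta)\mathbb{E}_\mu[\cdots]^{1/2}\Psi(N) \ge Z_\delta^a/2$ once $N$ is large enough, but this forces the statement to be genuinely asymptotic in $N$ and requires care that the "sufficiently large $N$" threshold is itself only $\delta$-dependent. A secondary subtlety is the role of Assumption \ref{assumption:norm_bounds}: without it the term $\int|f(x)||w_N(x)-w(x)|\mu(\wrt x)$ cannot be bounded using only the integral estimate on $|w_N-w|$, since $f$ ranges over the unit ball of $\mathcal{F}$ rather than of $L^\infty$; the uniform sup-norm bound $\|f\|_\infty \le C_{\mathcal{F}}$ is exactly what makes this step go through.
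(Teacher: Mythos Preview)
Your argument is correct and proceeds by a genuinely different, more elementary route than the paper's. The paper does not carry out the two-term $w/Z$ decomposition directly; instead it recognises the problem as an instance of the abstract well-posedness framework of \cite{Sullivan:2016wp} (their Theorem~4.6), verifies the structural conditions S0--S4 on the potentials $\Phi_\delta(x;a)=\varphi(\|A(x)-a\|_{\mathcal{A}}/\delta)$ and $\Phi_{\delta,N}(x;a)=\varphi(\|A\circ P_N(x)-a\|_{\mathcal{A}}/\delta)$, and obtains a Hellinger bound $d_{\mathrm{H}}(\mu_{\delta,N}^a,\mu_\delta^a)\le C_\delta\Psi(N)$, which is then transferred to $d_{\mathcal{F}}$ via $d_{\mathcal{F}}\le C_{\mathcal{F}}^{-1}d_{\mathrm{TV}}\le C_{\mathcal{F}}^{-1}\sqrt{2}\,d_{\mathrm{H}}$ using Assumption~\ref{assumption:norm_bounds}. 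The core pointwise estimate you derive --- local Lipschitz of $\varphi$ on a ball of radius $R=\delta^{-1}(\sup_x\|A(x)\|_{\mathcal{A}}+\|a\|_{\mathcal{A}})$, reverse triangle inequality, then Assumption~\ref{assumption:prior_truncation} --- is exactly what the paper checks as condition S3, so the analytic heart of the two arguments is identical. Your approach buys self-containment and avoids the Hellinger intermediary; the paper's buys modularity, since the normalisation-constant lower bound you flag as the main obstacle is absorbed into the cited theorem rather than handled by hand.
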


An immediate consequence of Theorems \ref{thm:rcp_contraction} and \ref{thm:truncation error} is that the total approximation error can be bounded by applying the triangle inequality:
\begin{equation*}
	d_{\mathcal{F}} (\mu^a, \mu_{\delta, N}^a ) \leq C_\mu^\alpha (1 + \bar{C}_\phi^\alpha) \delta^\alpha + C_\delta \Psi(N) .
\end{equation*}
In particular, we have convergence of $\mu_{\delta,N}^a$ to $\mu^a$ in the $\delta \downarrow 0$ limit provided that the number of basis functions satisfies $C_\delta \Psi(N) = o(1)$.

The approximate posterior $\mu_{\delta,N}^a$ analysed above can be sampled when $\mu$ is Gaussian, since the first $N+1$ coefficients can be handled with MCMC and the tail $\sum_{i=N+1}^\infty u_i \phi_i$, being Gaussian, can be sampled.
However, when $\mu$ is non-Gaussian the tail is not recognised in a form that can be sampled.
For the experiments in Section \ref{sec:results}, in which both Gaussian and non-Gaussian priors $\mu$ are considered, the series in Eq.~\eqref{eq:generic_random_field} was truncated at level $N+1$, with the resultant prior denoted $\mu_N$.
The associated posterior was then entirely supported on the finite-dimensional subspace $\mathcal{X}_N$; this is mathematically equivalent to working with the \emph{projected output} $P_N \mu_{\delta,N}^a$.
Analysis of prior truncation, as opposed to modification of the information operator just reported, is known to be difficult.
Indeed, while $\mu_{N}$ converges to $\mu$ weakly, it does not do so in total variation, and this deficiency generally transfers to the associated posteriors.
In general the impact of prior perturbation is a subtle topic --- see e.g.\ \cite{Owhadi2015b} and the references therein --- and we therefore defer theoretical analysis of this approximation to future work.

\subsection{Monte Carlo Methods for Numerical Disintegration} \label{sec: MC for ND}

The previous sections established a sequence of well-defined distributions $\mu_\delta^a$ (or $\mu_{\delta,N}^a$ for non-Gaussian models) which converge (in a specific weak sense) to the exact disintegration $\mu^a$.
From construction, $\mu_\delta^a \ll \mu^a$ and this is sufficient to allow standard Monte Carlo methods to be used.
The construction of Monte Carlo methods is de-coupled from the core material in the main text and the main methodological considerations are well-documented \citep[e.g.][]{Girolami:2011hw}.

For the experiments reported in subsequent sections two approaches were explored; a Sequential Monte Carlo (SMC) method \citep{Doucet2001a} and a parallel tempering method \citep{Geyer:1991ws}.
This provided a transparent sampling scheme, whose non-asymptotic approximation error can be theoretically understood.
In particular, they provide robust estimators of model evidence that can be used for Bayesian model comparison.
Full details of the Monte-Carlo methods used for this work, along with associated theoretical analysis for the SMC method, are contained in Section \ref{sec:rcp_smc} of the Electronic Supplement.

\section{Computational Pipelines and PNM} \label{sec:pipelines}

The last theoretical development in this paper concerns composition of several PNMs.
Most analysis of numerical methods focuses on the error incurred by an individual method.
However, real-world computational procedures typically rely on the composition of several numerical methods.
The manner in which accumulated discretisation error affects computational output may be highly non-trivial \citep{Roy2010,Anderson2011,Babuska2016}.
An extreme example occurs when one of the numerical methods in a pipeline is charged with integration of a chaotic dynamical system \citep{Strogatz2014}.

In recent work, \cite{Chkrebtii:2013ux}, \cite{Conrad:2016gv} and \cite{Cockayne:2016ts} each used PNMs within a broader statistical procedure to estimate unknown parameters in systems of ODEs and PDEs.
The probabilistic description of discretisation error was incorporated into the data-likelihood, resulting in posterior distributions for parameters with inflated uncertainty to properly account for the inferential impact of discretisation error.
However, beyond these limited works, no examination of the composition of PNMs has been performed.
In particular, the question of which PNMs can be composed, and when the output of such a composition is meaningful, has not been addressed.
This is important; for instance, if the output of a composition of PNMs is to be used for analysis of variance to elucidate the main sources of discretisation error, then it is important that such output is meaningful.

This section defines a \emph{pipeline} as an abstract graphical object that may be combined with a collection of \emph{compatible} PNMs.
It is proven that when compatible Bayesian PNMs are employed in the pipeline, the distributional output of the pipeline carries a Bayesian interpretation under an explicit conditional independence condition on the prior $\mu$.

To build intuition, for the simple case where two Bayesian PNMs are composed in series, our results provide conditions for when, informally, the output $B_2(B_1(\mu,a_1),a_2)$ corresponds to a single Bayesian procedure $B(\mu,(a_1,a_2))$.
To reduce the notational and technical burden, in this section we will not provide rigorous measure theoretic details; however we note that those details broadly follow the same pattern as in Section \ref{sec:conditioning}.

\subsection{Computational Pipelines}

To analyse pipelines of PNMs, we consider $n$ such methods $M_1,\dots,M_n$, where each method $M_i = (A_i,B_i)$ is defined on a common\footnote{This is without loss of generality, since $\mathcal{X}$ can be taken as the union of all state spaces required by the individual methods.} state space $\mathcal{X}$ and targets a QoI $Q_i \in \mathcal{Q}_i$.
A pipeline will be represented as a directed graphical model, wherein the QoIs $Q_i$ from parent methods constitute information operators for child methods. 
It may be that a method will take quantities from multiple parents as input.
To allow for this, we suppose that the information operator $A_i \colon \mathcal{X} \rightarrow \mathcal{A}_i$ can be decomposed into components $A_{i,j} \colon \mathcal{X} \rightarrow \mathcal{A}_{i,j}$ such that $A_i = (A_{i,1},\dots,A_{i,m(i)})$ and $\mathcal{A}_i = \mathcal{A}_{i,1} \times \dots \times \mathcal{A}_{i,m(i)}$. 
Thus, each component $A_{i,j}$ can be thought of as the QoI output by one of the parents of the method $M_i$.

Without loss of generality we designate the $n$th QoI $Q_n$ to be the \emph{principal} QoI.
That is, the purpose of the computational pipeline is to estimate $Q_n$.
The case of multiple principal QoI is a simple extension not described herein.
Nodes with no immediate children are called \emph{terminal} nodes, while nodes with no immediate parents are called \emph{source nodes}. We denote by $A$ the set of all source nodes.

\begin{definition}[Pipeline]
A \emph{pipeline} $P$ is a directed acyclic graph defined as follows:
\begin{itemize}
	\item Nodes are of two kinds:
	\emph{Information} nodes are depicted by $\square$, and \emph{method} nodes  are depicted by $\blacksquare$.
	\item The graph is bipartite, so that edges connect a method node to an information node or vice-versa.
	That is, edges are of the form $\square \rightarrow \blacksquare$ or $\blacksquare \rightarrow \square$.
	\item There are $n$ method nodes, each with a unique label in $\{1,\dots,n\}$.
	\item The method node labelled $i$ has $m(i)$ parents and one child. 
	Its in-edges are assigned a unique label in $\{1,\dots,m(i)\}$.
	\item There is a unique terminal node and it is the child of method node $n$. This represents the principal QoI $Q_n$.
\end{itemize}
\end{definition}

\begin{example}[Distributed Integration] \label{ex:divide}

Recall the numerical integration problem of Example \ref{ex:optimalinfo} and, as a thought experiment, consider partitioning the domain of integration in order to distribute computation:
\begin{equation}
\underbrace{\int_0^1 x(t) \rd t}_{\text{(c)}} = \underbrace{\int_0^{0.5} x(t) \rd t}_{\text{(a)}} + \underbrace{\int_{0.5}^1 x(t) \rd t}_{\text{(b)}}
\label{eq:split_integral}
\end{equation}
To keep presentation simple we consider an integral over $[0,1]$ with $2m+1$ equidistant knots $t_i = i / 2m$.
Let $M_1$ be a Bayesian PNM for estimating $Q_1(x) =$ (a) and $M_2$ be a Bayesian PNM for estimating $Q_2(x) =$ (b).

In terms of our notational convention, we divide the information operator into four components;
$A_{i,j}$, for $i,j \in \set{1,2}$. $A_{1,1}$ and $A_{2,2}$ contain the information unique to $M_1$ and $M_2$.
Specifically
\begin{equation*}
	A_{1,1}(x) = \left[ \begin{array}{c} x(t_1) \\ \vdots \\ x(t_{m-1}) \end{array} \right], \qquad A_{2,2}(x) = \left[ \begin{array}{c} x(t_{m+1}) \\ \vdots \\ x(t_{2m}) \end{array} \right] .
\end{equation*}
$A_{1,2}$ and $A_{2,1}$ contain the information that is shared between the two methods; that is $A_{1,2} = A_{2,1} = \set{ x(t_m) }$.
To complete the specification we need a third PNM for estimation of $Q_3(x) =$ (c) which we denote $M_3$ and which combines the outputs of $M_1$ and $M_2$ by simply adding them together.
Formally this has information operator $A_3(x) = (A_{3,1}(x) , A_{3,2}(x))$ where $A_{3,1}(x) =$ (a) and $A_{3,2}(x) =$ (b). Its belief update operator is given by:
\begin{equation*}
B_3(\mu,(a_{3,1},a_{3,2})) = \delta(a_{3,1} + a_{3,2} )
\end{equation*}
An intuitive graphical representation of this set-up is shown in Figure \ref{fig:distributed_integration}.
The pipeline $P$ itself, which is identical to Figure \ref{fig:distributed_integration} but with additional node and edge labels, is shown in Figure \ref{fig:PN_pipeline}.
\end{example}

\begin{figure}[t!]
\centering
\resizebox{0.8\textwidth}{!}{
\begin{tikzpicture}

\tikzstyle{square}=[regular polygon,regular polygon sides=4];
\tikzstyle{info}=[draw,rectangle,fill = black!0,minimum width=1.2cm];
\tikzstyle{meth}=[draw,rectangle,fill = black!100,minimum width=1.2cm,text=white];
\tikzstyle{arrow}=[very thick,->];

\node[info] at (-1,0) (I1) {$x(t_1),\dots,x(t_{m-1})$};
\node[info] at (-0.5,-1) (I2) {$x(t_m)$};
\node[info] at (-1,-2) (I3) {$x(t_{m+1}),\dots,x(t_{2m})$};

\node[meth] at (2,0) (M1) {$B_1(\mu,\cdot)$};
\node[meth] at (2,-2) (M2) {$B_2(\mu,\cdot)$};

\node[info] at (4,0) (I5) {$\int_0^{0.5} x(t) \mathrm{d} t$};
\node[info] at (4,-2) (I6) {$\int_{0.5}^1 x(t) \mathrm{d} t$};

\node[meth] at (6,-1) (M3) {$B_3(\mu,\cdot)$};

\node[info] at (8,-1) (I8) {$\int_0^1 x(t) \mathrm{d} t$};

\path[arrow] (I1) edge node [above] {} (M1);
\path[arrow] (I2) edge node [below] {} (M1);
\path[arrow] (I2) edge node [above] {} (M2);
\path[arrow] (I3) edge node [below] {} (M2);

\path[arrow] (M1) edge (I5);
\path[arrow] (M2) edge (I6);

\path[arrow] (I5) edge node [above] {} (M3);
\path[arrow] (I6) edge node [below] {} (M3);

\path[arrow] (M3) edge (I8);
\end{tikzpicture}}
\caption{An intuitive representation of Example \ref{ex:divide}.}
\label{fig:distributed_integration}
\end{figure}
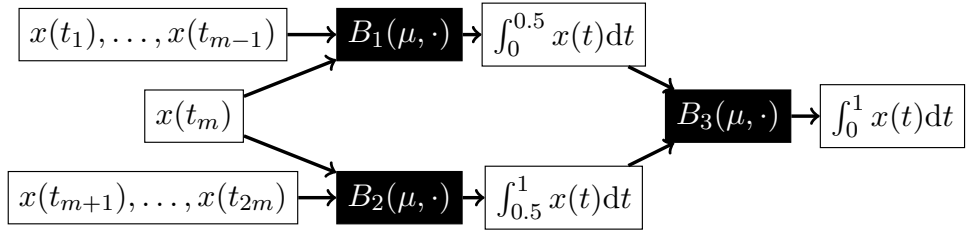

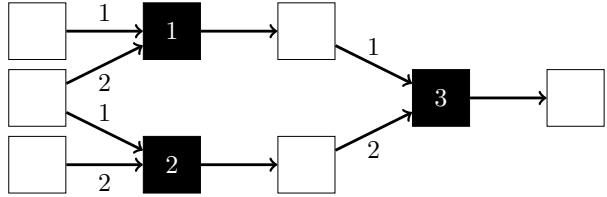
\begin{figure}[t!]
\centering
\resizebox{0.5\textwidth}{!}{
\begin{tikzpicture}

\tikzstyle{square}=[regular polygon,regular polygon sides=4];
\tikzstyle{info}=[draw,square,fill = black!0,minimum width=1.2cm];
\tikzstyle{meth}=[draw,square,fill = black!100,minimum width=1.2cm,text=white];
\tikzstyle{arrow}=[very thick,->];

\node[info] at (0,0) (I1) {};
\node[info] at (0,-1) (I2) {};
\node[info] at (0,-2) (I3) {};

\node[meth] at (2,0) (M1) {1};
\node[meth] at (2,-2) (M2) {2};

\node[info] at (4,0) (I5) {};
\node[info] at (4,-2) (I6) {};

\node[meth] at (6,-1) (M3) {3};

\node[info] at (8,-1) (I8) {};

\path[arrow] (I1) edge node [above] {$1$} (M1);
\path[arrow] (I2) edge node [below] {$2$} (M1);
\path[arrow] (I2) edge node [above] {$1$} (M2);
\path[arrow] (I3) edge node [below] {$2$} (M2);

\path[arrow] (M1) edge (I5);
\path[arrow] (M2) edge (I6);

\path[arrow] (I5) edge node [above] {$1$} (M3);
\path[arrow] (I6) edge node [below] {$2$} (M3);

\path[arrow] (M3) edge (I8);
\end{tikzpicture}}
\caption{The pipeline $P$ corresponding to Figure \ref{fig:distributed_integration}.}
\label{fig:PN_pipeline}
\end{figure}

In general, the method node labelled $i$ is taken to represent the method $M_i$.
The in-edge to this node labelled $j$ is taken to represent the information provided by the relationship $A_{i,j} (x) = a_{i,j}$.
Here $a_{i,j}$ can either be deterministic information provided to the pipeline, or statistical information derived from the output of another PNM.
To make this formal and to ``match the input-output spaces'' we next define what it means for the collection of methods $M_i$ to be compatible with the pipeline $P$.
Informally, this describes the conditions that must be satisfied for method nodes in a pipeline to be able to connect to each other.

\begin{definition}[Compatible]
The collection $(M_1,\dots, M_n)$ of PNMs is \emph{compatible} with the pipeline $P$ if the following two requirements are satisfied:
\begin{enumerate}[(i)]
\item (Method nodes which share an information node must have consistent information spaces and information operators.)
For a motif 
\begin{center}
\resizebox{0.25\textwidth}{!}{
\begin{tikzpicture}
\tikzstyle{square}=[regular polygon,regular polygon sides=4];
\tikzstyle{info}=[draw,square,fill = black!0,minimum width=1.2cm];
\tikzstyle{meth}=[draw,square,fill = black!100,minimum width=1.2cm,text=white];
\tikzstyle{arrow}=[very thick,->];

\node[meth] at (-2,0) (M1) {$i$};
\node[info] at (0,0) (I) {};
\node[meth] at (2,0) (M2) {$j$};

\path[arrow] (I) edge node [above] {$i'$} (M1);
\path[arrow] (I) edge node [above] {$j'$} (M2);
\end{tikzpicture}}
\end{center}
we have that $A_{i,i'} = A_{j,j'}$ and $\mathcal{A}_{i,i'} = \mathcal{A}_{j,j'}$.
\item (The space $\mathcal{Q}_i$ for the output of a previous method must be consistent with the information space of the next method.)
For a motif
\begin{center}
\resizebox{0.25\textwidth}{!}{
\begin{tikzpicture}
\tikzstyle{square}=[regular polygon,regular polygon sides=4];
\tikzstyle{info}=[draw,square,fill = black!0,minimum width=1.2cm];
\tikzstyle{meth}=[draw,square,fill = black!100,minimum width=1.2cm,text=white];
\tikzstyle{arrow}=[very thick,->];

\node[meth] at (0,0) (M1) {$i$};
\node[info] at (2,0) (I) {};
\node[meth] at (4,0) (M2) {$j$};

\path[arrow] (M1) edge node [above] {} (I);
\path[arrow] (I) edge node [above] {$j'$} (M2);
\end{tikzpicture}}
\end{center}
we have that $\mathcal{Q}_i = \mathcal{A}_{j,j'}$.
\end{enumerate}
\end{definition}

Note that we do not require the converse of (i) at this stage;
that is, the same information can be represented by more than one node in the pipeline. 
This permits redundancy in the pipeline, in that information is not recycled.
It will transpire that pipelines with such redundancy are non-Bayesian.

The role of the pipeline $P$ is to specify the order in which information, either deterministic of statistical, is propagated through the collection of PNMs.
This is illustrated next:

\begin{example}[Propagation of Information] \label{ex:propaga}
For the pipeline in Figure \ref{fig:PN_pipeline}, the propagation of information proceeds as follows::
\begin{enumerate}
	\item The source nodes, representing $A(x) = \{ A_{1,1}(x), A_{1,2}(x) = A_{2,1}(x), A_{2,2}(x) \}$ are evaluated as $\{a_{1,1},a_{1,2}=a_{2,1},a_{2,2}\}$.
	This represents all the information on $x$ that is provided.
	\item The distributions 
	\begin{align*}
		\mu^{(1)} & \defeq B_1(\mu, (a_{1,1},a_{1,2}) ) \\
		\mu^{(2)} & \defeq B_2(\mu, (a_{2,1},a_{2,2}) ) 
	\end{align*}
	are computed.
	\item The push-forward distribution
	\begin{equation*}
		\mu^{(3)} \defeq (B_3)_{\#}(\mu , \mu^{(1)} \times \mu^{(2)}) 
	\end{equation*}
	is computed.
\end{enumerate}
Here $\mu^{(1)} \times \mu^{(2)}$ is defined on the Cartesian product $\Sigma_{\mathcal{A}_{3,1}} \times \Sigma_{\mathcal{A}_{3,2}}$ with independent components $\mu^{(1)}$ and $\mu^{(2)}$.
The notation $(B_3)_\#$ refers to the push-forward of the function $B_3(\mu,\cdot)$ over its second argument.
The distribution $\mu^{(3)}$ is the output of the pipeline and is a distribution over the principal QoI $Q_3(x)$.
\end{example}

The procedure in Example \ref{ex:propaga} can be formalised, but to keep the presentation and notation succinct, we leave this implicit:

\begin{definition}[Computation] \label{def:bayes_computation}
For a collection $(M_1,\dots, M_n)$ of PNMs that are compatible with a pipeline $P$, the \emph{computation} $P(M_1,\dots,M_n)$ is defined as the PNM with information operator $A$ and belief update operator $B$ that takes $\mu$ and $A(x) = a$ as input and returns the distribution $\mu^{(n)}$ as its output $B(\mu,a)$, obtained through the procedure outlined in Example \ref{ex:propaga}.
\end{definition}
\noindent
That is, the \emph{computation} $P(M_1,\dots,M_n)$ is a PNM for the principal QoI $Q_n$.
Note that this definition includes a classical numerical work-flow just as a PNM encompasses a standard numerical method.

\subsection{Bayesian Computational Pipelines}

Noting that $P(M_1,\dots,M_n)$ is itself a PNM, there is a natural definition for when such a computation can be called Bayesian:

\begin{definition}[Bayesian Computation]
Denote by $(A, B)$ the information and belief operators associated with the computation $P(M_1,\dots,M_n)$ and let $\{\mu^a\}_{a \in \mathcal{A}}$ be a disintegration of $\mu$ with respect to the information operator $A$. 
The computation $P(M_1,\dots,M_n)$ is said to be \emph{Bayesian} for the QoI $Q_n$ if 
\begin{equation*}
B(\mu,a) = (Q_n)_{\#} \mu^a \quad  \text{ for } A_{\#} \mu \text{-almost-all } a \in \mathcal{A} .
\end{equation*}
\end{definition}
\noindent
This is clearly an appealing property; the output of a Bayesian computation can be interpreted as a posterior distribution over the QoI $Q_n(x)$ given the prior $\mu$ and the information $A(x)$.
Or, more informally, the ``pipeline is lossless with information''.
However, at face value it seems difficult to verify whether a given computation $P(M_1,\dots,M_n)$ is Bayesian, since it depends on both the individual PNMs $M_i$ and the pipeline $P$ that combines them. 
Our next aim is to establish verifiable sufficient conditions, for which we require another definition:

\begin{definition}[Dependence Graph]
The \emph{dependence graph} of a pipeline $P$ is the directed acyclic graph $G(P)$ obtained by taking the pipeline $P$, removing the method nodes and replacing all $\square \rightarrow \blacksquare \rightarrow \square$ motifs with direct edges $\square \rightarrow \square$.
\end{definition}
The dependency graph for Example \ref{ex:divide} is shown in Figure \ref{fig:PN_dependence_graph}.

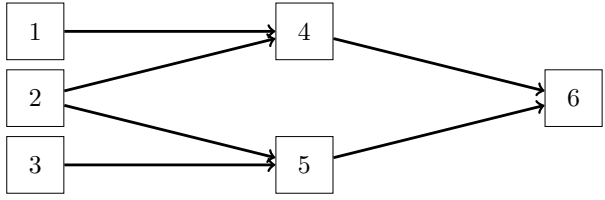
\begin{figure}[t!]
\centering
\resizebox{0.5\textwidth}{!}{
\begin{tikzpicture}

\tikzstyle{square}=[regular polygon,regular polygon sides=4];
\tikzstyle{info}=[draw,square,fill = black!0,minimum width=1.2cm];
\tikzstyle{arrow}=[very thick,->];

\node[info] at (0,0) (I1) {1};
\node[info] at (0,-1) (I2) {2};
\node[info] at (0,-2) (I3) {3};

\node[info] at (4,0) (I4) {4};
\node[info] at (4,-2) (I5) {5};

\node[info] at (8,-1) (I6) {6};

\path[arrow] (I1) edge (I4);
\path[arrow] (I2) edge (I4);
\path[arrow] (I2) edge (I5);
\path[arrow] (I3) edge (I5);

\path[arrow] (I4) edge (I6);
\path[arrow] (I5) edge (I6);
\end{tikzpicture}}
\caption{Dependence graph $G(P)$ corresponding to the pipeline $P$ in Figure \ref{fig:PN_pipeline}.
The nodes are indexed with a topological ordering (shown).}
\label{fig:PN_dependence_graph}
\end{figure}

For a computation $P(M_1,\dots,M_n)$, each of the $J$ distinct nodes in $G(P)$ can be associated with a random variable $Y_j$ where either $Y_j = A_{k,l}(X)$ for some $k,l$, when the node is a source, or otherwise $Y_j = Q_k(X)$, for some $k$.
Randomness here is understood to be due to $X \sim \mu$, so that the distribution of the $\{Y_j\}_{j=1}^J$ is a function of $\mu$.
The convention used here is that the $Y_j$ are indexed according to a topological ordering on $G(P)$, which has the properties that (i) the source nodes correspond to indices $1,\dots,I$, and (ii) the final random variable is $Y_J = Q_n(X)$.

\begin{definition}[Coherence] \label{def:markov_condition}
	Consider a computation $P(M_1,\dots,M_n)$.
	Denote by $\pi(j) \subseteq \{1,\dots,j-1\}$ the parent set of node $j$ in the dependence graph $G(P)$.
	Then we say that $\mu \in \mathcal{P}_{\mathcal{X}}$ is \emph{coherent} for the computation $P(M_1,\dots,M_n)$ if the implied joint distribution of the random variables $Y_1, \dots, Y_J$ satisfies:
	\begin{equation*}
		Y_j \ci Y_{\{1,\dots,j-1\} \setminus \pi(j)} \; | \; Y_{\pi(j)}
	\end{equation*}
	for all $j = I+1,\dots,J$.
\end{definition}
Note that this is weaker than the Markov condition for directed acyclic graphs \citep[see][]{Lauritzen1991}, since we do not insist that the variables represented by the source nodes are independent.
It is emphasised that, for a given $\mu \in \mathcal{P}_{\mathcal{X}}$, the coherence condition can in general be checked and verified.

The following result provides sufficient and verifiable conditions which ensure that a computation composed of individual Bayesian PNMs is a Bayesian computation:

\begin{theorem} \label{thm:markov}
	Let $M_1,\dots,M_n$ be Bayesian PNMs and let $\mu \in \mathcal{P}_{\mathcal{X}}$ be coherent for the computation $P(M_1,\dots,M_n)$.
	Then it holds that the computation $P(M_1,\dots,M_n)$ is Bayesian for the QoI $Q_n$.
\end{theorem}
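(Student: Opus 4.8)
The plan is to run an induction along a topological ordering of the dependence graph $G(P)$. Write $a = (a_1,\dots,a_I)$ for the source information, so that the source nodes carry the random variables $(Y_1,\dots,Y_I) = A(X)$, and recall from Theorem~\ref{thm:Chang} that the disintegration $\{\mu^a\}_{a\in\mathcal{A}}$ of $\mu$ with respect to $A$ exists and that $\mu^a$ is the conditional law of $X$ given $A(X)=a$. Since $Y_J = Q_n(X)$, the assertion $B(\mu,a) = (Q_n)_\#\mu^a$ is precisely the case $j=J$ of the claim that, for $A_\#\mu$-almost every $a$,
\begin{equation*}
\mu^{(j)} \;=\; \mathrm{Law}(Y_j \mid Y_1 = a_1,\dots,Y_I = a_I),
\end{equation*}
where $\mu^{(j)}$ denotes the distribution attached to node $j$ by the propagation procedure of Example~\ref{ex:propaga} when it is run with source data $a$.

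For $j\le I$ the claim is immediate since $\mu^{(j)} = \delta(a_j)$. For $j > I$, node $j$ is the child of a method node $k$, and the components of $A_k(x)$ are exactly the parent variables $Y_{\pi(j)}$; because $M_k$ is Bayesian, $B_k(\mu,y_{\pi(j)}) = (Q_k)_\#\mu^{y_{\pi(j)}} = \mathrm{Law}(Y_j \mid Y_{\pi(j)} = y_{\pi(j)})$. The propagation step produces $\mu^{(j)} = \int \mathrm{Law}(Y_j \mid Y_{\pi(j)} = y_{\pi(j)}) \left(\bigotimes_{j'\in\pi(j)}\mu^{(j')}\right)(\mathrm{d}y_{\pi(j)})$. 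Two ingredients identify this with $\mathrm{Law}(Y_j\mid Y_{1:I}=a)$. First, the chain rule and the coherence condition give the factorisation $p(y_{1:J}) = p(y_{1:I})\prod_{j>I}p(y_j\mid y_{\pi(j)})$, from which the semi-graphoid axioms yield $Y_j \ci Y_{1:I}\mid Y_{\pi(j)}$; combining this with the tower property, $\mathrm{Law}(Y_j\mid Y_{1:I}=a) = \int \mathrm{Law}(Y_j\mid Y_{\pi(j)}=y_{\pi(j)})\,\mathrm{Law}(Y_{\pi(j)}\mid Y_{1:I}=a)(\mathrm{d}y_{\pi(j)})$. Second, by the inductive hypothesis each factor $\mu^{(j')}$ equals $\mathrm{Law}(Y_{j'}\mid Y_{1:I}=a)$, so it remains to show that, conditionally on the source variables, the parents $(Y_{j'})_{j'\in\pi(j)}$ are mutually independent, so that their product is their joint law. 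Substituting then closes the inductive step, and $j=J$ gives the theorem.

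The mutual-independence-of-parents step is the crux and is where the structure of the pipeline, rather than coherence alone, is used. Coherence is only the local (ordered) Markov property, but combined with the chain rule it shows that, conditionally on the sources, $(Y_{I+1},\dots,Y_J)$ has the Bayesian-network factorisation $\prod_{j>I}\mathrm{Law}(Y_j\mid Y_{\pi(j)})$ over the non-source part of $G(P)$; reading conditional independences off this factorisation reduces the task to the combinatorial fact that, within $G(P)$, distinct parents of a common node have disjoint sets of non-source ancestors. This in turn follows because each method node has a single child and information generated inside the pipeline is not recycled, so that from any non-source node all influence funnels through a unique successor chain and any branch that reaches two parents of a common node must occur at a source. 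This is exactly the point at which the ``no recycling'' feature is indispensable: if an internally-produced quantity were fed to two methods whose outputs are later recombined, the product used in propagation would ignore a genuine coupling and the computation would fail to be Bayesian. The remaining work --- existence and essential uniqueness of the disintegration at each node via Theorem~\ref{thm:Chang}, measurability of the kernels $y_{\pi(j)}\mapsto\mathrm{Law}(Y_j\mid Y_{\pi(j)}=y_{\pi(j)})$, and the Fubini interchanges in the mixing integrals --- is routine and is precisely the measure-theoretic bookkeeping that this section suppresses.
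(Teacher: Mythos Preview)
Your approach is more explicit than the paper's and takes a genuinely different route. The paper's argument is brief: in density notation it writes
\[
p(y_J \mid y_{1:I}) \;=\; \int\!\cdots\!\int \prod_{j=I+1}^J p(y_j \mid y_{\pi(j)})\,\mathrm{d}y_{I+1:J-1}
\]
via the chain rule and the coherence condition, and then simply asserts that the right-hand side ``is recognised as the output of the computation $P(M_1,\dots,M_n)$''. There is no induction and no explicit parent-independence step; the identification of this product integral with the propagation output of Definition~\ref{def:bayes_computation} is taken as immediate. Your node-by-node induction unpacks what that recognition actually demands --- that at each method node the independent product $\bigotimes_{j'\in\pi(j)}\mu^{(j')}$ used in the propagation coincides with the joint law of the parents given the sources.

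That said, your justification for the parent-independence step has a gap. You derive it from ``each method node has a single child and information generated inside the pipeline is not recycled'', but the single-child constraint applies to \emph{method} nodes, not to information nodes; nothing in the pipeline definition prevents a non-source information node from feeding several methods, which gives it out-degree greater than one in $G(P)$. In that configuration two later parents of a common node share it as a non-source ancestor, and your combinatorial claim fails. Your own parenthetical --- that recycling an internally produced quantity would break the Bayesian property --- is exactly right, but it shows that ``no internal recycling'' is an additional structural hypothesis rather than a consequence of the stated definitions. The paper's one-line ``recognition'' step rests on the same implicit hypothesis (with recycling, unrolling the propagation duplicates the factor $p(y_v\mid y_{\pi(v)})$ with separate integration variables, so the product integral above is not recovered), so the two arguments are on equal footing in this respect; yours simply makes the hidden assumption visible.
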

Conversely, if non-Bayesian PNM are combined then the computation $P(M_1,\dots,M_n)$ need not be Bayesian in general.

\begin{example}[Example \ref{ex:divide}, continued]
The random variables $Y_i$ in this example are:
$$
Y_1 = \{X(t_i)\}_{i=1}^{m-1}, \quad Y_2 = X(t_m), \quad Y_3 = \{X(t_i)\}_{i=m+1}^{2m}, \quad Y_4 = \int_0^{0.5} X(t) \mathrm{d}t, \quad Y_5 = \int_{0.5}^1 X(t) \mathrm{d}t.
$$
From $G(P)$ in Figure \ref{fig:PN_dependence_graph}, coherence condition in Definition \ref{def:markov_condition} requires that the non-trivial conditional independences $Y_4 \ci Y_3 \; | \; \{Y_1,Y_2\}$ and $Y_5 \ci Y_1 \; | \; \{Y_2,Y_3\}$ hold.
Thus the distribution $\mu$ is coherent for the computation $P(M_1,M_2,M_3)$ if and only if, for $X \sim \mu$, the associated information variables satisfy $\int_0^{0.5} X(t) \wrt t \ci \{X(t_i)\}_{i=m+1}^{2m} | \{X(t_i)\}_{i=1}^{m}$ and $\int_{0.5}^1 X(t) \wrt t \ci \{X(t_i)\}_{i=1}^{m-1} | \{X(t_i)\}_{i=m}^{2m}$.

The distribution $\mu$ induced by the Weiner process on $x$ in Example \ref{ex:optimalinfo} satisfies these conditions.
Indeed, under $\mu$ the stochastic process $\{x(t) : t > t_m\}$ is conditionally independent of its history $\{x(t) : t < t_m\}$ given the current state $x(t_m)$.
Thus for this choice of $\mu$, from Theorem \ref{thm:markov} we have that $P(M_1,M_2,M_3)$ is Bayesian and parallel computation of $(a)$ and $(b)$ in Eq.~\eqref{eq:split_integral} can be justified from a Bayesian statistical standpoint.

However, for the alternative of belief distributions induced by the Weiner process on $\partial^s x$, this condition is not satisfied and the computation $P(M_1,M_2,M_3)$ is not Bayesian.
To turn this into a Bayesian procedure for these alternative belief distributions it would be required that $A_{1,2}(x)$ provides information about the derivatives $\partial^k x(t_m)$ for all orders $k \leq s$.
\end{example}

\subsection{Monte Carlo Methods for Probabilistic Computation}

The most direct approach to access $\mu^{(n)}$ is to sample from each Bayesian PNM and treat the output samples as inputs to subsequent PNM. 
This is sometimes known as \emph{ancestral sampling} in the Bayesian network literature \citep[e.g.][]{Paige2016}, and is illustrated in the following example:

\begin{example}[Ancestral Sampling for PNM]

For Example \ref{ex:divide}, ancestral sampling proceeds as follows:
\begin{enumerate}
\item Draw initial samples
\begin{align*}
q_1 & \sim B_1(\mu, (a_{1,1} , a_{1,2}) ) \\
q_2 & \sim B_2(\mu, (a_{2,1} , a_{2,2}) ) 
\end{align*}
\item Draw a final sample
\[
q_3 \sim B_3(\mu ,  (q_1 , q_2) ) 
\]
\end{enumerate}
Then $q_3$ is a draw from $\mu^{(3)}$.
\end{example} 

Ancestral sampling requires that PNM outputs can be sampled.
Such sampling methods were discussed in Section \ref{sec: MC for ND}.
For a more general approach, sequential Monte Carlo methods can be used to propagate a collection of particles through the pipeline $P$, similar to work on SMC for general graphical models \citep{Briers2005,Ihler2009,Lienart2015,Lindsten2016,Paige2016}.

\section{Numerical Experiments} \label{sec:results}

In this final section of the paper we present three numerical experiments.
The first is a linear PDE, the second is a nonlinear ODE and the third is an application to a problem in industrial process monitoring, described by a pipeline of PNM.
In each case we experiment with non-Gaussian belief distributions and, in doing so, go beyond previous work.

\subsection{Poisson Equation} \label{sec:poisson}


Our first illustration is an instance of the Poisson equation, a linear PDE with mixed Dirichlet-Neumann boundary conditions:
\begin{align}
	-\nabla^2 x(t) &= 0 				& \;  t&\in(0,1)^2 &&& \label{eq:elliptic_interior}\\
	x(t) &= t_1 						& \;  t_1 &\in [0,1] 	&\; t_2 &= 0 \label{eq:elliptic_dirichlet_1}\\
	x(t) &= 1-t_1 						& \;  t_1 &\in [0,1] 	&\; t_2 &= 1 \label{eq:elliptic_dirichlet_2}\\
	\partial x / \partial t_2 &= 0 	& \;  t_2 &\in (0,1) 	&\; t_1 &= 0, 1 \label{eq:elliptic_neumann}
\end{align}
A model solution to this system, generated with a finite-element method on a fine mesh, is shown in Figure~\ref{fig:elliptic_model_solution}.

\begin{figure}
	\centering
	\includegraphics[width=0.5\textwidth]{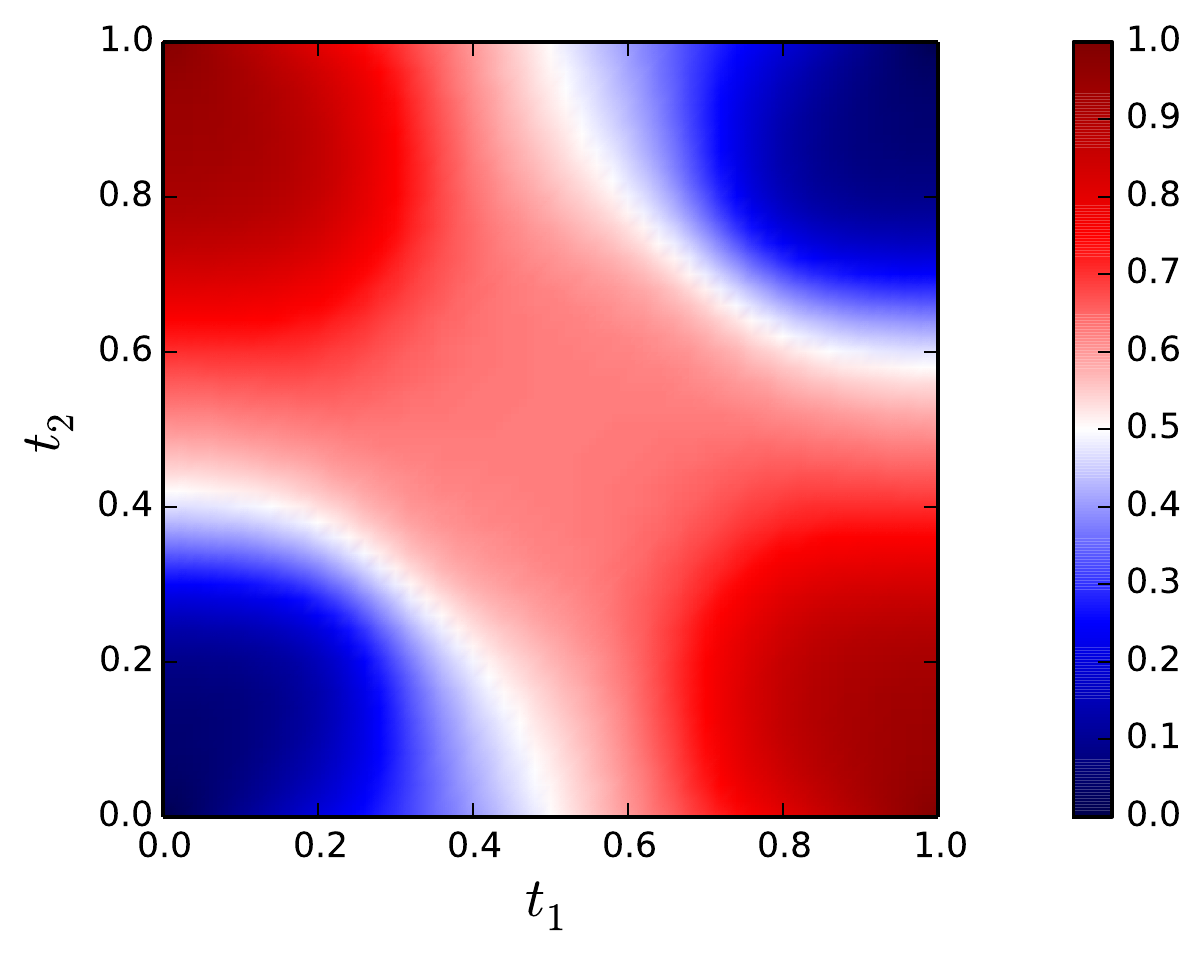}
	\caption{Model solution $x(t)$, $t = (t_1,t_2)$, generated by application of a finite element method based on a triangular mesh of $50\times50$ elements.} \label{fig:elliptic_model_solution}
\end{figure}

As the spatial domain for this problem is two-dimensional, the basis used for specification of the belief distribution is more complex. Here tensor products of orthogonal polynomials have been used:
$\phi_i(t) = C_j(2t_1-1)C_k(2t_2-1)$, $i+j \leq N_c$.
The polynomials $C_i$ were chosen to be normalised Chebyshev polynomials of the first kind. 
Prior specification then follows the formulation given in Section \ref{sec:rcp_prior}, where the remaining parameters were chosen to be $x_0\equiv 1$, and $\gamma_i = \alpha(i+1)^{-2}$. 
The random variables $\xi$ were taken to be either Gaussian or Cauchy and the polynomial basis was truncated to $N=45$ terms, corresponding to a maximum polynomial degree of $N_C=8$.
For both priors the parameter $\alpha$ was set to $\alpha=1$.
Note that closed-form expressions are available for analysis under the Gaussian prior \citep{Cockayne:2016ts} but, to simplify interpretation of empirical results, were not exploited.
Mathematical background on Cauchy priors can be found in \cite{Sullivan:2016wp}.

The information operator was defined by a set of locations $t_i \in [0,1]^2$, $i=0,\dots,N_t$, where either the interior condition or one of the boundary conditions was enforced. 
Denote by $\set{t^{I,i}}$ the set of interior points, $\set{t^{D,j}}$ the set of Dirichlet boundary points and $\set{t^{N,k}}$ the set of Neumann boundary points, where $i=1,\dots,N_I$, $j=1,\dots,N_D$ and $k=1,\dots,N_N$, with $n = N_I + N_D + N_N$. Then, the information operator is given by the concatenation of the conditions defined above:
\begin{equation*}
	A(x) = [A^I(x)^\top, A^D(x)^\top, A^N(x)^\top]^\top, \\
\end{equation*}
\begin{equation*}
	A^I(x) = \begin{bmatrix} -\nabla^2 x(t^{I,1}) \\ \vdots \\ -\nabla^2 x(t^{I,N_I}) \end{bmatrix} ,
	\quad A^D(x) = \begin{bmatrix} x(t^{D,1}) \\ \vdots \\ x(t^{D,N_D}) \end{bmatrix} ,
	\quad A^N(x) = \begin{bmatrix} \frac{\partial}{\partial t_1} x(t^{N,1}) \\\vdots \\ \frac{\partial}{\partial t_1} x(t^{N,N_N}) \end{bmatrix}
\end{equation*}

The Bayesian PNM output was approximated by numerical disintegration and sampled with a Monte Carlo method whose description is reserved for the Electronic Supplement.
In Figure~\ref{fig:elliptic_posteriors} the mean and pointwise standard-deviations of the posterior distributions are plotted for Gaussian and Cauchy priors with $n = 16$. 
There is little qualitative difference between the posterior distributions for the Gaussian and Cauchy priors. 
The mean functions match closely to the mean function from the model solution, as given in Figure~\ref{fig:elliptic_model_solution}.
The posterior variance is lowest near to the Dirichlet boundaries where the solution is known, and peaks where the Neumann condition is imposed. This is to be expected, as evaluations of the Neumann boundary condition provide less information about the solution itself.



\begin{figure}
	\begin{subfigure}{\textwidth}
	\centering
		\includegraphics[width=0.45\textwidth]{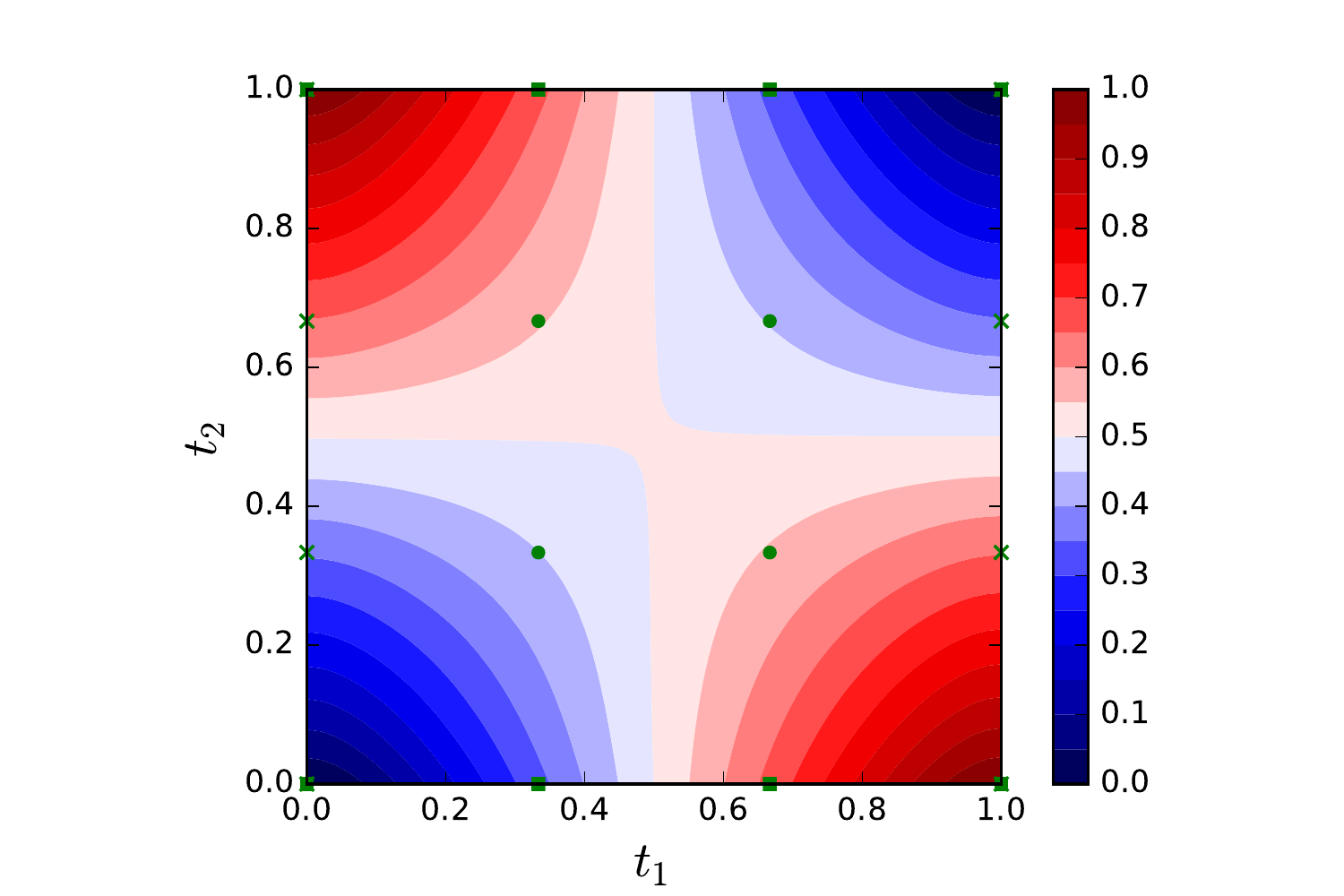}
		\includegraphics[width=0.45\textwidth]{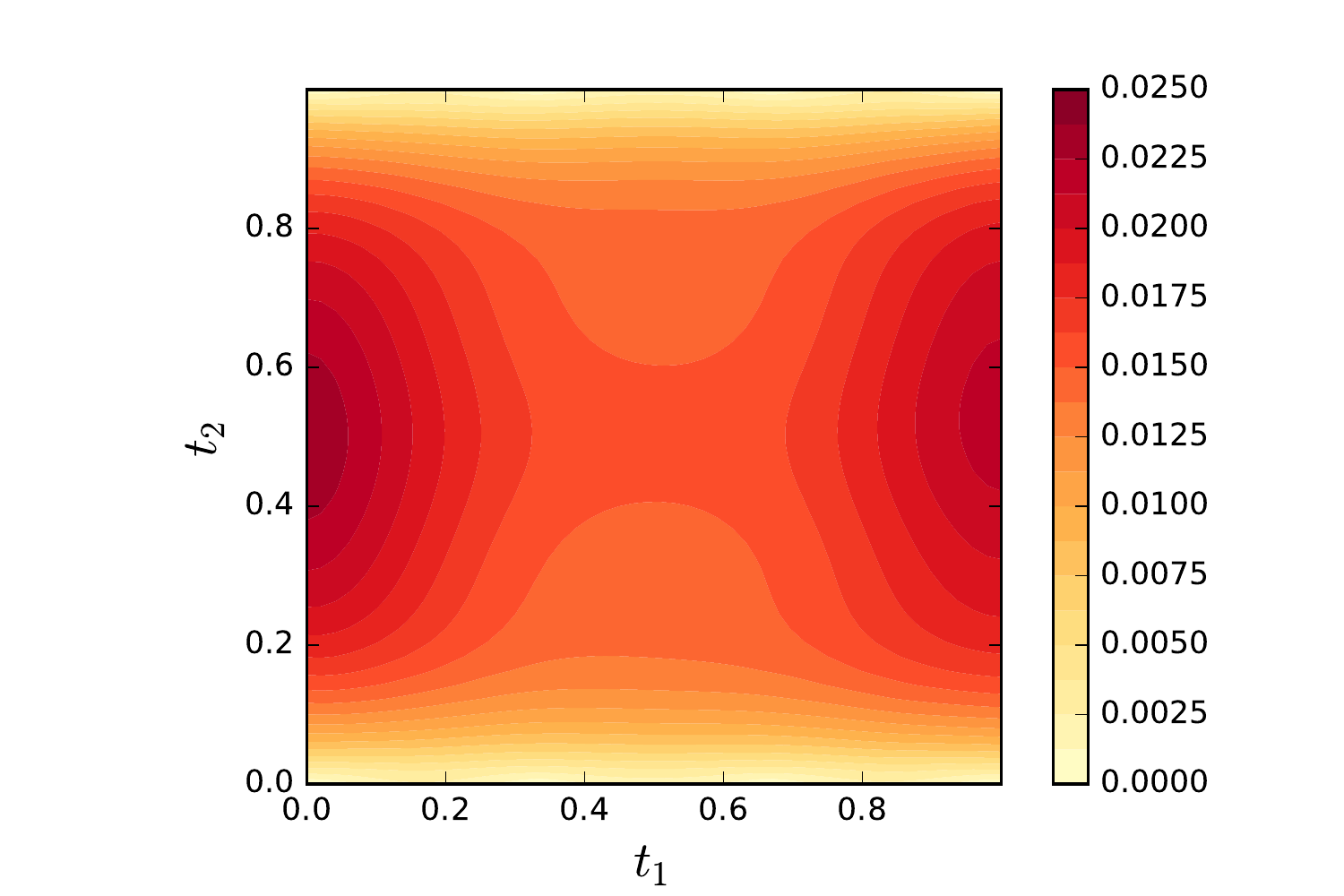}
		\caption{Gaussian prior}
	\end{subfigure}
	\begin{subfigure}{\textwidth}
	\centering
		\includegraphics[width=0.45\textwidth]{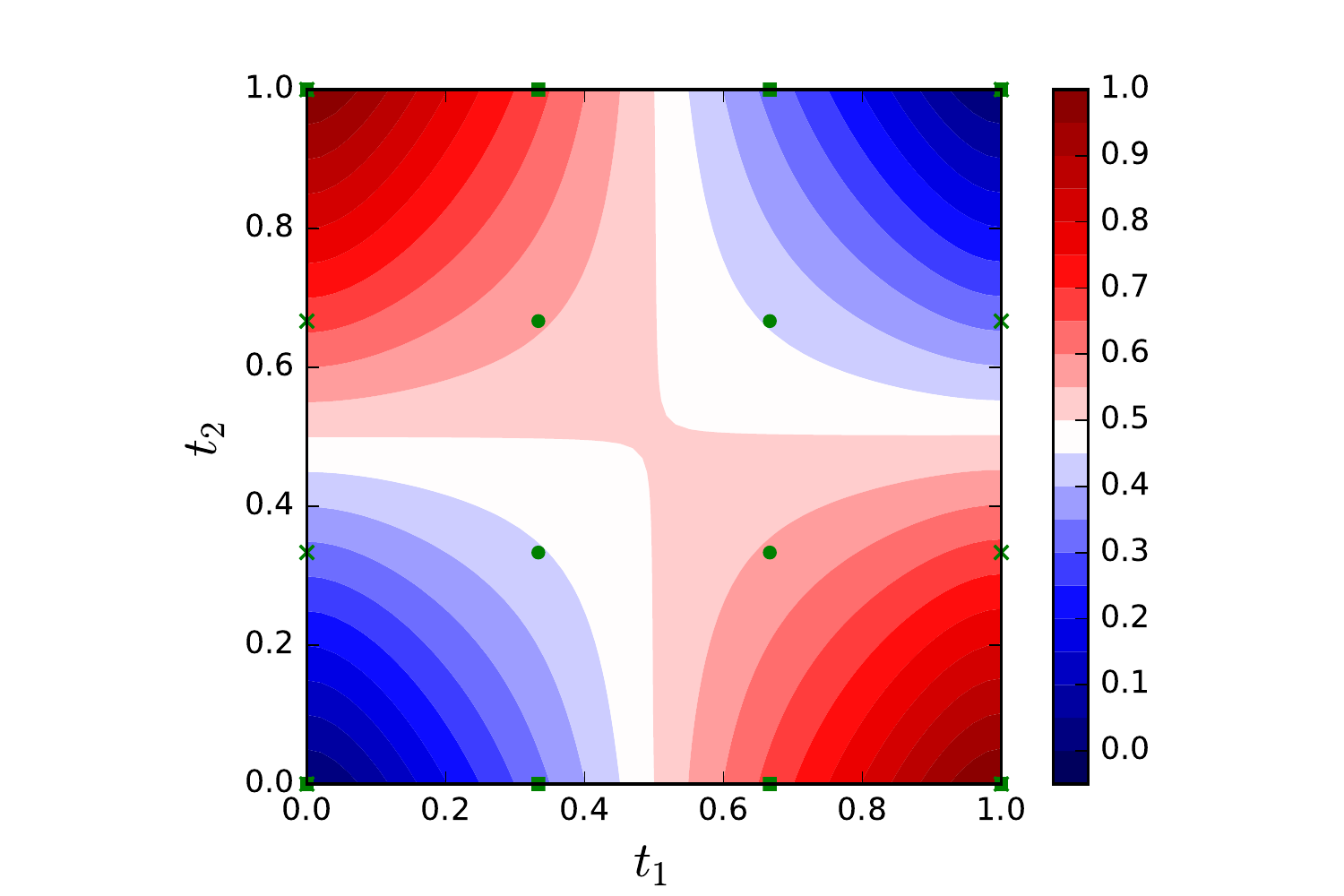}
		\includegraphics[width=0.45\textwidth]{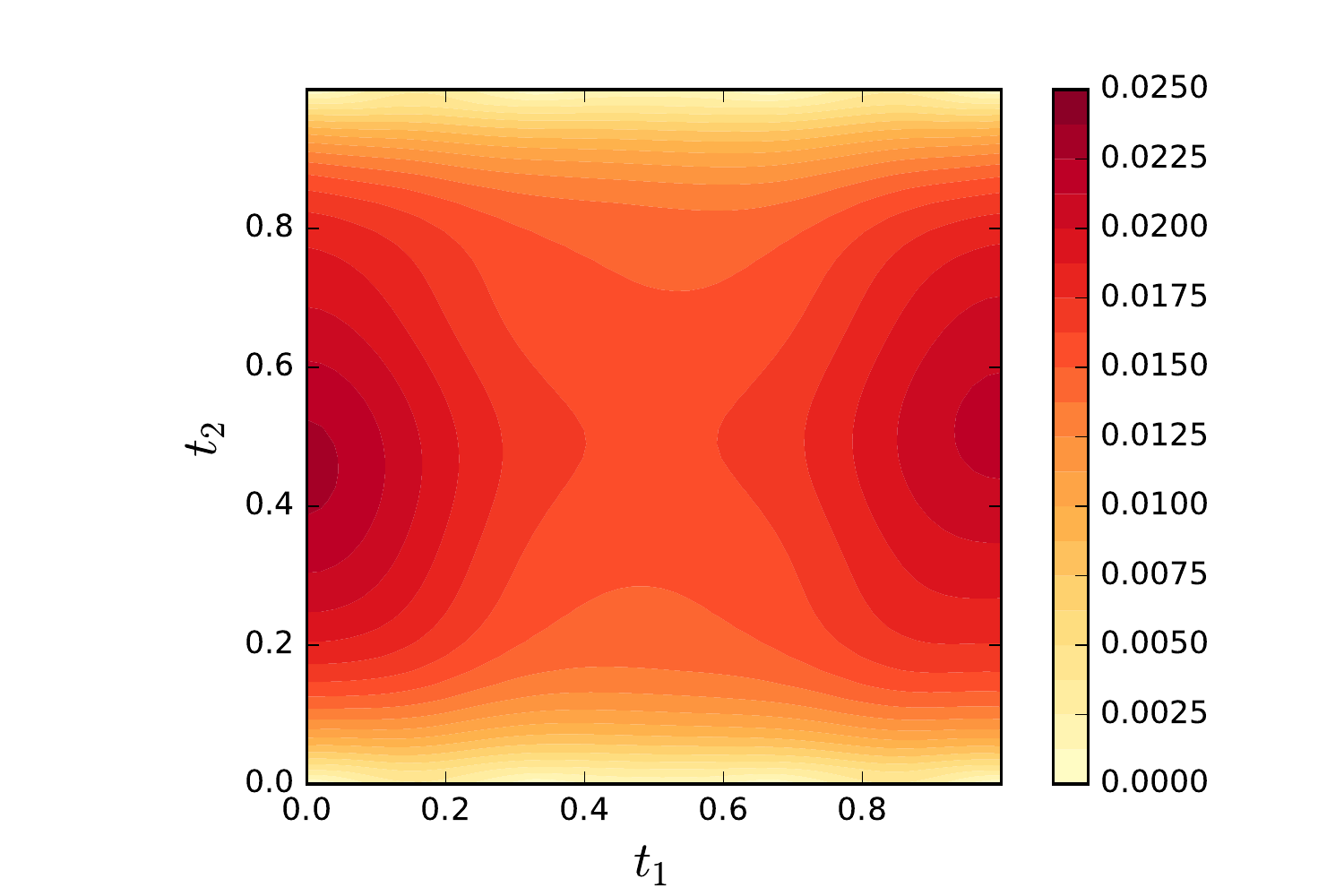}
		\caption{Cauchy prior}
	\end{subfigure}
	\caption{Posterior distributions for the solution $x$ of the Poisson equation, with $n=16$ and different choices of prior distribution. 
	Left: Posterior mean.
	Design points for the interior, Dirichlet and Neumann boundary conditions are indicated by green dots, green squares and green crosses, respectively.
	Right: Posterior standard deviation.
	}
	\label{fig:elliptic_posteriors}
\end{figure}

Next, the posterior distribution of the spectrum $\{u_i\}$ was investigated.
In Figure~\ref{fig:elliptic_spectrum} the posterior distribution over these coefficients is plotted and it is seen that the correlation structure between coefficients is non-trivial, c.f. the joint distribution between $u_0$ and $u_3$.

\begin{figure}
	\includegraphics[width = \textwidth,clip,trim = 0cm 2cm 0cm 1.5cm]{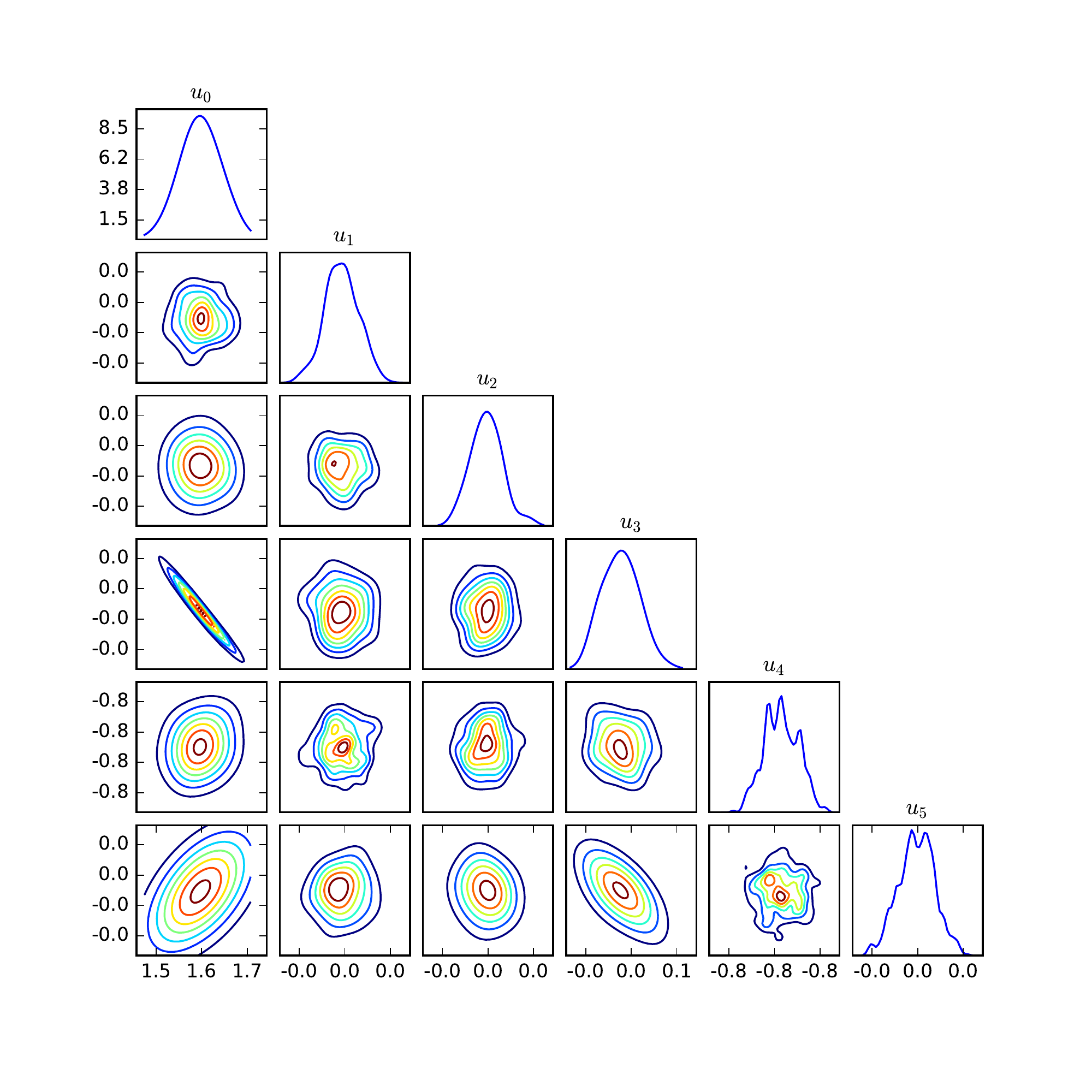}
	\caption{Posterior distributions for the first six coefficients of the spectrum for the solution $x$ of the Poisson equation, obtained with Monte Carlo methods and numerical disintegration, based on $\delta=0.0008$, $n=16$.
	(NB: The posterior is Gaussian and can be obtained in closed-form, but we opted to additionally illustrate the Monte Carlo method.)
	} \label{fig:elliptic_spectrum}
\end{figure}

Last, in Figure~\ref{fig:elliptic_variance} convergence of the posterior distribution is plotted as the number of design points is varied, for $n=16,25,36$. In each case a Gaussian prior was used. As expected, the standard deviation in the posterior distribution is seen to decrease as the number of design points is increased. At $n=36$, the shape of the region of highest uncertainty changes markedly, with the most uncertain region lying between the Dirichlet boundary and the first evaluation points on the Neumann boundary. This is likely due to the fact that the number of evaluation points is approaching the size of the polynomial basis; when the number of points equals the size of the basis the system is completely determined for a linear model. 
Thus, we need $N \gg n$ in order for discretisation error to be quantified.

\begin{figure}
\centering
	\begin{subfigure}{0.3\textwidth}
		\includegraphics[width=\textwidth,clip,trim = 1cm 0cm 1cm 0cm]{figures_elliptic/gaussian_n=4_variance.pdf}
		\caption{$n=16$}
	\end{subfigure}
	~
	\begin{subfigure}{0.3\textwidth}
		\includegraphics[width=\textwidth,clip,trim = 1cm 0cm 1cm 0cm]{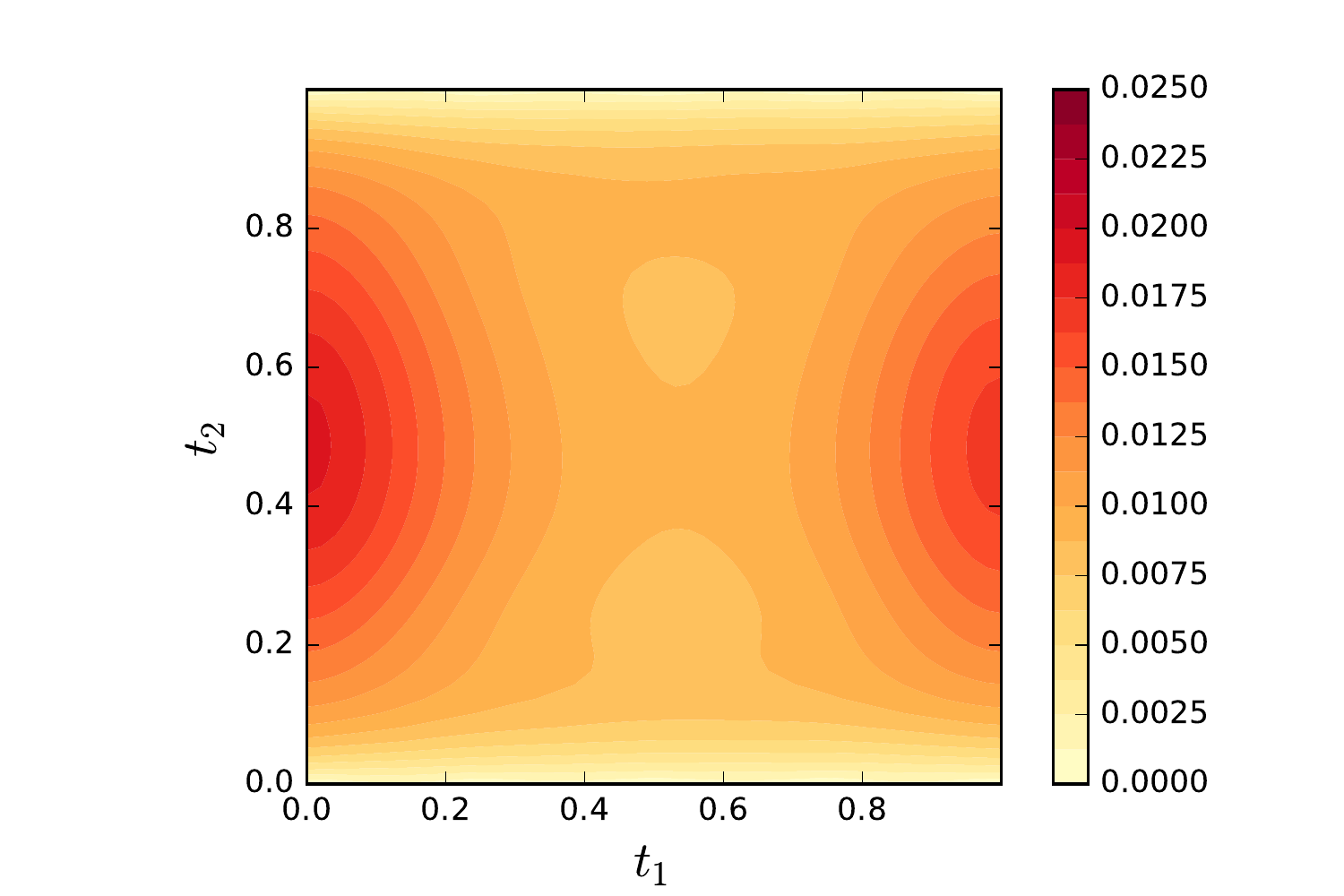}
		\caption{$n=25$}
	\end{subfigure}
	~
	\begin{subfigure}{0.3\textwidth}
		\includegraphics[width=\textwidth,clip,trim = 1cm 0cm 1cm 0cm]{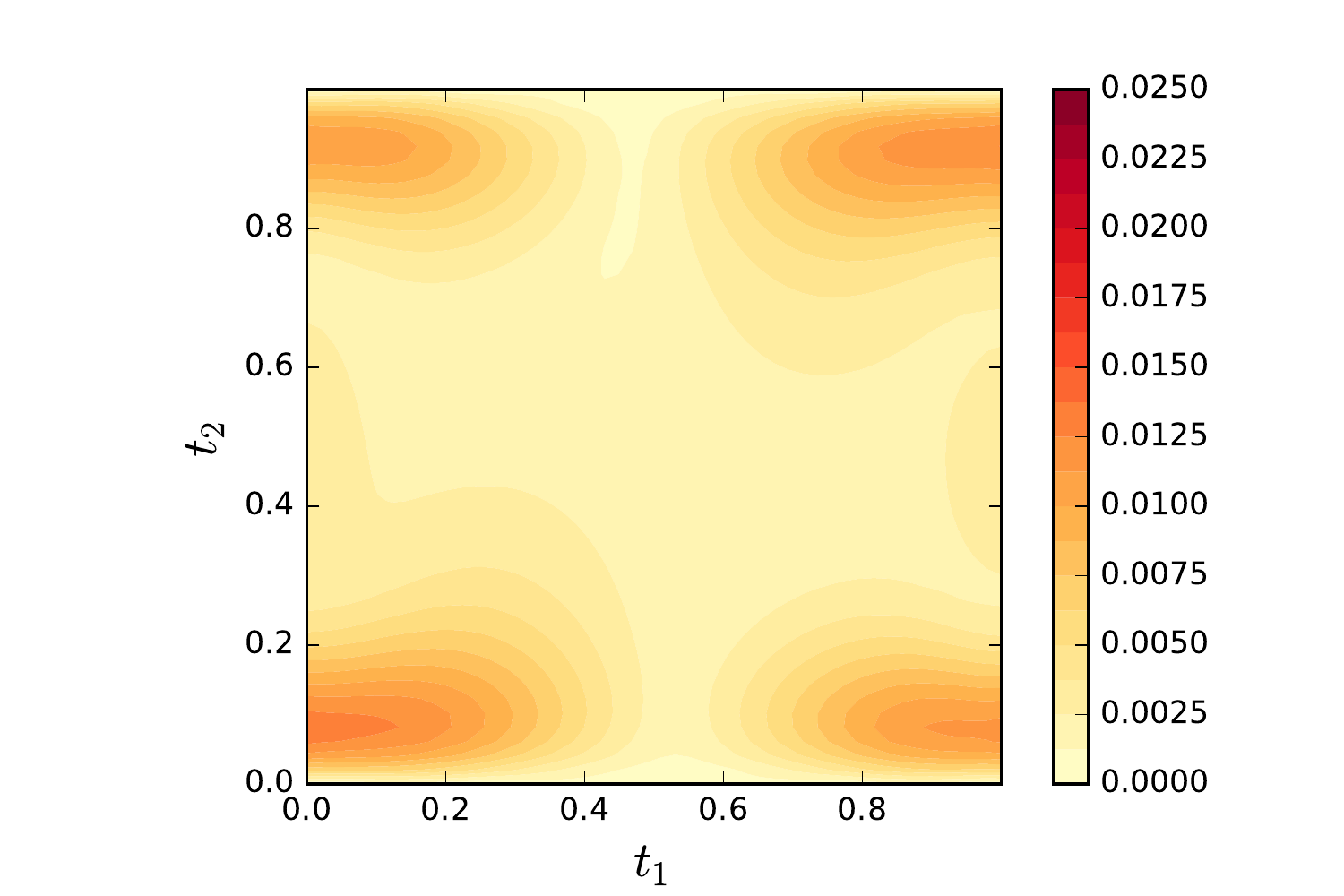}
		\caption{$n=36$}
	\end{subfigure}
	\caption{Heat map of the point-wise standard deviation for the solution $x$ to the Poisson equation as the number $n$ of design points is varied. In each case a Gaussian prior has been used.
	} \label{fig:elliptic_variance}
\end{figure}

\subsection{The Painlev\'{e} ODE}\label{sec:painleve}


In this section a Bayesian PNM is developed to solve a nonlinear ODE based on Painlev\'{e}'s first transcendental
\begin{align*}
	x'' &= x^2 - t, \hspace{20pt} t \in [0,\infty)
	x(0) &= 0 \\
	t^{-1/2} x(t) &\to 1 \hspace{20pt} \text{ as } t \to \infty \;.
\end{align*}
To permit computation, the right-boundary condition was relaxed by truncating the domain to $[0, 10]$ and using the modified condition $x(10) = \sqrt{10}$.

Two distinct solutions are known, illustrated in Figure~\ref{fig:painleve_solutions} (left). 
These model solutions were obtained using the deflation technique described in \cite{Funke2013}. 
The spectrum plot in Figure~\ref{fig:painleve_solutions} (right) represents the coefficients $\{u_i\}$ obtained when each solution is represented over a basis of normalised Chebyshev polynomials. 
As those polynomials are orthonormal with respect to the $L_2$-inner-product, the slower decay for the negative solution compared to the positive solution is equivalent to the negative solution having a larger $L_2$-norm. 
This explains the preference that optimisation-based numerical solvers have for returning the positive solution in general, and also explains some of the results now presented.

\begin{figure}
	\includegraphics[width=\textwidth]{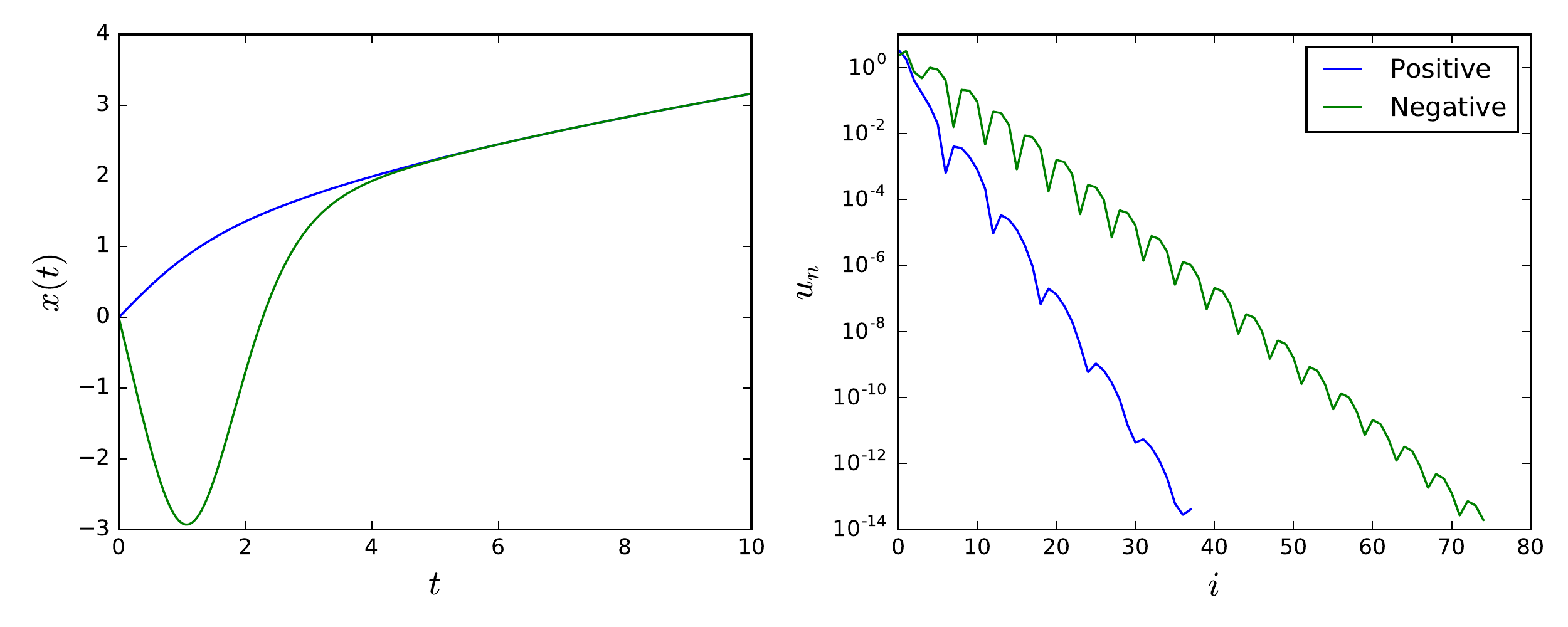}
	\caption{Two distinct solutions for the Painlev\'{e} ODE. The spectral plot on the right shows the true coefficients $\{u_i\}$, as determined by a model solver (the MatLab package \texttt{chebfun}).
	} \label{fig:painleve_solutions}
\end{figure}

Such systems for which multiple solutions exist have been studied before in the context of PNM, both in \cite{Chkrebtii:2013ux} and in \cite{Cockayne:2016ts}.
It was noted in both papers that existence of multiple solutions can present a substantial challenge to classical numerical methods.

To build a Bayesian PNM, a prior $\mu$ for this problem was defined by using a series expansion as in Eq. (\ref{eq:generic_random_field}).
The basis functions were $\phi_i(t) = C_i(\frac{1}{2}(t-5))$ where the $C_i$ were normalised Chebyshev polynomials of the first kind.
Both Gaussian and Cauchy priors were considered by taking $u_i := \gamma_i \xi_i$, where $\xi_i$ were taken to be either standard Gaussian or standard Cauchy and in in each case $x_0(t) \equiv 0$.
In accordance with the exponential convergence rate for spectral methods when the solution to the system is a smooth function, the sequence of scale parameters was set to $\gamma_i = \alpha \beta^{-i}$, where $\alpha=8$ and $\beta=1.5$. These values were chosen by inspection of the true spectra (obtained with Matlab's ``\texttt{chebfun}'' package) to ensure that both solutions were in the support of the prior.

The information operator $A$ was defined by the choice of locations $\set{t_j}$, $j=1,\dots, m$, which determine the locations at which the posterior will be constrained.
Analysis for several values of $m$ was performed. In each case $t_1 = 0$, $t_m = 10$ and the remaining $t_j$ were equally spaced on $[0,10]$. To be explicit, the information operator was
\begin{equation*}
	A(x) = \begin{bmatrix} 
		x''(t_1) - (x(t_1))^2 \\ 
		\vdots \\ 
		x''(t_m) - (x(t_m))^2 \\
		x(0) \\
		x(10)
	\end{bmatrix}
\end{equation*}
with the last two elements enforcing the boundary conditions. Thus our information was $a = [-t_1, \dots, -t_m, 0, \sqrt{10}]$, which is $n = m+2$ dimensional.

The Bayesian PNM output $B(\mu,a)$ was approximated via numerical disintegration with the first $N = 40$ terms of the series representation used.
This was sampled with Monte Carlo methods, the details of which are reserved for the Electronic Supplement.

Results for a selection of bandwidths $\delta$, with $n = 17$, are shown in Figure~\ref{fig:results_multiple_delta}. 
Note that a strong preference for the positive solution is expressed at the smallest $\delta$, with mass around both solutions at larger $\delta$. 
For the Gaussian prior, some mass remained around the negative solution at the smallest $\delta$, while this was not so for the Cauchy prior. 
This reflects the fact that, for a collection of independent univariate Cauchy random variables, one element is likely to be significantly larger in magnitude than the others, which favours faster decay for the remaining elements.

Using the calculation described in Section~\ref{sec:evidence_computation}, model evidence was computed for both the Gaussian and the Cauchy prior at $n=15$. The Bayes factor for the Cauchy, compared to the Gaussian prior, was found to be $20.26$, which constitutes strong evidence in favour of a Cauchy prior for this problem at the given level of discretisation.

\begin{figure}
\centering
	\begin{subfigure}{0.75\textwidth}
	\includegraphics[width=\textwidth]{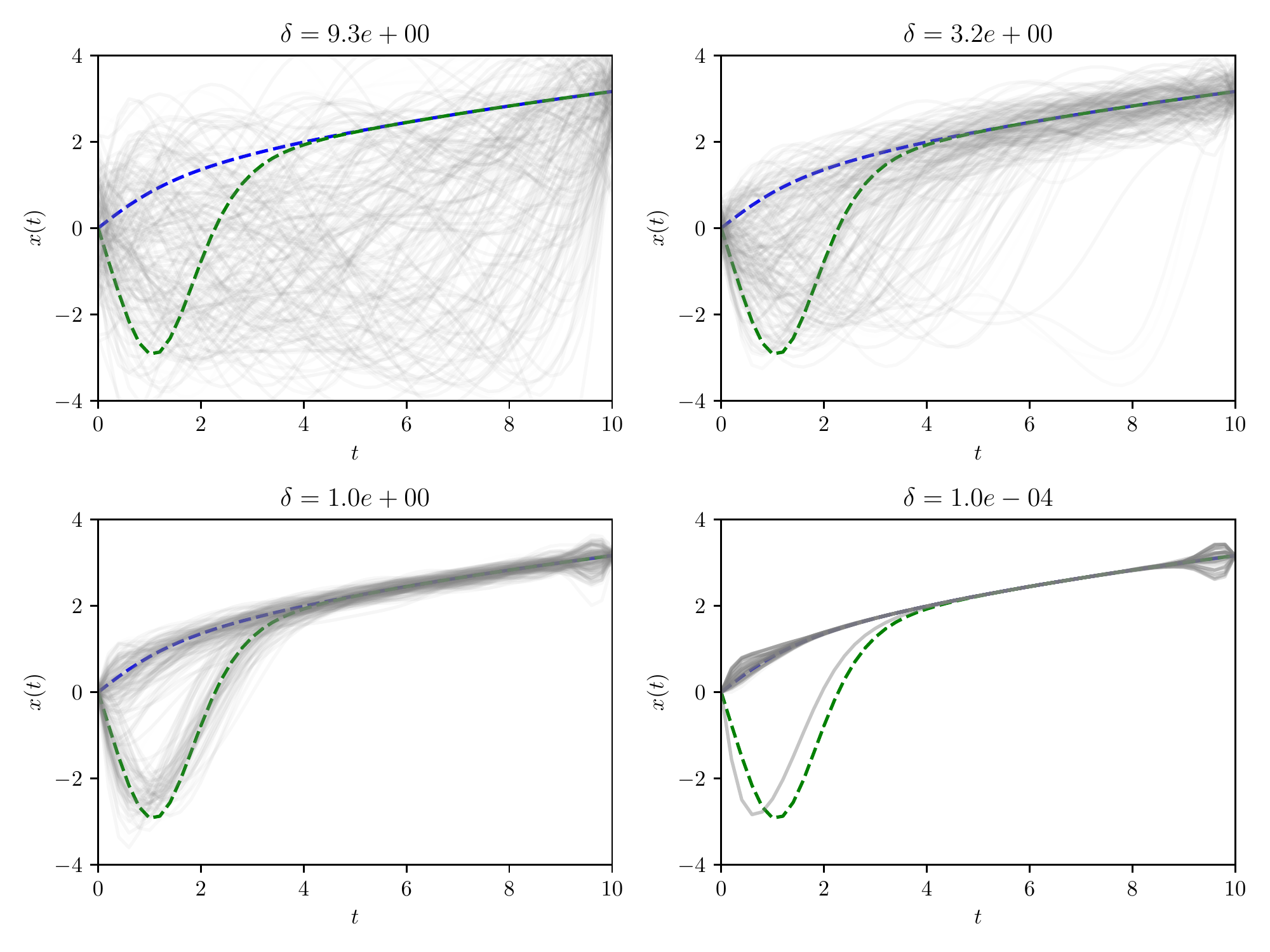}
	\caption{Gaussian Prior}
	\end{subfigure}
	\begin{subfigure}{0.75\textwidth}
	\includegraphics[width=\textwidth]{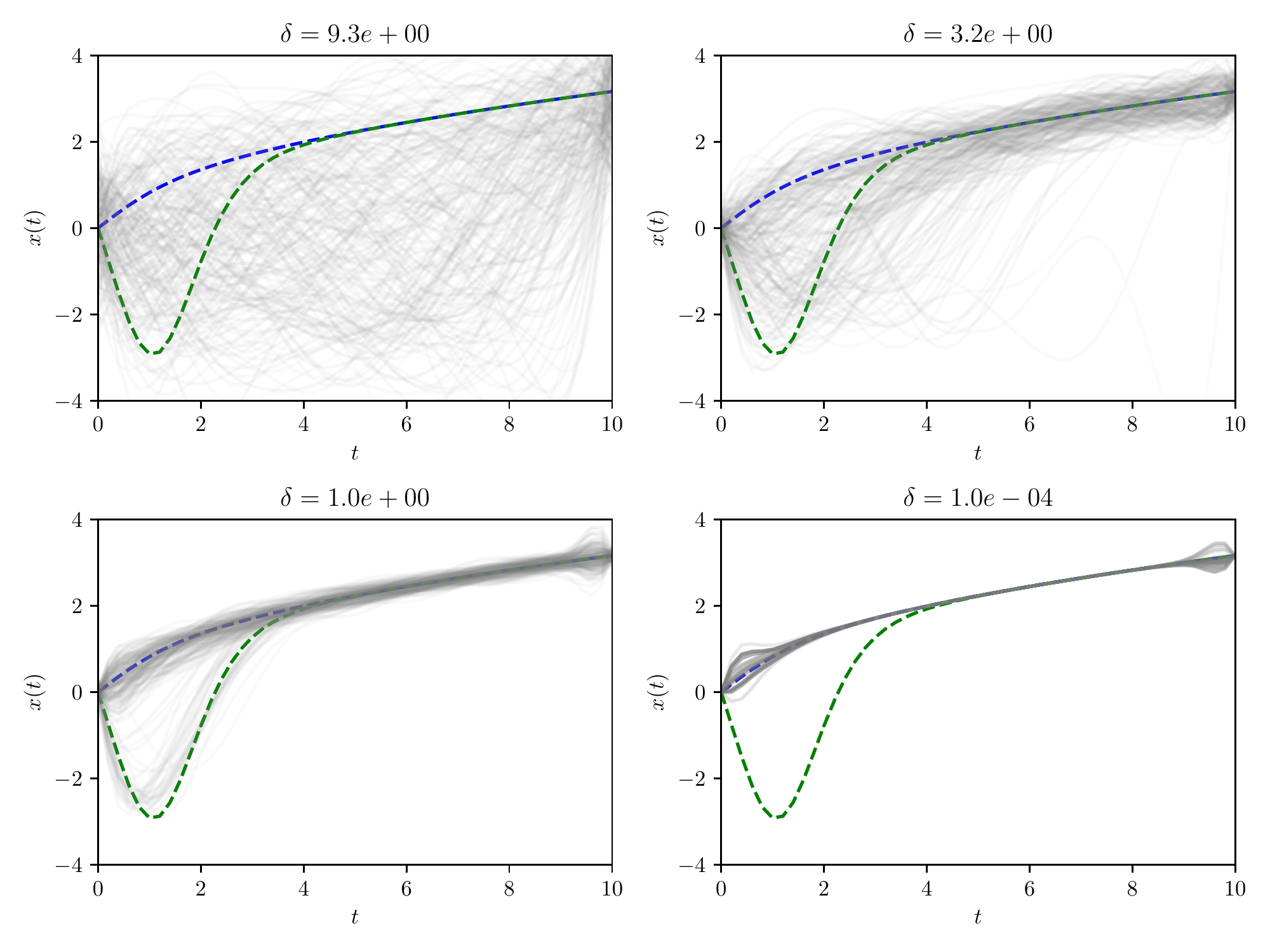}
	\caption{Cauchy Prior.}
	\end{subfigure}
	\caption{Posterior samples for the Painlev\'{e} system for $n = 17$. 
	Blue and green dashed lines represent the positive and negative solutions determined with \texttt{chebfun}.
	Grey lines are samples from an approximation to the posterior provided by numerical disintegration (bandwidth parameter $\delta$).} \label{fig:results_multiple_delta}
\end{figure}

In Figure~\ref{fig:painleve_coeffs} the posterior distributions for first six coefficients $u_i$ at $n = 17$ and $\delta=1$ are plotted. Strong multimodality is clear, as well as skewed correlation structure between the coefficients. 
Illustration of such posteriors for smaller $\delta$ is difficult as the posteriors become extremely peaked.

\begin{figure}
	\includegraphics[width=\textwidth,clip, trim = 0cm 2cm 0cm 1.5cm]{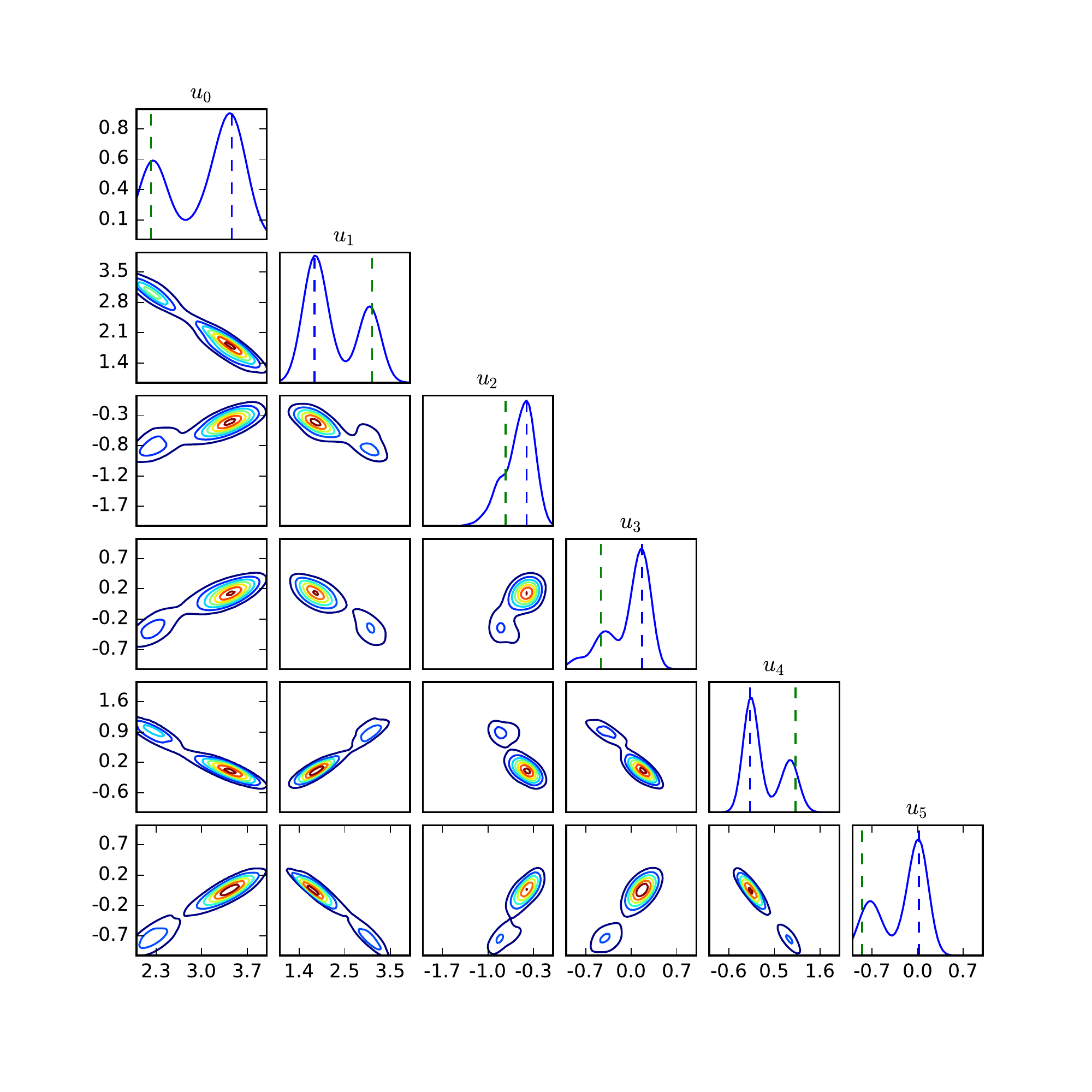}
	\caption{Posterior distributions for the first six coefficients obtained with numerical disintegration (bandwidth parameter $\delta = 1$), at $n = 17$. 
	Vertical dashed lines on the diagonal plots indicate the value of the coefficients for the positive (blue) and negative (green) solutions determined with \texttt{chebfun}.} \label{fig:painleve_coeffs}
\end{figure}

Figure~\ref{fig:painleve_convergence} displays convergence of the posterior distributions as $n$ is increased. Of particular interest is that for $n = 12$, the posterior distribution based on a Gaussian prior becomes trimodal.
For each prior, the posterior mass settles on the positive solution to the system at $n = 22$. This is in accordance with the fact that this solution has smaller $L_2$-norm. 
This perhaps reflects the fact that, while in the limiting case both solutions should have an equal likelihood, the curvature of the likelihood at each mode may differ. 
Prior truncation may also be influential; in Figure~\ref{fig:painleve_truncation_likelihoods} the log-likelihood of the negative solution increases at a slower rate than that of the positive solution.
Thus, while in the setting of an infinite prior series neither solution should be preferred, in practice truncation might bias one solution over the other. Lastly, it is clear that the parameters $\alpha$ and $\beta$ may also have a significant effect on which solution is preferred.
Further theoretical work will be required to understand many of the phenomena that we have just described.

\begin{figure}
	\includegraphics[width=\textwidth]{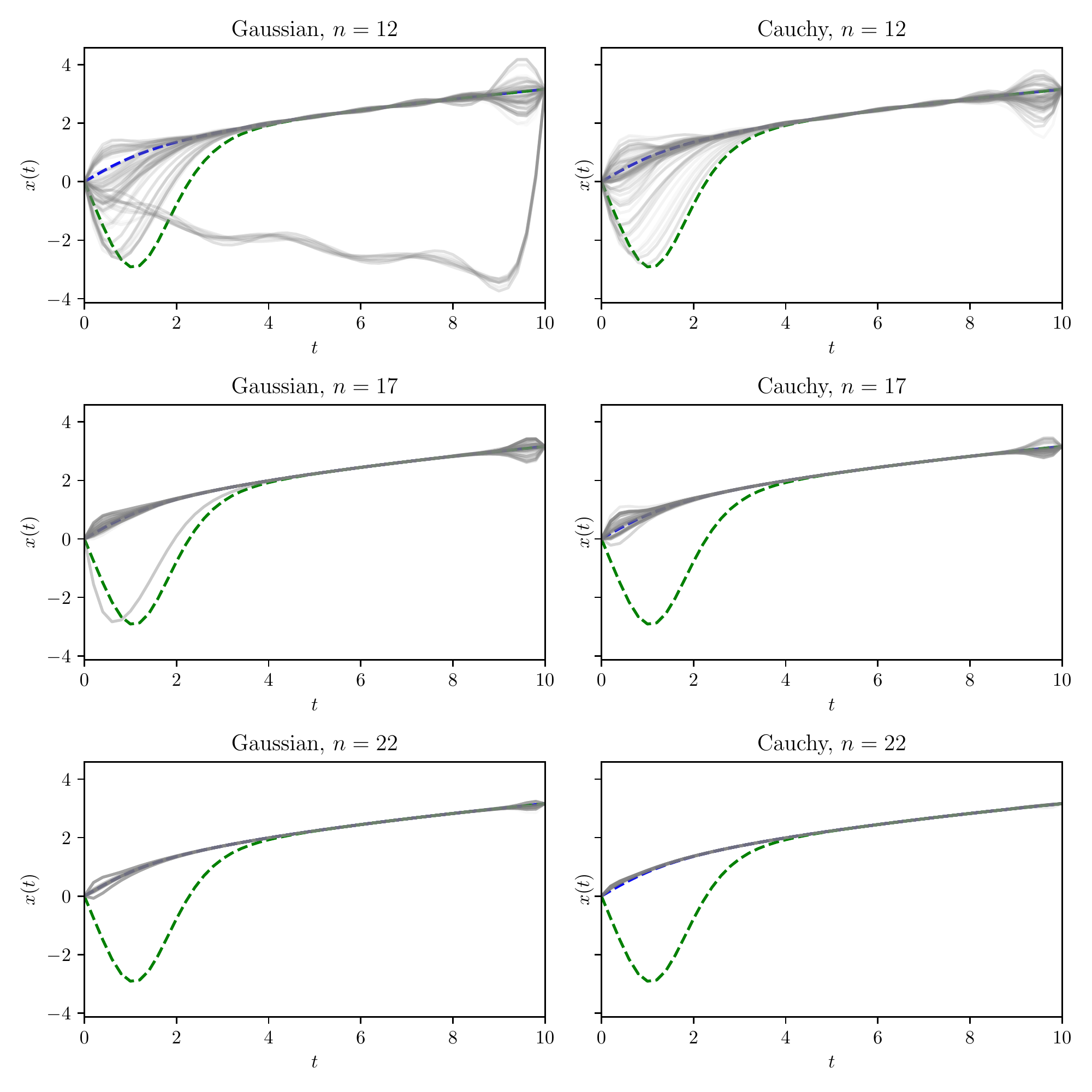}
	\caption{Convergence for the numerical disintegration scheme as $n$ is increased.
	Left: Gaussian prior. Right: Cauchy prior.
	 In all cases $\delta=10^{-4}$.}
	\label{fig:painleve_convergence}
\end{figure}

\begin{figure}
	\centering 
	\includegraphics[width=0.6\textwidth]{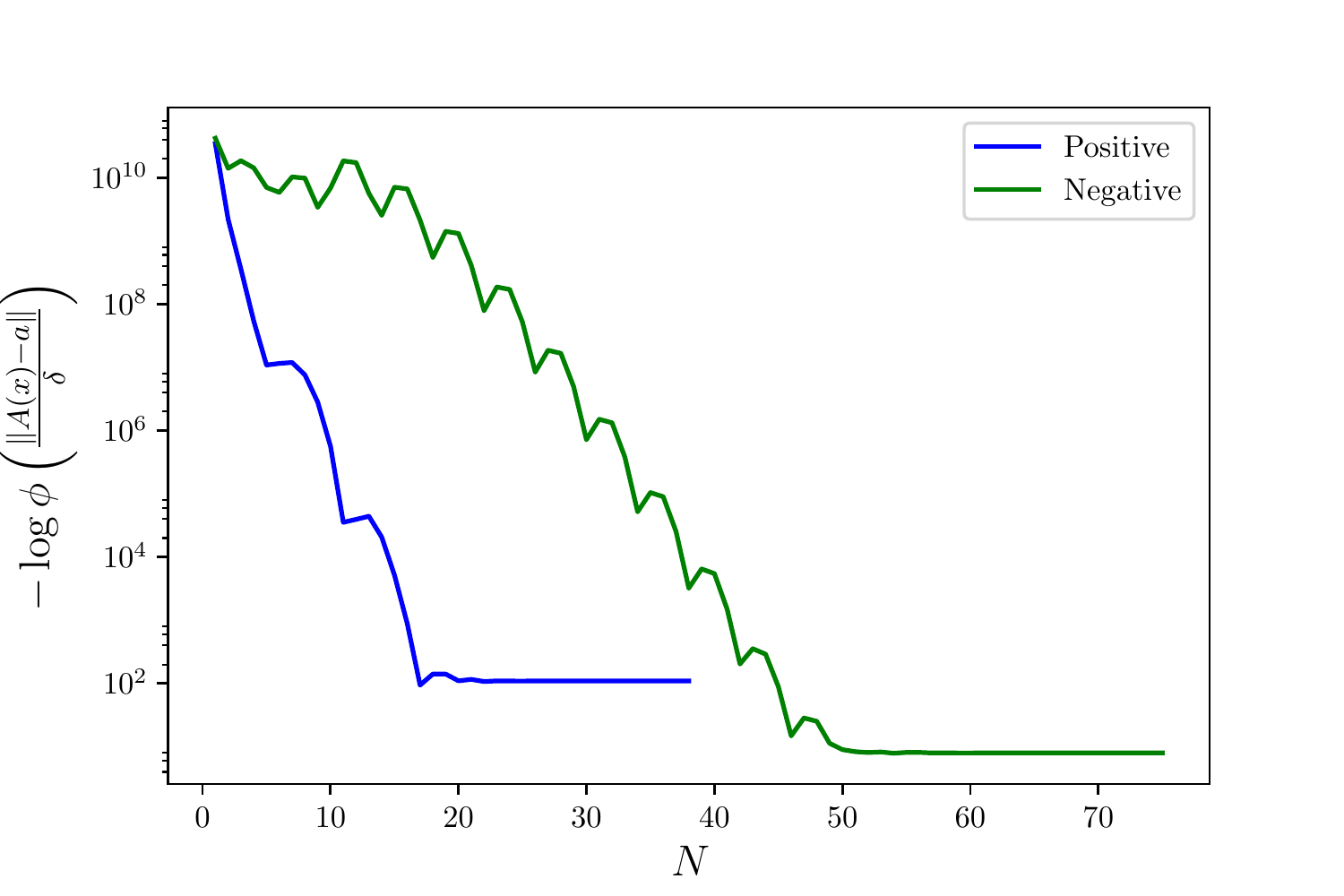}
	\caption{Negative-log-likelihoods for the point-estimates of coefficients for the postive and negative solutions given by \texttt{chebfun}, as the truncation level $N$ is varied. The fact that the likelihood for the positive solution decreases more rapidly than that of the negative solution suggests indicates that the posterior may have a preference for that solution over the other, though the level $N=40$ has been selected in an attempt to minimise the impact.} \label{fig:painleve_truncation_likelihoods}
\end{figure} 

Of particular interest is how a preference for the negative solution could be encoded into a PNM. 
Owing to the flexible specification the information operator, there is considerable choice in this matter. 
An elegant approach is the introduction of additional, inequality-based information
\begin{equation}
	x'(0) \leq 0 \;.
	\label{eq:negative_bc} 
\end{equation}
Such information can be difficult to incorporate in standard numerical algorithms, but is of interest in many physical problems \citep{Kinderlehrer:2000we}.
For Bayesian PNM we can extend the information operator to include $1[x'(0) \leq 0]$.
Posterior distributions for the Gaussian prior at $n = 17$ are shown in Figure~\ref{fig:painleve_negative_convergence}. Note that posterior mass has settled close to the negative solution. This highlights the simplicity with which Bayesian PNMs can encode a preference for a particular solution when a multiplicity of solutions exist.

\begin{figure}
	\includegraphics[width=\textwidth]{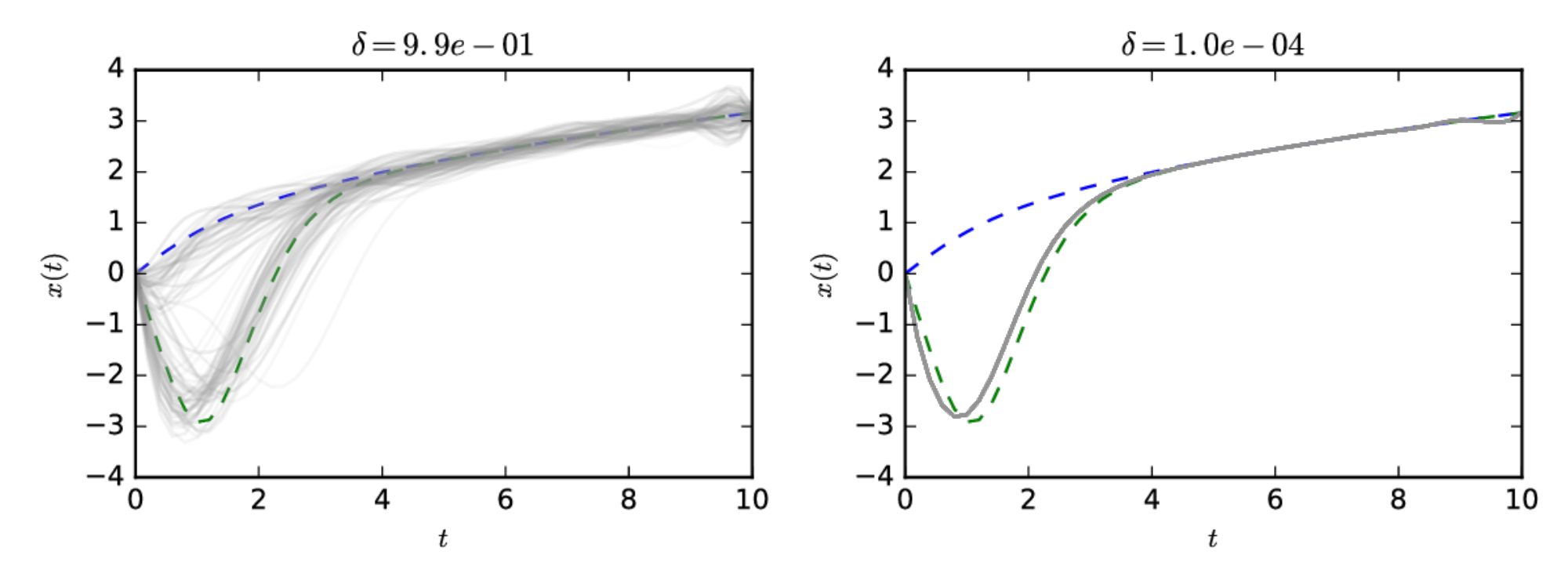}
	\caption{Posterior distribution at $n=17$, based on a Gaussian prior, with the negative boundary condition given by Eqn.~\eqref{eq:negative_bc} enforced. 
Left: $\delta = 0.99$.
Right: $\delta = 0.0001$.} \label{fig:painleve_negative_convergence}
\end{figure}

\subsection{Application to Industrial Process Monitoring}


This final application illustrates how statistical models for discretisation error can be propagated through a pipeline of computation to model how these errors are accumulated.

Hydrocyclones are machines used to separate solid particles from a liquid in which they are suspended, or two liquids of different densities, using centrifugal forces. High pressure fluid is injected into the top of a tank to create a vortex. The induced centrifugal force causes denser material to move to the wall of the tank while lighter material concentrates in the centre, where it can be extracted. They have widespread applications, including in areas such as environmental engineering and the petrochemical industry \citep{Sripriya2007}. An illustration of the operation is given in Figure~\ref{fig:hydrocyclone_sketch}.

\begin{figure}
\begin{subfigure}[b]{0.45\textwidth}
\centering
\begin{tikzpicture}
  \coordinate (O) at (0,-1);

  \begin{scope}
    \def\rx{0.31}
    \def\ry{0.15}
    \def\z{1.45}

    \path [name path = ellipse]    (0,\z) ellipse ({\rx} and {\ry});
    \path [name path = horizontal] (-\rx,\z-\ry*\ry/\z) -- (\rx,\z-\ry*\ry/\z);
    \path [name intersections = {of = ellipse and horizontal}];

    \draw[fill = gray!50, gray!50] (intersection-1) -- (0,0.5)
      -- (intersection-2) -- cycle;
    
    \draw[fill = gray!30, densely dashed] (0,\z) ellipse ({\rx} and {\ry});
  \end{scope}

  \draw (0.25,0.4) -- (0.9,0.1) node at (1.8,0.0) {more dense};
  \draw (0,0.9) -- (-0.9,0.1) node at (-1.8,0.0) {less dense};

  \filldraw (O) circle (1pt) node[below] {underflow};

  \draw[] (O) to (-1.33,1.33);
  \draw[] (O) -- (1.33,1.33);

  \draw[black, densely dashed] (-1.36,1.46) arc [start angle = 170, end angle = 10,
    x radius = 13.8mm, y radius = 3.6mm];
  \draw[black] (-1.29,1.52) arc [start angle=-200, end angle = 20,
    x radius = 13.75mm, y radius = 3.15mm];

  \draw (-1.2,2.2) -- (-0.1,1.5) node at (-1.37,2.37) {overflow};
\end{tikzpicture}
\caption{Hydrocyclone tank schematic}
\end{subfigure}
\begin{subfigure}[b]{0.45\textwidth}
\centering
\begin{tikzpicture}

\draw[ball color=blue,shading=ball, opacity = 0.2,line width=4pt] (2,0) to [out = 180, in = 0] (0,0) to [out = 180, in = 90] (-1,-1) to [out = -90, in = 180] (0,-2) to [out = 0, in = -90] (1,-1) to [out = 90, in = -45] (0.7071,-0.2929) to [out = 0, in = 180] (2,-0.2929);

\node (n1) at (2,-0.15) {};
\node (n2) at (3.5,-0.15) {input flow};
\path[->] (n2) edge (n1);

\node (n3) at (-0.5,-1) {};
\node (n4) at (0,-1.5) {};
\path[->] (n3) edge [bend right = 45] (n4);

\node (n5) at (0,-1.5) {};
\node (n6) at (0.5,-1) {};
\path[->] (n5) edge [bend right = 45] (n6);

\node (n7) at (0.5,-1) {};
\node (n8) at (0,-0.5) {};
\path[->] (n7) edge [bend right = 45] (n8);

\node (n9) at (0,-0.5) {};
\node (n10) at (-0.5,-1) {};
\path[->] (n9) edge [bend right = 45] (n10);

\end{tikzpicture} \vspace{20pt}
\caption{Cross-section (top of tank)}
\end{subfigure}
\caption{A schematic description of hydrocyclone equipment.
(a) The tank is cone-shaped with overflow and underflow pipes positioned to extract the separated contents.
(b) Fluid, a mixture to be separated, is injected at high pressure at the top of the tank to create a vortex.
Under correct operation, denser materials are directed toward the centre of the tank and less-dense materials are forced to the peripheries of the tank.}
\label{fig:hydrocyclone_sketch}
\end{figure}
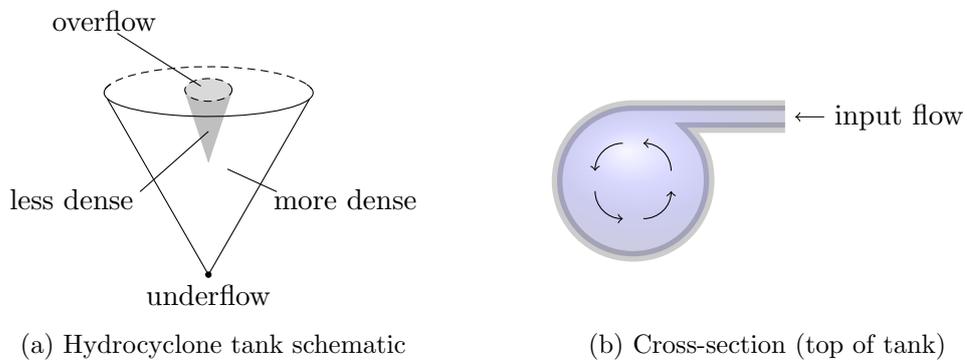 

To ensure the materials are well-separated the hydrocyclone must be moitored to allow adjustment of the input flow-rate. This is also important for safe operation, owing to the high pressures involved \citep{Bradley2013}. However, direct monitoring is impossible owing to the opaque walls of the equipment and the high interior pressure. For this purpose electrical impedance tomography (EIT) has been proposed to allow monitoring of the contents \citep{Gutierrez2000}. 

EIT is a technique which allows recovery of an interior conductivity field based upon measurements of voltage obtained from applying a stimulating current on the boundary. It is suited to this problem, as the two materials in the hydrocyclone will generally be of different conductivities. In its simplified form due to \cite{Calderon1980}, EIT is described by a linear partial differential equation similar to that in Section~\ref{sec:poisson}, but with modified boundary conditions to incorporate the stimulating currents and measured voltages:
\begin{eqnarray}
	-\nabla \cdot \left( a(t) \nabla x(t) \right) = 0  & & t \in D \nonumber \\
	a(t) \frac{\partial x}{\partial n}(t) = \left\{ \begin{array}{c} c_e \\ 0 \end{array} \right. & & \begin{array}{l} t = t^e \\ t \in \partial D \setminus \{t^e\}_{e=1}^{N_e} \end{array} \label{eq:sys_eit}
\end{eqnarray}
where $D$ denotes the domain, modelling the hydrocyclone tank, $e$ indexes the stimulating electrodes, $t_e \in \partial D$ are the corresponding locations of the electrodes on $\partial D$, $a$ is the unknown conductivity field to be determined and $\frac{\partial}{\partial n}$ denotes the derivative with respect to the outward pointing normal vector. 
The electrode $t^1$ is referred to as the \emph{reference} electrode. 
The vector $c = (c_1,\dots,c_{N_e})$ denotes the stimulation current pattern.
Several stimulation patterns were considered, denoted $c^j$, $j=1,\dots,N_j$.

The experimental data described in \cite{West2005} were considered. 
In the experiment, a cylindrical perspex tank was used with a single ring of eight electrodes.
Translation invariance in the vertical direction means that the contents are effectively a single 2D region and electrical conductivity can be modelled as a 2D field.
At the start of the experiment, a mixing impeller was used to create a rotational flow.
This was then removed and, after a few seconds, concentrated potassium chloride solution was carefully injected into the tap water initially filling the tank.
Data, denoted $y_\tau$, were collected at regular time intervals by application of several stimulation patterns $c^1,\dots,c^M$.

To formulate the statistical problem, consider parameterising the conductivity field as $a(\tau, t)$, where $\tau \in [0,T]$ is a temporal index while $t \in D$ is the spatial coordinate and $D$ is the circular domain representing the perspex tank in the experiment. 
A log-Gaussian prior was placed over the conductivity field so that $\log a$ is a Gaussian process with separable covariance function $k_a((\tau,t),(\tau',t')) := \lambda \min(\tau, \tau') \exp\left( - \frac{\norm{t - t'}^2}{2 \ell^2} \right)$ where $\ell$ is a length-scale parameter representing the anticipated spatial variation of the conductivity field and $\lambda$ is a parameter controlling the amplitude of the field. 
Here $\ell$ was fixed to $\ell=0.3$, while $\lambda = 10^{-3}$.
The problem of estimating $a$ based on data can be well-posed in the Bayesian framework \citep{Dunlop2015}.
Full details of this experiment can be found in the accompanying report \cite{Oates2017}.

Our aim is to use a PNM to account for the effect of discretisation on inferences that are made on the conductivity field.
For fixed $\tau$, a Gaussian prior was posited for $x$, with covariance $	k_x(t, t') :=  \exp\left( - \frac{\norm{t - t'}^2}{2 \ell_x^2} \right)$ where $\ell_x$ was fixed to $\ell_x = 0.3$.
The associated Bayesian PNM, a probabilistic meshless method (PMM), was described in Example \ref{ex:PMM}.

The statistical inference procedure is formulated in a pipeline of computations in Figure~\ref{fig:hydrocyclone_pipeline}. It is assumed that the desired outcome is to monitor the contents of the tank while the current contents are being mixed. 
This suggests a particle filter approach where a PMM $M_\tau$ is employed to handle the intractable likelihood $p(y_\tau | a_\tau)$ that involves the exact solution of a PDE.
The distribution of $a_\tau$ given $y_1,\dots,y_\tau$ is denoted $\pi_\tau$ an the computation $P(M_1,\dots,M_\tau)$ is Bayesian only if the particle approximation error due to the use of a particle filter is overlooked.


\begin{figure}
\centering
\resizebox{0.6\textwidth}{!}{
\begin{tikzpicture}

\tikzstyle{square}=[regular polygon,regular polygon sides=4];
\tikzstyle{info}=[draw,square,fill = black!0,minimum width=1.2cm];
\tikzstyle{meth}=[draw,square,fill = black!100,minimum width=1.2cm,text=white];
\tikzstyle{arrow}=[very thick,->];

\node at (0,0) (D1) {\dots};
\node[info] at (2,0) (I1) {};
\node[meth] at (4,0) (M1) {$\tau$};
\node[info] at (6,0) (I2) {};
\node at (8,0) (D2) {\dots};
\node[info] at (4,2) (data) {};

\node at (2,-1) (d1) {$\frac{\wrt \pi_{\tau - 1}}{\wrt \pi_0}$};
\node at (6,-1) (d2) {$\frac{\wrt \pi_{\tau}}{\wrt \pi_0}$};
\node at (5,2) (d3) {$y_\tau$};

\path[arrow] (D1) edge (I1);
\path[arrow] (I1) edge node [above] {$1$} (M1);
\path[arrow] (M1) edge (I2);
\path[arrow] (I2) edge (D2);
\path[arrow] (data) edge node [right] {$2$} (M1);

\end{tikzpicture}}

\caption{Pipeline for hydrocyclone application: The method node (black) represents the use of PMM solvers, which are incorporated into the likelihood for evolving the particles according to a Markov transition kernel.}
\label{fig:hydrocyclone_pipeline}
\end{figure}
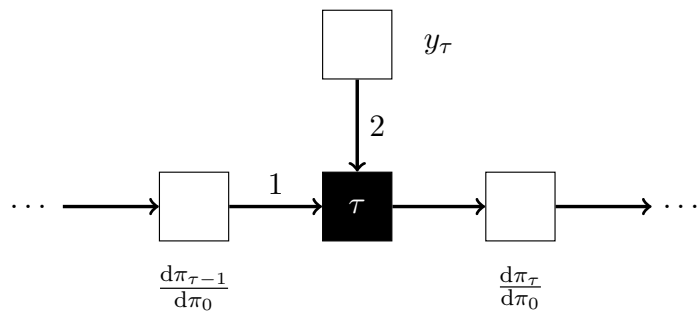

To briefly illustrate the method, Figure~\ref{fig:hydrocyclone_recovery} presents posterior means for the field $a(\tau,\cdot)$, for each post-injection time point $\tau = 1,\dots,8$. 
These are based on a particle approximation of size $P=500$, with method nodes based upon a Bayesian PNM, as in Example~\ref{ex:PMM}, with $n=119$ design points. 
The high conductivity region representing the potassium chloride solution can be seen rotating through the domain in the frames after injection, with its conductivity reducing as it mixes with the water. 
The full posterior distribution over the conductivity field is inflated as a result of explicitly modelling the discretisation error; an extensive analysis of these results will be reported in the upcoming \cite{Oates2017}.

\begin{figure}
	\includegraphics[width=\textwidth]{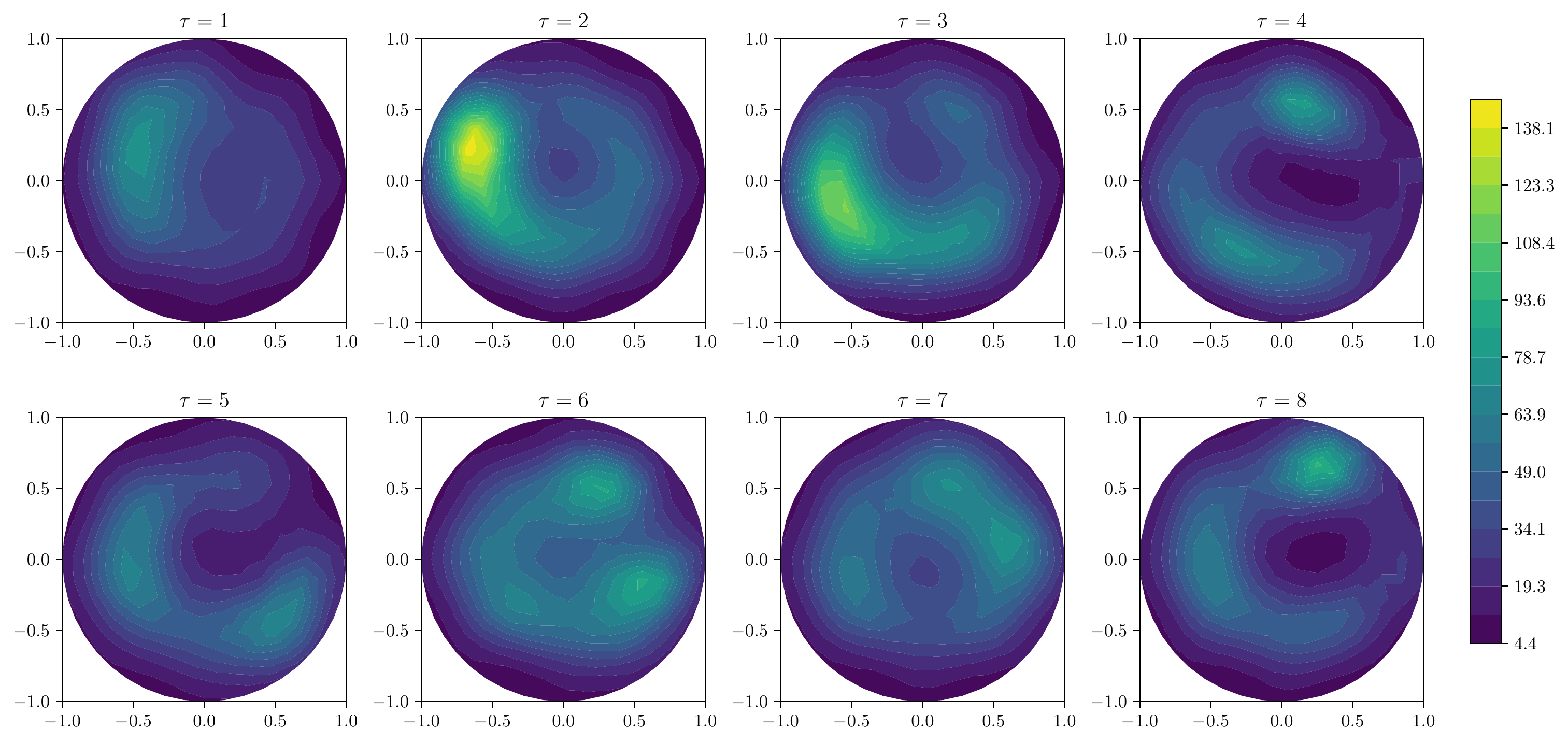}
	\caption{Mean conductivity fields recovered in the hydrocyclone experiment, for the first 8 frames post-injection.}
	\label{fig:hydrocyclone_recovery}
\end{figure}

In Figure~\ref{fig:hydrocyclone_variance}, the integrated standard-deviation $\int_D \sigma(t)\:\wrt t$ is shown for $\tau=1,\dots,8$ for both the “pipeline”, as described above, and a ``static'' approach in which no uncertainty was propagated. 
In this static approach a symmetric collocation PDE solver\footnote{Recall that the PMM has a corresponding symmetric collocation solution to the PDE as its mean function.} was used to solve the forward problem, and a separate Bayesian inversion problem was solved at each time point.
The parameters of the symmetric collocation solver were identical to those used in the PMM. 
In the left panel we observe some structural periodicity, present in both the pipeline and the static approach. 
We speculate that this may be due to the rotation of the medium causing the area of high conductivity to periodically reach an area of the domain, relative to the 8 sensors, in which it is particularly easy to recover. 
With this periodicity subtracted in the right panel, there was a clear increase in posterior uncertainty in the pipeline compared to the static approach, which is depicted. 
Temporal regularisation would usually be expected to reduce uncertainty; thus, the fact that the overall uncertainty increased with $\tau$, relative to the static formulation, demonstrates that we have quantified and propagated uncertainty due to successive discretisation of the PDE at each time point. 

\begin{figure}
	\centering
	\includegraphics[width=0.8\textwidth]{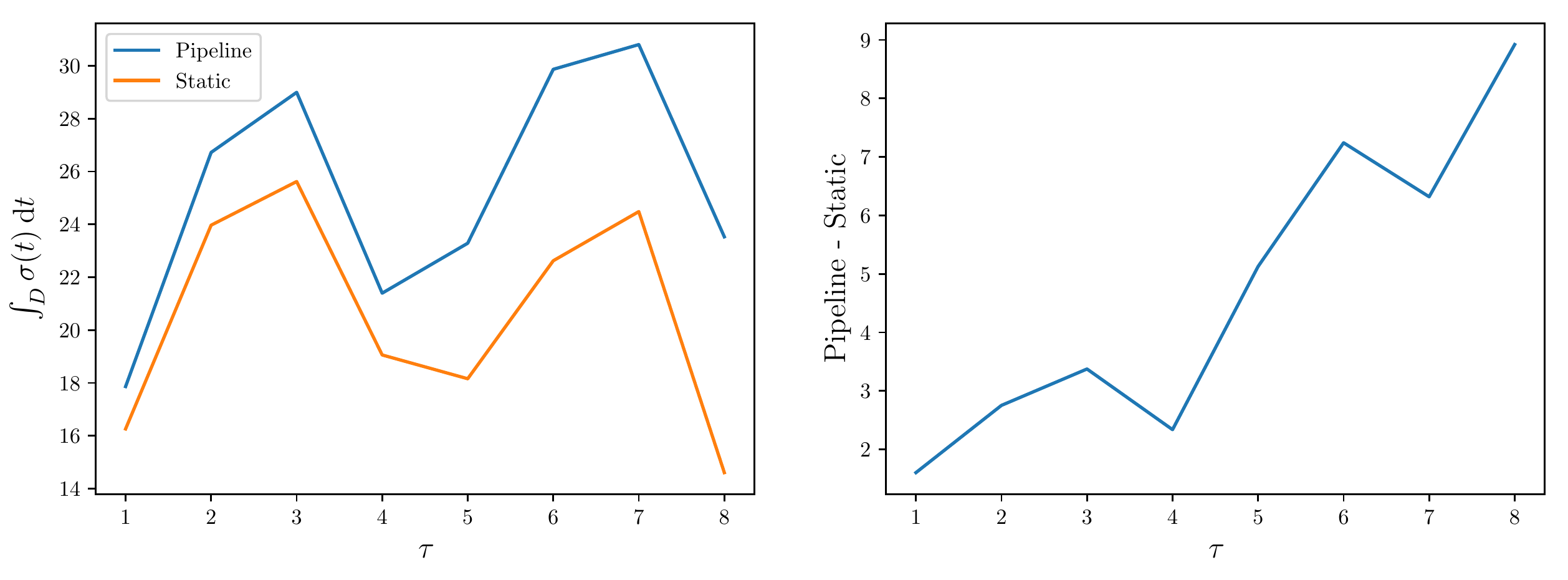}
	\caption{Left: Integrated standard-deviation over the domain, for the first 8 frames post-injection, for both the pipeline and the static approaches described in the text. Right: The difference between these two quantities.}
	\label{fig:hydrocyclone_variance}
\end{figure}

\section{Discussion}

This paper has established statistical foundations for PNMs and investigated the Bayesian case in detail.
Through connection to Bayesian inverse problems \citep{Stuart:2010gja}, we have established when Bayesian PNM can be well-defined and when the output can be considered meaningful.
The presentation touched on several important issues and a brief discussion of the most salient points is now provided.

\paragraph{Bayesian vs Non-Bayesian PNMs}

The decision to focus on Bayesian PNMs was motivated by the observation that the output of a pipeline of PNMs can only be guaranteed to admit a valid Bayesian interpretation if the constituent PNMs are each Bayesian and the prior distribution is coherent. 
Indeed, Theorem \ref{thm:markov} demonstrated that prior coherence can be established at a local level, essentially via a local Markov condition, so that Bayesian PNMs provide a extensible modelling framework as required to solve more challenging numerical tasks.
These results support a research strategy that focuses on Bayesian PNMs, so that error can be propagated in a manner that is meaningful. 

On the other hand, there are pragmatic reasons why either approximations to Bayesian PNMs, or indeed, non-Bayesian PNMs might be useful.
The predominant reason would be to circumvent the off-line computational costs that can be associated with Bayesian PNMs, such as the use of numerical disintegration developed in this research.
Recent research efforts, such as \cite{Schober:2014wt,Schober:2016uh} and \cite{Kersting2016} for the solution of ODEs, have aimed for computational costs that are competitive with classical methods, at the expense of fully Bayesian estimation for the solution of the ODE.
Such methods are of interest as non-Bayesian PNMs, but their role in pipelines of PNMs is unclear. Our contribution serves to make this explicit.

\paragraph{Computational Cost}

The present research focused on the more fundamental cost of access to the information $A(x)$, rather than the additional CPU time required to obtain the PNM output.
Indeed, numerical disintegration constituted the predominant computational cost in the applications that were reported.
However, we stress that in many challenging applications gated by discretisation error, such as occur with climate models, the fundamental cost of the information $A(x)$ will be dominant. Furthermore, the Monte Carlo methods that were employed for numerical disintegration admit substantial improvements \citep[e.g. in a similar vein to][]{Botev2012,Koskela2016}.
The objective of this paper was to establish statistical foundations that will permit the development of more sophisticated and efficient Bayesian PNMs.

\paragraph{Prior Elicitation}

Throughout this work we assumed that a belief distribution $\mu$ was provided. 
The question of \emph{whose} belief is represented in $\mu$ has been discussed by several authors and a chronology is included in the Electronic Supplement.
Of these perspectives we mention in particular \cite{Hennig:2015jf}, wherein $\mu$ is the belief of an agent that ``we get to design''.
This offers a connection to frequentist statistics, in that an agent can be designed to ensure favourable frequentist properties hold.

A robust statistics perspective is also relevant and one such approach would be to consider a generalised Bayes risk (Eq. \eqref{eq:bayes_risk}) wherein the state variable $X$ used for assessment is assumed to be drawn from a distribution $\tilde{\mu} \neq \mu$.
This offers an opportunity to derive Bayesian PNMs that are robust to certain forms of prior mis-specification.
This direction was not considered in the present paper, but has been pursued in the ACA literature for classical numerical methods \citep[see Chapter IV, Section 4 of][]{Ritter2000}.

In general, the specification of prior distributions for robust inference on an infinite-dimensional state space can be difficult.
The consistency and robustness of Bayesian inference procedures --- particularly with respect to perturbations of the prior such as those arising from numerical approximations --- in such settings is a subtle topic, with both positive \citep{CastilloNickl:2014, Doob:1949, Kleijnvanderv:2012, LeCam:1953} and negative \citep{DiaconisFreedman:1986, Freedman:1963, Owhadi2015b} results depending upon fine topological and geometric details.

In the context of computational pipelines, the challenge of eliciting a coherent prior is closely connected to the challenge of eliciting a single unified prior based on the conflicting input of multiple experts \citep{French2011,Albert2012}.

\paragraph{Consistent Estimation}

The present paper focused on foundations.
Further methodological work will be required to establish sufficient conditions for when $B(\mu,A_n(x^\dagger))$ collapses to an atom on a single element $q^\dagger = Q(x^\dagger)$ representing the data-generating QoI in the limit as the amount of information, $n$, is increased.
There are two questions here; (i) when is $q^\dagger$ identifiable from the given information, and (ii) at what rate does $B(\mu,A_n(x^\dagger))$ concentrate on $q^\dagger$.

\paragraph{Generalisation and Extensions}

Two more directions are highlighted for extension of this work.
First, note that in this paper the information operator $A : \mathcal{X} \rightarrow \mathcal{A}$ was treated as a deterministic object.
However, in some applications there is auxiliary randomness in the acquisition of information.
For our integration example, nodes $t_i$ might arise as random samples from a reference distribution on $[0,1]$.
Or, observations $x(t_i)$ themselves might occur with measurement error, for example due to finite precision arithmetic.
Then a more elaborate model $A \colon \mathcal{X} \times \Omega \rightarrow \mathcal{A}$ would be required, where $\Omega$ is a probability space that injects randomness into the information operator.
This is the setting of, for instance, randomised quasi-Monte Carlo methods.
Future work will extend the framework of PNMs to include randomised information operators of this kind.

As a second direction, recall that in an adaptive algorithm the choice of the information is made in an iterative procedure that is informed by the information observed up to that point.
For the canonical illustration in Example \ref{ex:optimalinfo} and its generalisations discussed there, it can be proven that adaptive algorithms do not out-perform non-adaptive algorithms in average case error \citep{Lee1986}.
However, outside this setting adaptation can be beneficial and should be investigated in the context of Bayesian PNM.

\paragraph{Connection with Probabilistic Programming}

The central goal of \emph{probabilistic programming} (PP) is to automate statistical computation, through symbolic representation of statistical objects and operations on those objects.
The formalism of pipelines as graphical models presented in this work can be compared to similar efforts to establish PP languages \citep{Goodman2012}. 
For instance, a method node in a pipeline can be related to a \emph{monad} aggregating several distributions into a single output distribution \citep{Scibior:2015}.
An important challenge in PP is the automation of computing conditional distributions \citep{Shan2017}.
Numerical disintegration and extensions thereof might be of independent interest to this field \citep[e.g.\ extending][]{Wood2014}.

\paragraph{Acknowledgements}

CJO was supported by the Australian Research Council (ARC) Centre of Excellence for Mathematical and Statistical Frontiers.
TJS was supported by the Excellence Initiative of the German Research Foundation (DFG) through the Free University of Berlin.
MG was supported by the Engineering and Physical Sciences (EPSRC) grants EP/J016934/1, EP/K034154/1, an EPSRC Mathematical Sciences Established Career Research Fellowship and a Lloyds Register Foundation grant for Programme on Data-Centric Engineering.
This material was based upon work partially supported by the National Science Foundation under Grant DMS-1127914 to the Statistical and Applied Mathematical Sciences Institute. Any opinions, findings, and conclusions or recommendations expressed in this material are those of the author(s) and do not necessarily reflect the views of the National Science Foundation.

The authors are grateful to Amazon for the provision of AWS credits and to the authors of the \verb+Eigen+ and \verb+Eigency+ libraries in Python.

\begin{appendices}

\section*{Appendices}
\section{Proofs} \label{appendix:proofs}

\begin{proof}[Proof of Theorem \ref{thm:optimal_information}]
The following observation will be required; the joint density of $X$ and $A = A(X)$ can be expressed in two ways:
\begin{equation}
	\label{eq:prod_eqn}
	\delta(A(x))(\rd a) \mu(\rd x) = \mu^a(\rd x) A_{\#} \mu(\rd a)
\end{equation}
which holds almost everywhere from the definition of a disintegration $\{\mu^a\}_{a \in \mathcal{A}}$.
Note that our integrability assumption justifies the interchange of integrals from Fubini's theorem.

The Bayes risk for a Bayesian PNM $M_{\text{BPNM}} = (A,B_{\text{BPNM}})$, $B_{\text{BPNM}}(\mu,a) = Q_{\#} \mu^a$, can be expressed as:
\begin{align*}
	R(\mu,M_{\text{BPNM}})
	& = \int r(x,B(\mu,A(x))) \mu(\rd x) \\
	& = \iint L( Q(x) , q ) Q_{\#} \mu^{A(x)}(\rd q) \mu(\rd x) \quad \text{(since $M$ Bayesian)} \\
	& = \iiint L( Q(x) , q ) Q_{\#} \mu^a(\rd q) \delta(A(x))(\rd a) \mu(\rd x) \\
	& = \iiint L( Q(x) , Q(x') ) \mu^a (\rd x') \mu^a (\rd x) A_{\#} \mu(\rd a) \quad \text{(from Eq.~\eqref{eq:prod_eqn})}
\end{align*}
On the other hand, let 
$$
b(a) \in \argmin_{q \in \mathcal{Q}} \int L(Q(x) , q) \mu^a(\wrt x)
$$
be a Bayes act.
Then the Bayes risk associated with such a method $M_{\text{BR}} = (A,B_{\text{BR}})$, $B_{\text{BR}}(\mu,a) = \delta(b(a))$, can be expressed as:
\begin{align*}
	R(\mu,M_{\text{BR}})
	& = \int L( Q(x) , b(A(x)) ) \mu(\rd x) \\
	& = \iint L( Q(x) , b(a) ) \delta(A(x))(\rd a) \mu(\rd x) \\
	& = \iint L( Q(x) , b(a) ) \mu^a(\rd x) A_{\#} \mu(\rd a)  \quad \text{(from Eq.~\eqref{eq:prod_eqn})}  
\end{align*}
Next we use the inner product structure on $\mathcal{Q}$ and the form of the loss function as $L(q,q') = \|q - q'\|_{\mathcal{Q}}^2$ to argue that $R(\mu,M_{\text{BPNM}}) = 2R(\mu,M_{\text{BR}})$, which in turn implies that the optimal information $A_\mu$ for Bayesian PNM and $A_\mu^*$ for ACA are identical.

For this final step, fix $a \in \mathcal{A}$ and denote the random variables $Q^a(X) = Q(X) - b(a)$ that are induced according to $X \sim \mu^a$.
Denote by $\tilde{Q}^a$ an independent copy of $Q^a$ generated from $\tilde{X} \sim \mu^a$.
The notation $\mathbb{E}$ will be used to refer to the expectation taken over $X,\tilde{X}$.
Then we have 
\begin{align*}
	Q(X) - Q(\tilde{X})
	& = (Q(X) - b(a)) - (Q(\tilde{X}) - b(a)) \\
	& = Q^a(X) - \tilde{Q}^a(\tilde{X})
\end{align*}
and moreover, from Theorem \ref{thm:bayes_mean} the posterior mean of $Q(X)$ is $b(a)$ and thus $\mathbb{E}[Q^a] = \mathbb{E}[\tilde{Q}^a] = 0$.
Then
\begin{align*}
R(\mu,M_{\text{BPNM}}) & = \int \mathbb{E} [\|Q^a - \tilde{Q}^a\|_{\mathcal{Q}}^2] A_{\#} \mu (\rd a) \\
	& = \int \mathbb{E} [\|Q^a\|_{\mathcal{Q}}^2 - 2 \langle Q^a , \tilde{Q}^a \rangle_{\mathcal{Q}} + \|\tilde{Q}_A^a\|_{\mathcal{Q}}^2] A_{\#} \mu (\rd a) \\
	& = 2 \int \mathbb{E} [\|Q^a\|_{\mathcal{Q}}^2] A_{\#} \mu (\rd a)  \quad \text{(since $\mathbb{E}[Q^a] = 0$ and $Q^a \ci \tilde{Q}^a$)} \\
	& = 2 R(\mu,M_{\text{BR}})
\end{align*}
as required.
\end{proof}

\begin{proof}[Proof of Theorem \ref{thm:rcp_contraction}]
Fix $f \in \mathcal{F}$ and $a \in \mathcal{A}$.
Then:
\begin{align*}
	\mu_\delta^a(f)
	& = \frac{1}{Z_\delta^a} \int f(x) \phi\left( \frac{\|A(x) - a\|_{\mathcal{A}}}{\delta} \right) \mu(\mathrm{d}x) \\
	& = \frac{1}{Z_\delta^a} \iint f(x) \phi\left( \frac{\|\tilde{a} - a\|_{\mathcal{A}}}{\delta} \right) \mu^{\tilde{a}}(\mathrm{d}x) A_\# \mu(\mathrm{d}\tilde{a}) \quad \text{(from Eq.~\eqref{eq:prod_eqn})} \\
	& = \frac{1}{Z_\delta^a} \int \phi\left( \frac{\|\tilde{a} - a\|_{\mathcal{A}}}{\delta} \right) \mu^{\tilde{a}}(f) A_\# \mu(\mathrm{d}\tilde{a}) \\
	& = \int \mu^{\tilde{a}}(f) A_\# \mu_\delta^a(\mathrm{d}\tilde{a}).
\end{align*}
Thus
\begin{align}
	|\mu_\delta^a(f) - \mu^a(f)|
	& = \left| \int [\mu^{\tilde{a}}(f) - \mu^a(f)] A_\# \mu_\delta^a(\mathrm{d}\tilde{a}) \right| \notag \\
	& \leq C_\mu^\alpha \|f\|_{\mathcal{F}} \int \|\tilde{a} - a\|_{\mathcal{A}}^\alpha A_\# \mu_\delta^a(\mathrm{d}\tilde{a}) \quad \text{(Assumption \ref{assumption:lipschitz_rcp})} . \label{eqn: intermediate step}
\end{align}
Now consider the random variable 
\begin{align}
R & \defeq \frac{\|A(X) - a\|_{\mathcal{A}}}{\delta} \label{eq: R def}
\end{align} 
induced from $X \sim \mu$.
The existence of a continuous and positive density $p_A$ implies that $R$ also admits a density on $[0,\infty)$, denoted $p_{R,\delta}$.
The fact that $p_A$ is uniform on an infinitesimal neighbourhood of $a$ implies that $p_{R,\delta}(r)$ is proportional to the surface area of a hypersphere of radius $\delta r$ centred on $a \in \mathcal{A}$:
\begin{equation}
	p_{R.\delta}(r) = \frac{2 \pi^{n/2}}{\Gamma(\frac{n}{2})} (\delta r)^{n-1} (p_A(a) + o(1)) \label{pR eqn}
\end{equation} 
This is valid since $\mathcal{A}$ is open and the hypersphere will be contained in $\mathcal{A}$ for $r$ sufficiently small.
Eq.~\eqref{eqn: intermediate step} can then be evaluated:
\begin{align}
	\int \|\tilde{a} - a\|_{\mathcal{A}}^\alpha A_\# \mu_\delta^a(\mathrm{d}\tilde{a})
	& = \frac{\int \|\tilde{a} - a\|_{\mathcal{A}}^\alpha \phi\left( \frac{\|\tilde{a} - a\|_{\mathcal{A}}}{\delta} \right) A_\# \mu(\mathrm{d}\tilde{a})}{\int \phi\left( \frac{\|\tilde{a} - a\|_{\mathcal{A}}}{\delta} \right) A_\# \mu(\mathrm{d}\tilde{a})} \notag \\
	& = \delta^\alpha \frac{\int r^\alpha \phi(r) p_{R,\delta}(r) \mathrm{d}r}{\int \phi(r) p_{R,\delta}(r) \mathrm{d}r} \quad \text{(change of variables; Eq. \ref{eq: R def}).} \label{eqn: ratio of ints} \\
	& \xrightarrow[\delta \downarrow 0]{} \frac{\int r^{\alpha + n - 1} \phi(r) \mathrm{d}r}{\int r^{n - 1} \phi(r) \mathrm{d}r} \quad \text{(from Eq. \ref{pR eqn})} \nonumber \\
	& = \frac{C_\phi^\alpha}{C_\phi^0} \quad \text{($< \infty$ from Assumption \ref{varphi_assumption}).} \nonumber
\end{align}
Thus, for $\delta$ sufficiently small, Eq.~\eqref{eqn: ratio of ints} can be bounded above by $\delta^\alpha (1 + \bar{C}_\phi^\alpha)$ where $\bar{C}_\phi^\alpha \defeq C_\phi^\alpha / C_\phi^0$ and ``$1$'' is in this case an arbitrary positive constant.
This establishes the upper bound
\begin{align*}
	|\mu_\delta^a(f) - \mu^a(f)| & \leq C_\mu^\alpha (1 + \bar{C}_\phi^\alpha) \|f\|_{\mathcal{F}} \; \delta^\alpha 
\end{align*}
for $\delta$ sufficiently small and completes the proof.
\end{proof}

\begin{proof}[Proof of Theorem \ref{thm:markov}]
To reduce the notation, suppose that the random variables $Y_1,\dots,Y_J$ admit a joint density $p(y_1,\dots,y_J)$,
However, we emphasise that existence of a density is not required for the proof to hold.
To further reduce notation, denote $y_{a:b} = (y_a,\dots,y_b)$.

The output of the computation $P(M_1,\dots,M_n)$ was defined algorithmically in Definition \ref{def:bayes_computation} and illustrated in Example \ref{ex:propaga}.
Our aim is to show that this algorithmic output coincides with the distribution $(Q_n)_\# \mu^a$ on $\mathcal{Q}_n$, which is identified in the present notation with $p(y_J | y_{1:I})$.

For $j \in \{I+1,\dots,J\}$, the coherence condition on $Y_1,\dots,Y_J$ translates into the present notation as $p(y_j | y_{1:j-1}) = p(y_j | y_{\pi(j)})$.
This allows us to deduce that:
\begin{align*}
p(y_J | y_{1:I})
& = \int \dots \int p(y_{I+1:J} | y_{1:I}) \rd y_{I+1:J-1} \\
& = \int \dots \int \prod_{j=I+1}^J p(y_j | y_{1:j-1}) \rd y_{I+1:J-1} \\
& = \int \dots \int \prod_{j=I+1}^J p(y_j | y_{\pi(j)}) \rd y_{I+1:J-1} .
\end{align*}
The right hand side is recognised as the output of the computation $P(M_1,\dots,M_n)$, as defined in Definition \ref{def:bayes_computation}.
This completes the proof.
\end{proof}

\end{appendices}

\bibliographystyle{abbrvnat}
\bibliography{bibliography}

\setcounter{section}{0}
\renewcommand{\thesection}{S\arabic{section}}
\renewcommand*{\theHsection}{S\thesection}


\title{Electronic Supplement}
\author{to the paper \emph{Bayesian Probabilistic Numerical Methods}}
\date{}
\maketitle


\section{Philosophical Status of the Belief Distribution}

The aim of this section is to discuss in detail the semantic status of the belief distribution $\mu$ in a probabilistic numerical method (PNM).
In Section \ref{sec:history} we survey historical work on this topic, while in Section \ref{sec:contemporary} more recent literature is covered.
Then in Section \ref{sec:philosophy} we highlight some philosophical objections and their counter-arguments.

\subsection{Historical Precedent} \label{sec:history}

The use of probabilistic and statistical methods to model a deterministic mathematical object can be traced back to \cite{Poincare1912}, who used a stochastic model to construct interpolation formulae. 
In brief, Poincar\'{e} formulated a polynomial 
\begin{equation*}
	f(x) = a_0 + a_1 x + \dots + a_m x^m
\end{equation*}
whose coefficients $a_i$ were modelled as independent Gaussian random variables.
Thus Poincar\'{e} in effect constructed a Gaussian measure over the Hilbert space with basis $\{1,x,\dots,x^m\}$.
This pre-empted \cite{Kimeldorf1970,Kimeldorf1970a} and others, which associated spline interpolation formulae to the means of Gaussian measures over Hilbert spaces.

The first explicit statistical model for numerical error (of which we are aware) was in the literature on rounding error in the numerical solution of ordinary differential equations (ODE), as summarised in \cite{Hull1966}.
Therein it was supposed that rounding, by which we mean representation of a real number
\begin{equation*}
	x = 0.a_1 a_2 a_3 a_4 \dots \quad \in [0,1]
\end{equation*}
in a truncated form
\begin{equation*}
	\hat{x} = 0.a_1 a_2 a_3 a_4 \dots a_n ,
\end{equation*}
is such that the error $e = x - \hat{x}$ can be reasonably modelled by a uniform random variable on $[-5 \times 10^{-(n+1)} , 5 \times 10^{-(n+1)}]$.
This implies a distribution $\mu$ over the unknown value of $x$ given $\hat{x}$.
The contribution of \cite{Hull1966} and others was to replace the last digit $a_n$, in each stored number that arises in the numerical solution of an ODE, with a uniformly chosen element of $\{0,\dots,9\}$.
This performs approximate propagation of the \emph{numerical uncertainty} due to rounding error through further computation and, in their case, induces a distribution over the solution space of the ODE.
Note that this work focused on rounding error, rather than the (time) discretisation error that is intrinsic to numerical ODE solvers; this could reflect the limited precision arithmetic that was available from the computer hardware of the period.

\cite{Larkin1972} was an important historical paper for PNMs, being the first to set out the modern statistical agenda for PNMs:
\begin{quote}
In any particular problem situation we are given certain specific properties of the solution, e.g.\ a finite number of ordinate or derivative values at fixed abscissae.
If we can assume no more than this basic information we can conclude only that our required solution is a member of that class of functions which possesses the given properties - a tautology which is unlikely to appeal to an experimental scientist!
Clearly, we need to be given, or to assume, extra information in order to make more definite statements about the required function.

Typically, we shall assume general properties, such as continuity or non-negativity of the solution and/or its derivatives, and use the given specific properties in order to assist in making a selection from the class $K$ of all functions possessing the assumed general properties.
We shall choose $K$ either to be a Hilbert space or to be simply related to one. 
\end{quote}
This description defines a set $K$ of permissible functions, rather than an explicit distribution over $K$, but it is clear that Larkin envisaged numerical analysis as an instance of statistical estimation:
\begin{quote}
In the present approach, an \textit{a priori} localisation is achieved effectively by making an assumption about the relative likelihoods of elements of the Hilbert space of possible candidates for the solution to the original problem.
Among other things, this permits, at least in principle, the derivation of joint probability density functions for functionals on the space and also allows us to evaluate confidence limits on the estimate of a required functional (in terms of given values of other functionals) without any extra information about the norm of the function in question. 
\end{quote}

\noindent Later, \cite{Diaconis1988} re-iterated this argument for the construction of $K$ more explicitly, considering numerical integration of the function
\begin{equation*}
	f(x) = \exp\left\{ \cosh\left( \frac{x + x^2 + \cos(x)}{ 3 + \sin(x^3)} \right) \right\} \;.
\end{equation*}
over the unit interval.
In particular, Diaconis asked:
\begin{quote}
	``What does it mean to `know' a function?''
	The formula says some things (e.g.\ $f$ is smooth, positive and bounded by $20$ on $[0,1]$) but there are many other facts about $f$ that we don't know (e.g.\ is $f$ monotone, unimodal or convex?)
\end{quote}
This argument was provided as justification for belief distributions that encode certain basic features, such as the smoothness of the integrand.
The belief distributions that were then considered in Diaconis' paper were Gaussian distributions on $K$.
Diaconis, as well as \cite{Larkin1972,Kadane:1983ww}, observed that some classical numerical methods are Bayes rules in this context.

The arguments of these papers are intrinsic to modern PNMs.
However, the associated theoretical analysis of computation under finite information has proceeded outside of statistics, in the applied mathematical literature, where it is usually presented without a statistical context.
That research is reviewed next.

\subsection{Contemporary Outlook} \label{sec:contemporary}

The mathematical foundations of computation based on finite information are established in the field of \emph{information-based complexity} (IBC).
The monograph of \cite{Traub1988} presents the foundations of IBC.
In brief, the starting point for IBC is the mantra that 
\begin{quote}
To compute fast you need to compute with partial information ($\sim$ Houman Owhadi, SIAM UQ 2016)
\end{quote}
This motivates the search for optimal approximations based on finite information, in either the worst-case or average-case sense of optimal.
The particular development of PNMs that we presented in the main text is somewhat aligned to \emph{average-case analysis} (ACA) and we focus on that literature in what follows.

Among the earliest work on ACA, \cite{Suldin1959,Suldin1960} studied numerical integration and $L_2$ function approximation in the setting where $\mu$ was induced from the Weiner process, with a focus on optimal linear methods.
Later, \cite{Sacks1970} moved from analysis with fixed $\mu$ to analysis over a class of $\mu$ defined by the smoothness properties of their covariance kernels.
At the same time \cite{Kimeldorf1970,Kimeldorf1970a} established optimality properties of splines in reproducing kernel Hilbert spaces in the ACA context.
\cite{Kadane1985,Diaconis1988} discussed the connection between ACA and Bayesian statistics.
A general framework for ACA was formalised in the IBC monograph of \cite{Traub1988}, while \cite{Ritter2000} provides a more recent account.

Game theoretic arguments have recently been explored in \cite{Owhadi2015}, who argued that the optimal prior for probabilistic meshless methods \citep{Cockayne:2016ts} is a particular Gaussian measure under a game theoretic framework where the energy norm is the loss function. 
This provides one route to the specification of default or \emph{objective} priors for PNMs which deserves further exploration in general.

The question of  ``whose'' belief is captured in $\mu$ was addressed in \cite{Hennig:2015jf}, where it was argued that the prior information in $\mu$ represents that of a hypothetical agent (numerical analyst) which 
\begin{quote}
[$\dots$] we are allowed to design ($\sim$ Michael Osborne, personal correspondence, 2016).
\end{quote}
This represents a more pragmatic approach to the design of PNM.

\subsection{Paradise Lost?} \label{sec:philosophy}

Typical numerical algorithms contain several different sources of discretisation error.
Consider the solution of the wave equation:
A standard finite element method involves both spatial and temporal discretisations, a series of numerical quadrature problems, as well as the use of finite precision arithmetic for all numerical calculations.
Yet, decades of numerical analysis have led to highly optimised computer codes such that these methods can be routinely used.
To develop PNM for solution of the wave equation, which accounts for each separate source of discretisation error, is it required to unpick and reconstruct such established numerical algorithms?
This would be an unattractive prospect that would detract from further research into PNMs.

Our view is that there is a choice for which discretisation errors to model.
In practice the PNMs implemented in this work were run on floating point precision machines, yet we did not model rounding error in their output.
This was because, in our examples, floating point error is insignificant compared to discretisation error and so we chose not to model it.
This is in line with the view that a model is a useful simplification of the real world.

\section{Existence of Non-Randomised Bayes Rule}

In this section we recall an argument for the general existence of non-randomised Bayes rules, that was stated without proof in the main text.
Sufficient conditions for Fubini's theorem to hold are assumed.

\begin{proposition}
	Let $\mathfrak{B}(A)$ be non-empty. 
	Then $\mathfrak{B}(A)$ contains a classical numerical method of the form $B(\mu,a) = \delta \circ b(a)$ where $b(a)$ is a Bayes act for each $a \in \mathcal{A}$.
\end{proposition}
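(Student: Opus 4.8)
The plan is to collapse the Bayes risk onto a pointwise minimization over $\mathcal{Q}$ using the disintegration $\{\mu^a\}_{a\in\mathcal{A}}$, exactly as in the proof of Theorem~\ref{thm:optimal_information}. First I would introduce the \emph{posterior expected loss}
$$\rho(a,q) := \int_{\mathcal{X}} L(Q(x),q)\,\mu^a(\mathrm{d}x),$$
which is jointly measurable under the standing integrability assumptions, and then use the disintegration identity \eqref{eq:prod_eqn} together with Fubini's theorem to rewrite, for any belief update operator $B$,
$$R(\mu,(A,B)) = \int_{\mathcal{A}}\Big(\int_{\mathcal{Q}}\rho(a,q)\,B(\mu,a)(\mathrm{d}q)\Big)A_{\#}\mu(\mathrm{d}a).$$
Since $\int_{\mathcal{Q}}\rho(a,q)\,\nu(\mathrm{d}q)\ge\rho^{*}(a):=\inf_{q\in\mathcal{Q}}\rho(a,q)$ for every $\nu\in\mathcal{P}_{\mathcal{Q}}$, this immediately gives the lower bound $R(\mu,(A,B))\ge\int\rho^{*}(a)\,A_{\#}\mu(\mathrm{d}a)$ for all $B$, with equality precisely when $B(\mu,a)$ is concentrated on $\operatorname{argmin}_{q}\rho(a,q)$ for $A_{\#}\mu$-almost all $a$.

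Next I would exploit the hypothesis $\mathfrak{B}(A)\ne\emptyset$ to pin down the infimum. Fix $B_{0}\in\mathfrak{B}(A)$ and set $\rho_{0}(a):=\int_{\mathcal{Q}}\rho(a,q)\,B_{0}(\mu,a)(\mathrm{d}q)\ge\rho^{*}(a)$. A short argument by contradiction shows $\rho_{0}=\rho^{*}$ for $A_{\#}\mu$-a.a.\ $a$: if $\rho_{0}>\rho^{*}$ on a set of positive $A_{\#}\mu$-measure, then one could strictly lower the Bayes risk of $B_{0}$ by shifting its mass towards the minimizers (or, where the infimum is not attained, towards measurably chosen near-minimizers), contradicting optimality of $B_{0}$. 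Consequently $\inf_{B'}R(\mu,(A,B'))=\int\rho^{*}(a)\,A_{\#}\mu(\mathrm{d}a)$, the measure $B_{0}(\mu,a)$ is supported on $\{q:\rho(a,q)=\rho^{*}(a)\}$ for $A_{\#}\mu$-a.a.\ $a$, and in particular this set --- the set of Bayes acts at $a$ --- is non-empty there.

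Then I would produce the deterministic rule by measurable selection. The multifunction $a\mapsto\operatorname{argmin}_{q}\rho(a,q)$ is measurable with non-empty values $A_{\#}\mu$-a.e., so the Kuratowski--Ryll-Nardzewski selection theorem (applied under whatever topological regularity the spaces $(\mathcal{X},\Sigma_{\mathcal{X}})$, $(\mathcal{A},\Sigma_{\mathcal{A}})$, $(\mathcal{Q},\Sigma_{\mathcal{Q}})$ are assumed to enjoy, e.g.\ Polish as in Theorem~\ref{thm:Chang}) yields a measurable $b:\mathcal{A}\to\mathcal{Q}$ with $b(a)$ a Bayes act for $A_{\#}\mu$-a.a.\ $a$ (defined arbitrarily on the exceptional null set). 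Setting $B(\mu,a)=\delta(b(a))$ then gives a valid belief update operator with
$$R(\mu,(A,B))=\int_{\mathcal{A}}\rho(a,b(a))\,A_{\#}\mu(\mathrm{d}a)=\int_{\mathcal{A}}\rho^{*}(a)\,A_{\#}\mu(\mathrm{d}a)=\inf_{B'}R(\mu,(A,B')),$$
so $B\in\mathfrak{B}(A)$ and has the form \eqref{eq:classic_NM}, as required.

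I expect the main obstacle to be measure-theoretic bookkeeping rather than any conceptual difficulty: establishing joint measurability of $\rho$, measurability of $a\mapsto\int_{\mathcal{Q}}\rho(a,q)\,B(\mu,a)(\mathrm{d}q)$ and of $a\mapsto\rho^{*}(a)$, and invoking a measurable selection theorem at the required level of generality. If one additionally assumes $\rho(a,\cdot)$ is lower semicontinuous with compact sublevel sets --- a mild regularity condition on $L$ --- then $\rho^{*}(a)$ is attained for every $a$, the argmin is closed-valued, and the selection step becomes entirely standard; in that case the role of the hypothesis $\mathfrak{B}(A)\ne\emptyset$ reduces to ensuring that the Bayes posterior risk $\int\rho^{*}\,\mathrm{d}A_{\#}\mu$ is finite.
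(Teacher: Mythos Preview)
Your argument is correct and complete up to the measurable-selection technicalities you already flag. It is, however, a genuinely different route from the paper's own proof.

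The paper argues via the classical ``randomized rules are mixtures of deterministic rules'' observation: any $B^{*}\in\mathfrak{B}(A)$ that is not of the form $\delta\circ b$ can be represented as a distribution $\pi$ over deterministic rules $b$, whence Fubini gives $R(\mu,(A,B^{*}))=\int R(\mu,(A,\delta\circ b))\,\pi(\mathrm{d}b)$; an average cannot be strictly below every term, so some deterministic rule already attains the minimum Bayes risk. This is a two-line argument, but it leans on the (tacit) representation of a Markov kernel $a\mapsto B^{*}(\mu,a)$ as a mixture of measurable maps $b\colon\mathcal{A}\to\mathcal{Q}$, which itself hides a measurable-selection step of the same flavour as the one you invoke. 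The paper's proof also stops at $\mathfrak{B}(A)\cap\mathfrak{C}\ne\emptyset$ and does not explicitly verify that the resulting $b(a)$ is a Bayes act pointwise, though that follows by the standard ``improve on a positive-measure set'' argument you give.

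Your approach instead disintegrates the Bayes risk over $\mathcal{A}$, reduces the problem to pointwise minimisation of the posterior expected loss $\rho(a,\cdot)$, and then selects a measurable minimiser. What you gain is a constructive identification of the optimal deterministic rule as a pointwise Bayes act, together with the explicit characterisation $\inf_{B'}R(\mu,(A,B'))=\int\rho^{*}(a)\,A_{\#}\mu(\mathrm{d}a)$, at the cost of a somewhat longer write-up and an explicit appeal to Kuratowski--Ryll-Nardzewski. Both proofs ultimately need a selection argument; yours simply surfaces it, while the paper's buries it inside the mixture representation.
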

\begin{proof}
	Let $\mathfrak{C}$ be the set of belief update operators of the classical form $B(\mu,a) = \delta \circ b(a)$.
	Suppose there exists a belief update operator $B^* \in \mathfrak{B}(A) \setminus \mathfrak{C}$.
	Then $B^*$ can be characterised as a non-atomic distribution $\pi$ over the elements of $\mathfrak{C}$.
	Its risk can be computed as:
	\begin{align*}
		R(\mu,(A,B^*))
		& = \int r(Q(x) , B^*(\mu,A(x))) \mu(\mathrm{d}x) \\
		& = \iint L(Q(x) , b(A(x))) \pi(\mathrm{d}b) \mu(\mathrm{d}x) \\
		& = \int R(\mu,(A,\delta \circ b)) \pi(\mathrm{d}b) .
	\end{align*}
	If we had $R(\mu,(A,B^*)) < R(\mu,(A,\delta \circ b))$ for all $\delta \circ b \in \mathfrak{C}$ we would have a contradiction, so it follows that $\mathfrak{B}(A) \cap \mathfrak{C}$ is non-empty.
	This completes the proof.
\end{proof}

\section{Optimal Information: A Counterexample} \label{sec:optimal_counterexample}

In this section we demonstrate that the optimal information $A_\mu$ for Bayesian PNM and the optimal information $A_\mu^*$ from average case analysis are different in general.

Let $\mathcal{X} = \{\spadesuit,\diamondsuit,\heartsuit,\clubsuit\}$ be a discrete set, with quantity of interest $Q(x) = 1[x = \spadesuit]$ and information operator $A(x) = 1[x \in S]$ so that $\mathcal{Q} = \mathcal{A} = \set{0, 1}$. 
In particular, $\mathcal{Q}$ is not a vector space and hence not an inner product space as specified in Theorem~\ref{thm:optimal_information}.

Consider two possible choices, $S = \{\spadesuit,\diamondsuit\}$ and $S = \{\spadesuit,\diamondsuit,\heartsuit\}$. 
Assume a uniform prior over $\mathcal{X}$. 
Consider the 0-1 loss function $L(q,q') = 1[q \neq q']$.
It will be shown that ACA optimal information for this example can be based on either $S = \{\spadesuit,\diamondsuit\}$ or $S = \{\spadesuit,\diamondsuit,\heartsuit\}$ whereas PNM optimal information must be based on $S = \{\spadesuit,\diamondsuit,\heartsuit\}$.
Thus Bayesian PNM optimal information $A_\mu$ and ACA optimal information $A_\mu^*$ need not coincide in general.

The classical case considers a method of the form $M_{\text{BR}} = (A,B_{\text{BR}})$, $B_{\text{BR}} = \delta \circ b$, where
\begin{equation*}
b(a) = 1[a = 0] c_0 + 1[a = 1] c_1
\end{equation*}
for some $c_0,c_1 \in \{0,1\}$.
The Bayes risk is
\begin{align*}
R(\mu,M_{\text{BR}}) & = \frac{1}{4} \sum_{x \in \{\spadesuit,\diamondsuit,\heartsuit,\clubsuit\}} 1[x \notin S] \: L(c_0,1[x=\spadesuit]) + 1[x \in S] \: L(c_1,1[x=\spadesuit]) .
\end{align*}

\paragraph{Case of $S = \{\spadesuit,\diamondsuit\}$:} 
We have
\begin{align*}
	4 \: R(\mu,M_{\text{BR}}) & = L(c_1,1) + L(c_1,0) + L(c_0,0) + L(c_0,0) \\
	& = 1[c_1 = 0] + 1[c_1 = 1] + 2 \times 1[c_0 = 1]
\end{align*}
which is minimised by $c_1 \in \{0, 1\}$ and $c_0 = 0$ to obtain a minimum Bayes risk of $\frac{1}{4}$.

\paragraph{Case of $S = \{\spadesuit,\diamondsuit,\heartsuit\}$:} 
We have
\begin{align*}
4 \: R(\mu,M_{\text{BR}}) & = L(c_1,1) + L(c_1,0) + L(c_1,0) + L(c_0,0) \\
& = 1[c_1 = 0] + 2 \times 1[c_1 = 1] + 1[c_0 = 1]
\end{align*}
which is minimised by $c_0 = 0$ and $c_1 = 0$ to again obtain a minimum Bayes risk of $\frac{1}{4}$.
Thus the ACA optimal information can be based on either $S = \{\spadesuit,\diamondsuit\}$ or $S = \{\spadesuit,\diamondsuit,\heartsuit\}$.

On the other hand, for the Bayesian PNM we have that $M_{\text{BPNM}} = (A,B_{\text{BPNM}})$, $B_{\text{BPNM}} = Q_{\#} \mu^A$ and
\begin{align*}
R(\mu, M_{\text{BPNM}}) & = \frac{1}{4} \sum_{x \in \{\spadesuit,\diamondsuit,\heartsuit,\clubsuit\}} 1[x \notin S] L(0,0) \\
& \hspace{50pt} + 1[x \in S] \left\{ (1 - \frac{1}{|S|}) L(0,1[x=\spadesuit]) + \frac{1}{|S|} L(1,1[x=\spadesuit]) \right\} .
\end{align*}

\paragraph{Case of $S = \{\spadesuit,\diamondsuit\}$:} 
We have
\begin{equation*}
4 \: R(\mu, M_{\text{BPNM}}) = \frac{1}{2} + \frac{1}{2} + 0 + 0 \quad = \quad 1 .
\end{equation*}

\paragraph{Case of $S = \{\spadesuit,\diamondsuit,\heartsuit\}$:} 
We have
\begin{equation*}
4 \:R(\mu, M_{\text{BPNM}}) = \frac{2}{3} + \frac{1}{3} + \frac{1}{3} + 0 \quad = \quad \frac{4}{3} .
\end{equation*}

\noindent Thus the PNM optimal information is $S = \{\spadesuit,\diamondsuit\}$ and \textit{not} $S = \{\spadesuit,\diamondsuit,\heartsuit\}$. Hence, PNM and ACA optimal information differ in general.

\section{Monte Carlo Methods for Numerical Disintegration} 

In this section, Monte Carlo methods for sampling from the distribution $\mu_\delta^a$ (or $\mu_{\delta,N}^a$; the $N$ subscript will be suppressed to reduce notation in the sequel) are considered.
The Monte Carlo approximation of $\mu_\delta^a$ is, in effect, a problem in rare event simulation as most of the mass of $\mu_\delta^a$ will be confined to a set $S$ such that $\mu(S)$ is small. 
Rare events pose some difficulties for classical Monte Carlo, as an enormous number of draws can be required to study the rare event of interest.

In the literature there are two major solutions proposed.
\emph{Importance sampling} \citep{Robert2013} samples from a modified process, under which the event of interest is more likely, then re-weights these samples to compensate for the adjustment.
Conversely, in \emph{splitting} \citep{Botev2012} trajectories of the process are constructed in a genetic fashion, by retaining and duplicating those which approach the events of interest and discarding others. Splitting is closely related to SMC \citep{Cerou:2012ha} and Feynman--Kac models \citep{DelMoral:2004fe}.

The splitting approach is described in the following section, while in Section~\ref{sec:pt} a parallel tempering (PT) algorithm is described. In spirit these approaches are similar in that they employ a tempering approach to ease sampling the relaxed posterior distribution for a small value of $\delta$. The SMC method employs a particle approximation to accomplish this, while the PT algorithm uses coupled Markov chains.

\subsection{Sequential Monte Carlo Algorithms  for Numerical Disintegration} \label{sec:rcp_smc}

Let $\set{\delta_i}_{i=0}^m$ be such that $\delta_0 = \infty$, $\delta_m = \delta$ and $\delta_i > \delta_{i+1} > 0$ for all $i < m-1$.
Furthermore let $\set{K_i}_{i=1}^m$ be some set of Markov transition kernels that leave $\mu^a_{\delta_{i}}$ invariant, for which $K_i(\cdot, S)$ is measurable for all $S \in \Sigma_\mathcal{X}$ and $K_i(x, \cdot)$ is an element of $\mathcal{P}_{\mathcal{X}}$ for all $x \in \mathcal{X}$.
Then our SMC for numerical disintegration (SMC-ND) algorithm, based on $P$ particles, is given in Algorithm~\ref{alg:rcp_smc}. 
Here we have used $\text{Discrete}(\{x_j\}_{j=1}^P ; \{w_j\}_{j=1}^P)$ to denote the discrete distribution which puts mass proportional to $w_j$ on the state $x_j \in \mathcal{X}$.

\begin{algorithm}
	Sample $x_j^0 \sim \mu$ for $j=1,\dots,P$ [Initialise] \\
	\For{$i=1,\dots,m$} {
		Sample $x_j^{i-1} \sim K_i(x_j^{i-1}, \cdot)$ for $j = 1,\dots,P$ [Move] \\
		Set $w_j^i \gets \frac{ \phi\left( \delta_i^{-1} \| A(x_j^{i-1}) - a \|_{\mathcal{A}} \right) }{ \phi\left( \delta_{i-1}^{-1} \| A(x_j^{i-1}) - a \|_{\mathcal{A}} \right) }$ for $j = 1,\dots,P$ [Re-weight] \\
		Sample $x_j^i \sim \text{Discrete}(\{x_j^{i-1}\}_{j=1}^P ; \{w_j^i\}_{j=1}^P)$  for $j = 1,\dots,P$ [Re-sample] \\
	}
	\caption{Sequential Monte Carlo for Numerical Disintegration (SMC-ND).} \label{alg:rcp_smc}
\end{algorithm}

The output of the SMC-ND algorithm is an empirical approximation\footnote{The bandwidth parameter $\delta$ and the use of $\delta$ to denote an atomic distribution should not be confused.}
\begin{equation*}
\mu^a_{\delta_m, P} = \frac{1}{P} \sum_{j=1}^P \delta(x_j^m)
\end{equation*} 
to $\mu^a_{\delta_m}$ based on a population of $P$ particles $\{x_j^m\}_{j=1}^P$.
There is substantial room to extend and improve the SMC-ND algorithm based on the wide body of literature available on this subject \citep[e.g.][]{Doucet2001a,DelMoral2006,Beskos2016,Ellam2016}, but we defer all such improvements for future work.
Our aim in the remainder is to establish the approximation properties of the SMC-ND output.
This will be based on theoretical results in \cite{DelMoral2006}.

\begin{assumption} \label{assumption:positive}
$\phi > 0$ on $\mathbb{R}_+$.
\end{assumption}

\begin{assumption} \label{assumption:composition}
	For all $i = 0,\dots,m-1$ and all $x, y \in \mathcal{X}$, it holds that $K_{i+1}(x, \cdot) \ll K_{i+1}(y, \cdot)$. 
Furthermore there exist constants $\epsilon_i > 0$ such that the \radonnikodym derivative
	\begin{equation*}
		\frac{\wrt K_{i+1}(x, \cdot)}{\wrt K_{i+1}(y, \cdot)} \geq \epsilon_i .
	\end{equation*}
\end{assumption}

\noindent
Assumption \ref{assumption:positive} ensures that Algorithm \ref{alg:rcp_smc} is well-defined, else it can happen that all particles are assigned zero weight and re-sampling will fail.
However, the result that we obtain in Theorem \ref{thm:smc_error} below can also be established in the special case of an indicator function $\phi(r) = 1[r < 1]$.
The details for this variation of the results are also included in the sequel.

The interpretation of Assumption \ref{assumption:composition} is that, for fixed $i$, transition kernels do not allocate arbitrarily large or small amounts of mass to different areas of the state space, as a function of their first argument.
This poses a constraint on the choice of Markov kernels for the SMC-ND algorithm.

\begin{theorem} \label{thm:smc_error}
	For all $\delta \in \{\delta_i\}_{i=0}^m$ and fixed $p \geq 1$ it holds that
	\begin{equation*}
		\expected \left( 
			\left[\mu^a_{\delta, P}(f) - \mu^a_{\delta}(f) \right]^p
		\right)^{\frac{1}{p}}
		\leq \frac{C_p \norm{f}_{\mathcal{F}}}{\sqrt{P}}
	\end{equation*}
	for some constant $C_p$ independent of $P$ but dependent on $\{\delta_i\}_{i=0}^m$, $p$ and $\{\epsilon_i\}_{i=0}^{m-1}$.
\end{theorem}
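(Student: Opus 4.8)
The plan is to recognise Algorithm~\ref{alg:rcp_smc} as a standard interacting-particle (Feynman--Kac) approximation of the sequence of distributions $\{\mu^a_{\delta_i}\}_{i=0}^m$, and then to invoke the now-classical $L^p$ error estimates for such systems. No genuinely new probabilistic estimate is required; the work lies in checking that the SMC-ND construction meets the required hypotheses.

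First I would set up the Feynman--Kac formalism. Define the potential functions
\begin{equation*}
G_i(x) \defeq \frac{\phi\left( \delta_i^{-1} \norm{A(x) - a}_{\mathcal{A}} \right)}{\phi\left( \delta_{i-1}^{-1} \norm{A(x) - a}_{\mathcal{A}} \right)}, \qquad i = 1,\dots,m,
\end{equation*}
so that $G_i$ is exactly the incremental weight computed in the Re-weight step. Because $\phi$ is decreasing and $\delta_i < \delta_{i-1}$ we have $0 < G_i(x) \leq 1$ for every $x$, using Assumption~\ref{assumption:positive} for strict positivity and well-definedness. Writing $r = \norm{A(x)-a}_{\mathcal{A}}$, the product $\prod_{i=1}^m G_i(x) = \phi(\delta_m^{-1} r)/\phi(\delta_0^{-1} r)$ telescopes, and since $\delta_0 = \infty$ gives $\phi(\delta_0^{-1}\cdot) \equiv \phi(0) = 1$, this equals the full importance weight of $\mu^a_{\delta}$ relative to $\mu$. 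Consequently the normalised Feynman--Kac flow associated with the potentials $\{G_i\}$ and the mutation kernels $\{K_i\}$ — each of which leaves $\mu^a_{\delta_i}$ invariant and is initialised from $\mu = \mu^a_{\delta_0}$ — coincides at stage $i$ with $\mu^a_{\delta_i}$, and $\mu^a_{\delta,P}$ is precisely its $P$-particle mean-field approximation with multinomial resampling at every step.

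Second, I would apply the standard $L^p$ bound for such particle approximations. Under the two structural facts just established — the potentials $G_i$ are uniformly bounded (by $1$) and, by Assumption~\ref{assumption:composition}, each mutation kernel satisfies the minorisation $\rnderiv{K_{i+1}(x,\cdot)}{K_{i+1}(y,\cdot)} \geq \epsilon_i$ — the results of \cite{DelMoral2006} (see also \cite{DelMoral:2004fe}) yield, for every bounded measurable $g \colon \mathcal{X} \to \reals$, every $p \geq 1$ and every $\delta \in \{\delta_i\}_{i=0}^m$,
\begin{equation*}
\expected\left( \left[ \mu^a_{\delta,P}(g) - \mu^a_{\delta}(g) \right]^p \right)^{1/p} \leq \frac{\tilde{C}_p \norm{g}_\infty}{\sqrt{P}},
\end{equation*}
where $\tilde{C}_p$ depends on $m$, $p$, $\{\delta_i\}_{i=0}^m$ and $\{\epsilon_i\}_{i=0}^{m-1}$ but not on $P$. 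Since the horizon $m$ is finite and fixed, only the non-asymptotic (not time-uniform) form of these estimates is needed, so boundedness of the potentials away from zero is not required.

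Third, I would pass from $\norm{\cdot}_\infty$ to $\norm{\cdot}_{\mathcal{F}}$ via Assumption~\ref{assumption:norm_bounds}, $\norm{f}_\infty \leq C_{\mathcal{F}}\norm{f}_{\mathcal{F}}$, and absorb $C_{\mathcal{F}}$ into the constant, setting $C_p \defeq C_{\mathcal{F}}\tilde{C}_p$, which completes the proof. The indicator special case $\phi(r) = 1[r < 1]$ violates Assumption~\ref{assumption:positive}, and is treated separately: the potentials are still bounded and now take values in $\{0,1\}$, so the same estimates apply as long as the selection step is almost surely non-degenerate — which holds because $Z^a_\delta > 0$ for $\delta$ small but positive — or alternatively via a direct coupling argument. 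I expect the main obstacle to be bookkeeping rather than analysis: verifying that the ``move--reweight--resample'' ordering in Algorithm~\ref{alg:rcp_smc} matches the operator ordering in the Feynman--Kac recursion of the reference, and that the telescoping identity for the weights is invoked with the correct normalising constants.
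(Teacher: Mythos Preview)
Your overall strategy matches the paper's exactly: cast SMC-ND as the $P$-particle approximation of a Feynman--Kac flow with the telescoping potentials you wrote down, mutation kernels $K_{i+1}$, and initial law $\mu$, then invoke Del Moral's $L^p$ estimates and pass from $\|\cdot\|_\infty$ to $\|\cdot\|_{\mathcal{F}}$ via Assumption~\ref{assumption:norm_bounds}. The McKean identification and the check that Assumption~\ref{assumption:composition} delivers the mixing condition $(M_1)$ are both as in the paper.

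The one substantive divergence is your treatment of condition $(G)$. The paper invokes specifically Theorem~7.4.4 of \cite{DelMoral:2004fe}, whose hypothesis $(G)$ demands $G_i(x) \geq \epsilon_i^G G_i(y)$ for all $x,y$, i.e.\ a uniform lower bound on the potentials. To verify this the paper uses Assumption~\ref{assumption:bounded_info} ($\sup_x \|A(x)\|_{\mathcal{A}} < \infty$): since $\phi$ is continuous, decreasing and strictly positive, boundedness of $\|A(x)-a\|_{\mathcal{A}}$ forces $\inf_x G_i(x) > 0$. You omit Assumption~\ref{assumption:bounded_info} entirely and instead assert that ``boundedness of the potentials away from zero is not required'' because the horizon is finite. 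That claim is defensible --- there are non-uniform-in-time bounds in \cite{DelMoral:2004fe} whose constants depend on ratios such as $\|G_i\|_\infty / \eta_{i-1}(G_i)$ rather than on $\inf G_i$ --- but you would need to name and check such a result precisely, and the resulting constant would then depend on the normalising constants $Z^a_{\delta_i}$ (hence on $a$, $A$, $\mu$, $\phi$) rather than only on $\{\delta_i\}$ and $\{\epsilon_i\}$ as the theorem statement advertises. The paper's route through $(G)$ and Assumption~\ref{assumption:bounded_info} is the cleaner match to the stated dependencies of $C_p$.

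For the indicator case $\phi(r) = 1[r<1]$, the paper's treatment differs substantially from your sketch. Rather than allowing $\{0,1\}$-valued potentials and arguing that selection is ``almost surely non-degenerate'', the paper \emph{changes the state spaces} to the shrinking sublevel sets $E_i = \mathcal{X}^a_{\delta_i} = \{x : \|A(x)-a\|_{\mathcal{A}} < \delta_i\}$, takes trivial potentials $G_i \equiv 1$, and lets the Markov transition $\Gamma_{i+1}$ be the restriction of $K_{i+1}$ to $E_{i+1}$ (well-defined by a separate reachability assumption). This makes $(G)$ trivial and pushes all the content into $(M_1)$. Your non-degeneracy and coupling remarks do not capture this construction.
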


\noindent
The proof of Theorem \ref{thm:smc_error} is presented next. 
Note that the established bound is independent of $\delta \in \{\delta_i\}_{i=0}^m$; this is therefore a uniform convergence result.
The assumptions and the conclusion of Theorem \ref{thm:smc_error} can be weakened in several directions, as discussed in detail in \citep{DelMoral2006}.
Development of SMC methods in the context of high-dimensional and infinite-dimensional state spaces has also been considered in \cite{Beskos2014,Beskos2015}.

\subsection{Proof of Theorem \ref{thm:smc_error}}

In this section we establish the uniform convergence of the SMC-ND algorithm as claimed in Theorem \ref{thm:smc_error}.
This relies on a powerful technical result from \cite{DelMoral:2004fe}, whose context is now established.

\subsubsection{Feynman--Kac Models} \label{feynman-kac setup}

Let $(E_i,\mathcal{E}_i)$ for $i=0,\dots,m$ be a collection of measurable spaces.
Let $\eta_0$ be a measure on $E_0$ and let $\Gamma_i$ index a collection of Markov transition kernels from $E_{i-1}$ to $E_i$.
Let $G_i \colon E_i \rightarrow (0,1]$ be a collection of functions, which are referred to as \emph{potentials}.
The triplets $(\eta_0,G_i,\Gamma_i)$ are associated with \emph{Feynman--Kac} measures $\eta_i$ on $E_i$ defined as, for bounded and measurable functions $f_i$ on $E_i$;
\begin{align*}
	\eta_i(f_i) & = \frac{\gamma_i(f_i)}{\gamma_i(1)} \\
	\gamma_i(f_i) & = \mathbb{E}_{\eta_0} \left[ f_i(X^i) \prod_{j=0}^{i-1} G_j(X^j) \right]
\end{align*}
where the expectation is taken with respect to the Markov process $X^i$ defined by $X^0 \sim \eta_0$ and $X^i | X^{i-1} \sim \Gamma_i(X^{i-1} , \cdot)$.

The Feynman--Kac measures can be associated with a (non-unique) \emph{McKean interpretation} of the form $\eta_{i+1} = \eta_i \Lambda_{i+1,\eta_i}$ where the $\Lambda_{i+1,\eta}$ are a collection of Markov transitions for which the following compatibility condition holds:
\begin{equation*}
	\eta \Lambda_{i+1,\eta} = \frac{G_i}{\eta(G_i)} \eta \Gamma_{i+1}
\end{equation*}
Then the $\eta_i$ can be interpreted as the $i$th step marginal distribution of the non-homogeneous Markov chain defined by $X^0 \sim \eta_0$ and $X^{i+1} | X^i \sim \Lambda_{i+1,\eta_i}(X^i , \cdot)$.
The corresponding $P$-particle model is defined on $E_i^P = E_i \times \dots \times E_i$ and has
\begin{align*}
	\mathbf{X}^0 & \sim \eta_0^P \\
	\mathbb{P}(\mathbf{X}^i \in \mathrm{d}\mathbf{x}^i | \mathbf{X}^i) & = \prod_{j=1}^P \Lambda_{i,\eta_{i-1}^P} (X_j^{i-1} , \mathrm{d}x_j^i)
\end{align*}
where $\eta_i^P = \frac{1}{P} \sum_{j=1}^P \delta(X_j^i)$ is an empirical (random) measure on $E_i$.
The SMC-ND algorithm can be cast as an instance of such a $P$-particle model, as is made clear later.

The result that we require from \cite{DelMoral:2004fe} is given next.
Denote by $\text{Osc}_1(E_i)$ the set of measurable functions $f_i$ on $E_i$ for which $\sup\{|f_i(x^i) - f_i(y^i)| \; : \; x^i,y^i \in E_i\} \leq 1$.

\begin{theorem*}[Theorem 7.4.4 in \cite{DelMoral:2004fe}]
	Suppose that:
	\begin{itemize}
		\item[$(G)$] There exist $\epsilon_i^G \in (0,1]$ such that $G_i(x^i) \geq \epsilon_i^G G_i(y^i) > 0$ for all $x^i,y^i \in E_i$.
		\item[$(M_1)$] There exist $\epsilon_i^\Gamma \in(0,1)$ such that $\Gamma_{i+1}(x^i,\cdot) \geq \epsilon_i^\Gamma \Gamma_{i+1}(y^i,\cdot)$ for all $x^i,y^i \in E_i$.
	\end{itemize}
	Then for $p \geq 1$ and any valid McKean interpretation $\Lambda_{i,\eta}$, the associated $P$-particle model $\eta_i^P$ satisfies the uniform (in $i$) bound
	\begin{equation*}
		\sup_{0 \leq i \leq m} \sup_{f_i \in \text{Osc}_1(E_i)} \sqrt{P} \mathbb{E}[ |\eta_i^P(f_i) - \eta_i(f_i)|^p ]^{1/p} \leq C_p
	\end{equation*}
	for some constant $C_p$ independent of $P$ but dependent on $\{\epsilon_i^G\}_{i=0}^m$ and $\{\epsilon_i^\Gamma\}_{i=0}^{m-1}$.
\end{theorem*}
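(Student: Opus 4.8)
The plan is to run the classical ``local errors plus semigroup stability'' argument of \cite{DelMoral:2004fe}. I would decompose the global particle error $\eta_i^P - \eta_i$ into a sum of \emph{local sampling errors} incurred at each generation $k \le i$, bound each local error by a Marcinkiewicz--Zygmund-type inequality to extract the $P^{-1/2}$ rate, and then control the propagation of each local error forward to time $i$ by exploiting the exponential stability of the Feynman--Kac semigroup --- which is exactly what the mixing conditions $(G)$ and $(M_1)$ deliver. Summing the resulting geometric-type series produces a constant $C_p$ that is uniform in $i$, hence in the horizon $m$, and depends only on $p$ and the mixing constants $\{\epsilon_i^G\}$, $\{\epsilon_i^\Gamma\}$.

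Concretely, I would first linearise by passing to the unnormalised flow: since $\eta_{i-1}\mapsto\eta_i$ is nonlinear through its normalisation, I work instead with $\gamma_i$ and the \emph{linear} semigroup $Q_{k,i}$, $Q_{k,i}(f)(x) = \mathbb{E}\big[f(X^i)\prod_{l=k}^{i-1} G_l(X^l)\,\big|\, X^k = x\big]$, so that $\eta_i = \gamma_i/\gamma_i(1)$ and $\gamma_i(f) = \gamma_k(Q_{k,i}f)$. Writing $\overline{\eta_k^P} \defeq \eta_{k-1}^P\Lambda_{k,\eta_{k-1}^P}$ for the conditional mean of $\eta_k^P$ given the past $\mathbf{X}^{k-1}$ --- so that, given $\mathbf{X}^{k-1}$, the generation-$k$ particles are i.i.d.\ from $\overline{\eta_k^P}$ --- a standard telescoping identity gives
\[
\eta_i^P(f_i) - \eta_i(f_i) \;=\; \sum_{k=0}^{i} r_{k,i}^P\,\big[\eta_k^P - \overline{\eta_k^P}\big]\!\big(\widetilde{Q}_{k,i}(f_i - \eta_i(f_i))\big),
\]
where $r_{k,i}^P$ is a ratio of (empirical) normalising constants and $\widetilde{Q}_{k,i}$ is the associated normalised semigroup. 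Each summand is, conditionally on $\mathbf{X}^{k-1}$, an average of $P$ centred, bounded, i.i.d.\ terms, so the Marcinkiewicz--Zygmund inequality (or the related Burkholder--Davis--Gundy / Khintchine bounds) controls its conditional $L_p$ norm by $b_p\,\mathrm{osc}\big(\widetilde{Q}_{k,i}(f_i - \eta_i(f_i))\big)/\sqrt{P}$ for a universal $b_p$; Minkowski's inequality then keeps the sum structure after taking unconditional $L_p$ norms.

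The hard part will be the two quantitative inputs that make the sum converge uniformly in $i$. First, the \emph{stability estimate}: under $(M_1)$ each prediction kernel $\Gamma_{l+1}$ obeys a Dobrushin minorisation, while under $(G)$ the Bayes-type update $\mu\mapsto G_l\cdot\mu/\mu(G_l)$ has controlled oscillation amplification --- the delicate point being that one may \emph{not} simply multiply one-step Lipschitz constants, since the update step need not be a contraction, so a careful composition is required to show that the Dobrushin coefficient of $\widetilde{Q}_{k,i}$ contracts geometrically, $\beta(\widetilde{Q}_{k,i}) \le \prod_{l=k}^{i-1}(1-\vartheta_l)$ with $\vartheta_l = \vartheta_l(\epsilon_l^G,\epsilon_l^\Gamma) > 0$, whence $\mathrm{osc}\big(\widetilde{Q}_{k,i}g\big) \le \big(\prod_{l=k}^{i-1}(1-\vartheta_l)\big)\mathrm{osc}(g)$. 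Second, the ratios $r_{k,i}^P$ must be shown bounded in every $L_p$, uniformly in $P$ and the time indices; this again follows from $(G)$, which confines $\eta_l(G_l)$ and its empirical analogues to a fixed interval bounded away from $0$, fed into the recursion together with the $P^{-1/2}$ control of the local errors. Granting these, Minkowski and the geometric decay yield
\[
\mathbb{E}\big[\,|\eta_i^P(f_i)-\eta_i(f_i)|^p\,\big]^{1/p} \;\le\; \frac{C}{\sqrt{P}}\sum_{k=0}^{i}\prod_{l=k}^{i-1}(1-\vartheta_l) \;\le\; \frac{C_p}{\sqrt{P}},
\]
with $C_p$ depending only on $p$ and the mixing constants; taking the supremum over $f_i\in\mathrm{Osc}_1(E_i)$ and over $0\le i\le m$ completes the proof.
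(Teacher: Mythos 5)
This statement is not proved in the paper at all: it is quoted verbatim (in slightly simplified form) from Del Moral's 2004 monograph and used as a black box to establish Theorems S4.3 and S4.8, so there is no in-paper argument to compare yours against. What you have written is a faithful reconstruction of the strategy behind Theorem 7.4.4 in that reference: linearise via the unnormalised flow $\gamma_i$ and the semigroup $Q_{k,i}$, telescope the global error into conditionally centred local sampling errors, extract the $P^{-1/2}$ rate from a Marcinkiewicz--Zygmund bound applied conditionally on the previous generation, and obtain uniformity in $i$ from the geometric decay of the Dobrushin contraction coefficients of the normalised semigroups, which is exactly what $(G)$ and $(M_1)$ are designed to deliver. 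So the route is the right one and matches the source.

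That said, as submitted this is a proof plan rather than a proof: the two steps you yourself flag as ``the hard part'' --- (i) the composition argument showing $\beta(\widetilde{Q}_{k,i}) \leq \prod_{l=k}^{i-1}(1-\vartheta_l)$ with $\vartheta_l$ bounded away from zero in terms of $\epsilon_l^G$ and $\epsilon_l^\Gamma$ (this is delicate precisely because the Bayes-type update $\mu \mapsto G_l \cdot \mu / \mu(G_l)$ is not a contraction on its own and can amplify oscillations by a factor involving $1/\epsilon_l^G$), and (ii) the uniform-in-$P$ and uniform-in-time $L_p$ control of the empirical normalising-constant ratios $r_{k,i}^P$ --- carry essentially all of the technical content of the uniform-in-time claim, and you defer both with ``granting these''. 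A complete proof would need to supply those two lemmas (they are Del Moral's stability estimates for Feynman--Kac semigroups under mixing conditions, e.g.\ the bounds on $\beta(P_{q,n})$ and $r_{q,n}$ in Chapter 4 of the monograph) before the final Minkowski-plus-geometric-series step closes the argument. Since the paper's authors deliberately chose to cite rather than reprove this result, your summary is an adequate account of why the theorem is true, but it should be presented as such rather than as a self-contained proof.
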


The actual statement in \cite{DelMoral:2004fe} contains a more general version of $(M_1)$ and a more explicit decomposition of the constant $C_p$; however the simpler version presented here is sufficient for the purposes of the present paper.

\subsubsection{Case A: Positive Function \texorpdfstring{$\phi(r) > 0$}{}}

First we prove Theorem \ref{thm:smc_error} as it is stated.
Later the assumption of $\phi > 0$ will be relaxed.

\paragraph{SMC-ND as a Feynman--Kac Model}

The aim here is to demonstrate that the SMC-ND algorithm fits into the framework of Section \ref{feynman-kac setup} for a specific McKean interpretation.
This connection will then be used to establish uniform convergence for the SMC-ND algorithm as a consequence of Theorem 7.4.4 in \cite{DelMoral:2004fe}.

For the state spaces we associate each $E_i = \mathcal{X}$ and $\mathcal{E}_i = \Sigma_{\mathcal{X}}$.
For the potentials we associate 
\begin{equation*}
G_i(x^i) = \frac{\phi\left( \frac{1}{\delta_{i+1}} \|A(x^i) - a\|_{\mathcal{A}} \right)}{\phi\left( \frac{1}{\delta_i} \|A(x^i) - a\|_{\mathcal{A}} \right)}
\end{equation*}
which clearly does not vanish and takes values in $(0,1]$ since $\delta_i > \delta_{i+1}$ and $\phi$ is decreasing.
For the Markov transitions we associate $\Gamma_{i+1}$ with $K_{i+1}$.

The Feynman--Kac measures associated with the SMC-ND algorithm can be cast as a non-homogeneous Markov chain with transitions $\Lambda_{i+1,\eta}$.
Here $\Lambda_{i+1,\eta_i}$ acts on the current measure $\eta_i$ on $\mathcal{X}$ by first propagating as $\eta_i K_{i+1}$ and then ``warping'' this measure with the potential $G_i$; i.e.
\begin{equation*}
\eta \Lambda_{i+1,\eta} = \frac{G_i}{\eta(G_i)} \eta \Gamma_{i+1} .
\end{equation*}
This demonstrates that the SMC-ND algorithm is the $P$-particle model corresponding to the McKean interpretation $\Lambda_{i+1,\eta}$ of the Feynman--Kac triplet $(\eta_0,G_i,\Gamma_i)$.
Thus the SMC-ND algorithm can be studied in the context of Section \ref{feynman-kac setup}, which we report next.

Note that it is common in applications of SMC to perform the ``Re-sample'' step before the ``Move'' step - our choice of order was required for the McKean framework that is the basis of the theoretical results in \cite{DelMoral2006}.
It is known in the SMC ``folk lore'' that the order of these steps can be interchanged.

\paragraph{Proof of Uniform Convergence Result for SMC-ND}

It remains to verify the hypotheses of Theorem 7.4.4 in \cite{DelMoral:2004fe}.
Condition $(G)$ is satisfied if and only if 
\begin{equation*}
\phi\left( \frac{1}{\delta_{i+1}} \|A(x^i) - a\|_{\mathcal{A}} \right)
\end{equation*} 
is bounded below, since 
\begin{equation*}
\phi\left( \frac{1}{\delta_i} \|A(x^i) - a\|_{\mathcal{A}} \right)
\end{equation*} 
is bounded above by 1.
Since $\phi$ is continuous, decreasing and satisfies $\phi > 0$ (Assumption \ref{assumption:positive}), it suffices to show that its argument $\frac{1}{\delta_{i+1}}\|A(x) - a\|_{\mathcal{A}}$ is upper-bounded.
This is the content of Assumption \ref{assumption:bounded_info} in the main text, which shows that 
\begin{align*}
	\frac{1}{\delta_i}\|A(x) - a\|_{\mathcal{A}}
	& \leq \frac{1}{\delta_i} \sup_{x \in \mathcal{X}} \|A(x)\|_{\mathcal{A}} + \|a\|_{\mathcal{A}} \\
	& =: \frac{1}{\epsilon_i^G} \; < \; \infty .
\end{align*}
Condition $(M_1)$ requires that
\begin{equation*}
	\Gamma_{i+1}(x^i , S) \geq \epsilon_i^\Gamma \Gamma_{i+1}(y^i, S)
\end{equation*}
for all $x^i,y^i \in E_i$ and $S \in \mathcal{E}_{i+1}$.
From construction this is equivalent to
\begin{equation*}
	K_{i+1}(x^i , S) \geq \epsilon_i^\Gamma K_{i+1}(y^i, S)
\end{equation*}
for all $x^i,y^i \in \mathcal{X}$ and $S \in \Sigma_{\mathcal{X}}$.
This is the content of Assumption \ref{assumption:composition}.

Thus we have established the hypotheses of Theorem 7.4.4 in \cite{DelMoral:2004fe} for the SMC-ND algorithm.
Theorem \ref{thm:smc_error} is a re-statement of this result.
For the statement of the result we used the $\|f\|_{\mathcal{F}}$ norm, based on the fact that (from Assumption \ref{assumption:norm_bounds}) $\|f_i\|_{\text{Osc}(E_i)} \leq 2 \|f\|_{\infty} \leq 2 C_{\mathcal{F}} \|f\|_{\mathcal{F}}$.

\subsubsection{Case B: Indicator Function \texorpdfstring{$\phi(r) = 1[r < 1]$}{}}

The previous analysis required that $\phi > 0$ on $\mathbb{R}_+$.
However, the most basic choice for $\phi$ is the indicator function $\phi(r) = 1[r < 1]$ which can take the value 0.
The case of an indicator function demands special attention, since Algorithm \ref{alg:rcp_smc} can fail in this case if all particles are assigned zero weight.
If this occurs, then we just define $\mu_{\delta,P}^\alpha(f) = 0$.
To be specific, the SMC-ND algorithm associated to the indicator function $\phi$ for approximation of the integral $\mu_\delta^a(f)$ is stated as Algorithm \ref{alg:rcp_smc2} next.

\begin{algorithm}
	Sample $x_j^0 \sim \mu$ for $j=1,\dots,P$ [Initialise] \\
	\For{$i=1,\dots,n$} {
		Sample $x^i_j \sim K_i(x^{i-1}_{j}, \cdot)$ for $j = 1,\dots,P$ [Sample] \\
		$E_i \gets \{x_j^i : x_j^i \in \mathcal{X}_{\delta_i}^a\}$ \\
		\If{$E_i  = \emptyset$} {
			Return $\mu_{\delta,P}^a(f) \gets 0$ \\
		}
		\For{$j = 1,\dots,P$} {
			\If{$x_j^i \notin E_i$} {
				$x_j^i \sim \text{Uniform}(E_i)$ [Re-sample] \\
			}
		}
	}
	Return $\mu_{\delta,P}^a(f) \gets \frac{1}{P}\sum_{j=1}^P f(x_j^n)$.
	\caption{Sequential Monte Carlo for Numerical Disintegration (SMC-ND), for the case where $\phi(r) = 1[r < 1]$.} \label{alg:rcp_smc2}
\end{algorithm}

Let $\mathcal{X}_\delta^a = \{x \in \mathcal{X} : \|A(x) - a\|_{\mathcal{A}} < \delta \}$.
If there is some iteration $i$ at which, after applying the kernel $K_i$ to each particle, no particle lies within $\mathcal{X}_{\delta_i}^a$, the algorithm fails. 
As a result it is critical to ensure that the distance between successive $\delta_i$ is small so that the probability of failure is controlled. 
This requirement is made formal next.
To establish the approximation properties of the random measure $\mu^a_{\delta_m,P}$, two assumptions are required.
These are intended to replace Assumptions \ref{assumption:positive}, \ref{assumption:composition} and Assumption \ref{assumption:bounded_info} from the main text:

\begin{assumption} \label{assumption:reachable}
For all $i = 0,\dots,m-1$ and all $x^i \in \mathcal{X}^a_{\delta_i}$, it holds that $K_{i+1}(x^i, \mathcal{X}^a_{\delta_{i+1}}) > 0$.
\end{assumption}

\begin{assumption} \label{assumption:composition2}
	For all $i = 0,\dots,m-1$ and all $x^i, y^i \in \mathcal{X}^a_{\delta_i}$, $K_{i+1}(x^i, \cdot) \ll K_{i+1}(y^i, \cdot)$. Furthermore there exist constants $\epsilon_i > 0$ such that the \radonnikodym derivative
	\begin{equation*}
		\frac{\wrt K_{i+1}(x^i, \cdot)}{\wrt K_{i+1}(y^i, \cdot)} \geq \epsilon_i .
	\end{equation*}
\end{assumption}

\noindent
Assumption~\ref{assumption:reachable} requires that the probability of reaching $\mathcal{X}^a_{\delta_{i+1}}$ when starting in $\mathcal{X}^a_{\delta_i}$ and applying the transition kernel $K_{i+1}$, is bounded away from zero.
Assumption~\ref{assumption:composition2} ensures that, for fixed $i$, transition kernels do not allocate arbitrarily large or small amounts of mass to different areas of the state space, as a function of their first argument.

\begin{theorem} \label{thm:smc_error2}
	For the alternative situation of an indicator function, it holds that for all $\delta \in \{\delta_i\}_{i=0}^m$ and fixed $p \geq 1$,
	\begin{equation*}
		\expected \left( 
			\left[\mu^a_{\delta, P}(f) - \mu^a_{\delta}(f) \right]^p
		\right)^{\frac{1}{p}}
		\leq \frac{C_p \norm{f}_{\mathcal{F}}}{\sqrt{P}}
	\end{equation*}
	for some constant $C_p$ independent of $P$ but dependent on $p$ and $\{\epsilon_i\}_{i=0}^{m-1}$.
\end{theorem}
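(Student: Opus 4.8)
The plan is to follow the Case A argument for Theorem \ref{thm:smc_error} as closely as possible, the one genuinely new feature being that the ``hard'' potentials can vanish, so that the particle system can become extinct and the standard hypotheses $(G)$, $(M_1)$ of Theorem 7.4.4 in \cite{DelMoral:2004fe} must be replaced by their restricted-state-space, non-extinction analogues. First I would record that when $\phi(r) = 1[r<1]$ the relaxed distribution $\mu_{\delta_i}^a$ is exactly the normalised restriction of $\mu$ to the set $\mathcal{X}_{\delta_i}^a = \{x : \|A(x) - a\|_{\mathcal{A}} < \delta_i\}$, and that since $\delta_0 > \delta_1 > \dots > \delta_m$ these sets are nested, $\mathcal{X} = \mathcal{X}_{\delta_0}^a \supseteq \mathcal{X}_{\delta_1}^a \supseteq \dots \supseteq \mathcal{X}_{\delta_m}^a$. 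Because each $K_{i+1}$ is chosen to leave the appropriate $\mu_{\delta}^a$ invariant, the flow $i \mapsto \mu_{\delta_i}^a$ is a Feynman--Kac flow whose mutation kernels are the $K_{i+1}$ and whose potentials are the indicators enforcing membership of the next, smaller set $\mathcal{X}_{\delta_{i+1}}^a$; Algorithm \ref{alg:rcp_smc2}, in which the selection step discards the particles that have fallen outside $\mathcal{X}_{\delta_{i+1}}^a$ and resamples the survivors uniformly, is then precisely the $P$-particle McKean model of this flow restricted to the state spaces $E_i = \mathcal{X}_{\delta_i}^a$, with the convention that $\mu_{\delta,P}^a(f) = 0$ on the extinction event.

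Next I would verify the hypotheses of the indicator-potential version of the $L_p$ convergence estimate in \cite{DelMoral:2004fe}, which are exactly the two new assumptions of the theorem. The reachability condition, Assumption \ref{assumption:reachable}, that $K_{i+1}(x^i, \mathcal{X}_{\delta_{i+1}}^a) > 0$ for every $x^i \in \mathcal{X}_{\delta_i}^a$, is what guarantees that the one-step survival masses $\eta_i(G_i)$ are bounded below and hence that the probability of extinction can be controlled; it plays the role of condition $(G)$. The Radon--Nikod\'ym lower bound of Assumption \ref{assumption:composition2}, restricted to $x^i, y^i \in \mathcal{X}_{\delta_i}^a$, plays the role of the mixing condition $(M_1)$ on the restricted state spaces. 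Together these deliver a uniform-in-$i$ bound $\mathbb{E}[|\mu_{\delta_i,P}^a(f_i) - \mu_{\delta_i}^a(f_i)|^p]^{1/p} \leq C_p / \sqrt{P}$ for $f_i \in \mathrm{Osc}_1(\mathcal{X}_{\delta_i}^a)$, with $C_p$ depending only on $p$ and the constants $\epsilon_i$. Finally, exactly as in Case A, I would pass from the oscillation seminorm to $\|\cdot\|_{\mathcal{F}}$ using Assumption \ref{assumption:norm_bounds}: any $f$ with $\|f\|_{\mathcal{F}} \leq 1$ has oscillation at most $2\|f\|_\infty \leq 2 C_{\mathcal{F}}$, so its restriction to $\mathcal{X}_{\delta_m}^a$ is $2 C_{\mathcal{F}}$ times an element of $\mathrm{Osc}_1$, and this factor is absorbed into the constant; taking $\delta = \delta_m$, or any intermediate $\delta_i$, then gives the stated estimate.

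The step I expect to be the main obstacle is the rigorous treatment of extinction. Unlike Case A, where $\phi > 0$ keeps every particle alive, here the McKean interpretation is only well-defined up to the first time $E_i = \emptyset$, and the output is set to $0$ on that event. One must therefore either invoke the form of Del Moral's estimates that already incorporates this --- decomposing the $L_p$ error over the survival and extinction events and using the reachability bound from Assumption \ref{assumption:reachable} to show that $\mathbb{P}(E_i = \emptyset \text{ for some } i)$ is of order $P^{-1/2}$ --- or supply this decomposition directly. Reconciling the conditional-on-survival particle dynamics with the unconditional $L_p$ claim, and checking that the extinction contribution is genuinely of order $P^{-1/2}$ and uniform over the choice of $\delta \in \{\delta_i\}_{i=0}^m$, is the delicate bookkeeping; everything else is a transcription of the Case A proof.
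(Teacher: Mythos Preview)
Your architecture is right --- cast Algorithm \ref{alg:rcp_smc2} as the $P$-particle model of a Feynman--Kac flow, verify the hypotheses of Theorem 7.4.4 of \cite{DelMoral:2004fe}, then convert from the oscillation seminorm to $\|\cdot\|_{\mathcal{F}}$ via Assumption \ref{assumption:norm_bounds} --- but you have missed the one reformulation that makes the argument go through cleanly, and as a result you have mis-assigned the roles of the two assumptions and left yourself with the extinction problem you flag at the end.

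The paper does \emph{not} take indicator potentials on a common state space. Instead it takes the restricted state spaces $E_i = \mathcal{X}_{\delta_i}^a$, defines the Markov transition $\Gamma_{i+1}$ from $E_i$ to $E_{i+1}$ to be the \emph{restriction} of $K_{i+1}$ to $E_{i+1}$ (renormalised), and then sets the potentials to be $G_i \equiv 1$. With this choice, condition $(G)$ is trivially satisfied with $\epsilon_i^G = 1$; there is no need for a lower bound on survival mass. Assumption \ref{assumption:reachable} is used for something else entirely: it guarantees that the normalising constant $\int_{E_{i+1}} K_{i+1}(x^i, \mathrm{d}x^{i+1})$ is strictly positive for every $x^i \in E_i$, so that $\Gamma_{i+1}$ is well-defined as a Markov kernel. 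Condition $(M_1)$ then reads $\Gamma_{i+1}(x^i, S) \geq \epsilon_i^\Gamma \Gamma_{i+1}(y^i, S)$ for $x^i, y^i \in E_i$ and $S \in \mathcal{E}_{i+1}$, which is exactly Assumption \ref{assumption:composition2}. The McKean interpretation is $\Lambda_{i+1,\eta} = \Gamma_{i+1}$ (since $G_i \equiv 1$), and Algorithm \ref{alg:rcp_smc2} is identified as the associated $P$-particle model.

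The payoff of this reformulation is that the extinction bookkeeping you anticipate as the ``main obstacle'' simply does not arise in the Feynman--Kac model: particles move by $\Gamma_{i+1}$, which lands in $E_{i+1}$ by construction, and the potentials never vanish. Theorem 7.4.4 applies directly with no survival/extinction decomposition and no $P^{-1/2}$ extinction-probability estimate. Your plan --- indicator potentials, an ``indicator-potential version'' of Del Moral's estimate, and an explicit decomposition over the extinction event --- may be salvageable, but it is both more work and not what the paper does; the paper's device of absorbing the indicator into the transition kernel and taking $G_i \equiv 1$ is the missing idea.
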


\cite{Cerou:2012ha} proposed an algorithm similar to the one herein but focussed on approximation of the \emph{probability} of a rare event rather than sampling from the rare event itself.
In particular the theoretical results provided are in terms of these probabilities rather than how well the measure restricted to the rare event is approximated. 
Furthermore, many of the results therein focused upon an idealised version of the problem, in which it was assumed that the intermediate restricted measures can be sampled directly;
this avoids the issues with vanishing potentials indicated in \cite{DelMoral:2004fe}.
A similar algorithm was discussed in \cite{Scibior:2015} but was not shown to be theoretically sound.

The remainder of this Section establishes Theorem \ref{thm:smc_error2}.

\paragraph{SMC-ND as a Feynman--Kac Model}

The aim here is to demonstrate that Algorithm \ref{alg:rcp_smc2} fits into the framework of Section \ref{feynman-kac setup} for a specific McKean interpretation.
This is analogous to the proof of Theorem \ref{thm:smc_error}.

A technical complication is that the potentials $G_i$ must take values in $(0,1]$, which precludes the ``obvious'' choice of $E_i = \mathcal{X}$ and $G_i(x^i)$ as indicator functions for the sets $\mathcal{X}_{\delta_i}^a$.
Instead, we associate $E_i = \mathcal{X}_{\delta_i}^a$ and $\mathcal{E}_i$ with the corresponding restriction of $\Sigma_{\mathcal{X}}$.
For the potentials we then take $G_i(x^i) = 1$ for all $x_i \in E_i$, which clearly does not vanish and takes values in $(0,1]$.
For the Markov transitions $\Gamma_{i+1}$ from $E_i$ to $E_{i+1}$ we consider
\begin{equation*}
	\Gamma_{i+1}(x^i , \mathrm{d}x^{i+1}) \propto K_{i+1}(x^i,x^{i+1}) 
\end{equation*}
which is the restriction of $K_{i+1}$ to $E_{i+1}$.
For the latter to be well-defined it is required that the normalisation constant
\begin{equation*}
	\int_{E_{i+1}} K_{i+1}(x^i,x^{i+1}) \mathrm{d}x^{i+1} > 0
\end{equation*}
for all $x^i \in E_i$, so that there is a positive probability of reaching $E_{i+1}$ from $E_i$.
This is the content of Assumption \ref{assumption:reachable}.

The Feynman--Kac measures associated with Algorithm \ref{alg:rcp_smc2} can be cast as a non-homogeneous Markov chain with transitions $\Lambda_{i+1,\eta}$.
Here $\Lambda_{i+1,\eta_i}$ acts on the current measure $\eta_i$ on $E_i$ by first propagating as $\eta_i K_{i+1}$ and then restricting this measure to $E_{i+1}$.
This procedure is seen to be identical to the Markov transition $\Gamma_{i+1}$ defined above and, since the potentials $G_i \equiv 1$, it follows that
\begin{align*}
	\eta \Lambda_{i+1,\eta}
	& = \eta \Gamma_{i+1} \\
	& = \frac{G_i}{\eta(G_i)} \eta \Gamma_{i+1} .
\end{align*}
This demonstrates that Algorithm \ref{alg:rcp_smc2} is the $P$-particle model corresponding to the McKean interpretation $\Lambda_{i+1,\eta}$ of the Feynman--Kac triplet $(\eta_0,G_i,\Gamma_i)$.
Thus the SMC-ND algorithm can be studied in the context of Section \ref{feynman-kac setup}, which we report next.

\paragraph{Proof of Uniform Convergence Result for SMC-ND}

It remains to verify the hypotheses of Theorem 7.4.4 in \cite{DelMoral:2004fe}.
Condition $(G)$ is satisfied with no further assumption, since $G_i \equiv 1$ and we can take $\epsilon_i^G = 1$.
Condition $(M_1)$ requires that
\begin{equation*}
	\Gamma_{i+1}(x^i , S) \geq \epsilon_i^\Gamma \Gamma_{i+1}(y^i, S)
\end{equation*}
for all $x^i,y^i \in E_i$ and $S \in \mathcal{E}_{i+1}$.
From construction this is equivalent to
\begin{equation*}
	K_{i+1}(x^i , S) \geq \epsilon_i^\Gamma K_{i+1}(y^i, S)
\end{equation*}
for all $x^i,y^i \in E_i$ and $S \in \mathcal{E}_{i+1}$.
This is the content of Assumption \ref{assumption:composition2}.

Thus we have established the hypotheses of Theorem 7.4.4 in \cite{DelMoral:2004fe} for Algorithm \ref{alg:rcp_smc2} and in doing so have established Theorem \ref{thm:smc_error2}. 

\subsection{Parallel Tempering for Numerical Disintegration} \label{sec:pt}

Let $K_i$, $\set{\delta_i}_{i=1}^m$ be as in Section~\ref{sec:rcp_smc}. 
The PT algorithm \citep{Geyer:1991ws} for sampling from $\mu_{\delta_m}^a$ runs $m$ Markov chains in parallel, one for each temperature, by alternately applying $K_i$, then randomly proposing to ``swap'' the current state of two of the chains. 
Commonly only swaps of adjacent chains are considered; to this end suppose at iteration $j$ an index $q \in \{0,\dots,m-1\}$ has been selected. 
Denote by $x^q$ the state of the chain with $\mu^a_{\delta_q}$ as its invariant measure. 
Then to ensure the correct invariant distribution of all chains is maintained, the swap of state $x^q$ and $x^{q+1}$ is accepted with probability
\begin{equation}
	\alpha(x^q, x^{q+1}) = \frac{ \pi_q(x^{q+1}) \pi_{q+1}(x^q) }{ \pi_q(x^q) \pi_{q+1}(x^{q+1}) }
	\label{eq:pt_accept}
\end{equation}
where $\pi_q$ denotes the density of the target distribution $\mu^{a}_{\delta_q}$ with respect to a suitable reference measure. 
The density notation can be justified since in our experiments the sampler was applied to the finite-dimensional distributions $\mu^{a}_{\delta_q, N}$ and so the reference measure can be taken to be the Lebesgue measure on $\mathbb{R}^N$.

The PT algorithm for numerical disintegration is described in Algorithm~\ref{alg:rcp_pt}.
The samples $\{x_j^m\}_{j=1}^P$ are approximate draws from the distribution $\mu_{\delta_m}^a$.

\begin{algorithm}
	Given some initial $x_0^i$ for $i=1,\dots,m$ [Initialise] \\
	\For{$j=1,\dots,P$} {
		Sample $\hat{x}_j^i \sim K_i(x_{j-1}^i, \cdot)$ for $i = 1,\dots,m$ [Move] \\
		Sample $q \sim \text{Uniform}(0, m-1)$ \\
		\uIf{$U(0,1) < \alpha(x_j^q, x_j^{q+1})$} {
			Set $x_j^q = \hat{x}_j^{q+1}$ and $x_j^{q+1} = \hat{x}_j^q$ [Accept Swap]
		} 
		\uElse {
			Set $x_j^q = \hat{x}_j^{q}$ and $x_j^{q+1} = \hat{x}_j^{q+1}$ [Reject Swap]
		} 
		For $i \neq q, q+1$, set $x_j^i = \hat{x}_j^i$ [Update]
	}
\caption{Parallel Tempering for Numerical Disintegration}\label{alg:rcp_pt}
\end{algorithm}

Algorithms \ref{alg:rcp_smc} and \ref{alg:rcp_pt} are each valid for sampling from a target measure $\mu_\delta^a$. The choice of which algorithm to use is problem dependent, and each algorithm has been applied in the experiments in Section~\ref{sec:results}.

\subsection{Estimation of Model Evidence} \label{sec:evidence_computation}

The model evidence $p_A(a)$ was estimated as a by-product of the numerical disintegration algorithm developed.
Attention is restricted to the specific relaxation function $\phi(r) = \exp(-r^2)$.
Then the thermodynamic integral identity \citep{Gelman1998} can be exploited to calculate the model evidence:
\begin{align*}
\log p_A(a) & = - \lim_{\delta \downarrow 0} \frac{1}{\delta^2} \int_0^1 \int \|A(x) - a\|_{\mathcal{A}}^2 \; \wrt \mu_{\delta / \sqrt{t}}^a \; \wrt t 
\end{align*}
where the parameterisation $\delta \mapsto \delta/\sqrt{t}$ is such that $t=0$ corresponds to the prior, while $t=1$ corresponds to the distribution $\mu^a_\delta$.

To approximate this integral, the outer integral is first discretised.
To this end, fix a sequence $\infty = \delta_0 < \delta_1 < \dots < \delta_m$ of relaxation parameters. For convenience this may be the same sequence as used to apply numerical disintegration.
Then for $\delta_m$ small, and letting $\sqrt{t_i} = \delta_m / \delta_i$:
\begin{align*}
\log p_A(a) & \approx - \frac{1}{\delta_m^2} \sum_{i=1}^m (t_i - t_{i-1}) \int \|A(x) - a\|_{\mathcal{A}}^2 \; \wrt \mu_{\delta_m / \sqrt{t_i}}^a
\end{align*}
Thus we obtain a consistent approximation
\begin{align*}
\log p_A(a) & \approx - \sum_{i=1}^m \left(\frac{1}{\delta_i^2} - \frac{1}{\delta_{i-1}^2}\right) \underbrace{\int \|A(x) - a\|_{\mathcal{A}}^2 \; \wrt \mu_{\delta_i}^a}_{(*)}
\end{align*}
The terms $(*)$ were estimated via Monte Carlo, based on samples from the distributions $\mu_{\delta_i}^a$ obtained through numerical disintegration.
Higher-order quadrature rules and variance reduction techniques can be used, but were not implemented for this work \citep{Oates2016}.

\subsection{Monte Carlo Details for Painlev\'{e} Transcendental}

Sampling of the posterior was performed for a temperature schedule of $m=1600$ steps, equally spaced on a logarithmic scale from $10$ to $10^{-4}$, for an ensemble of $P=200$ particles. 

Specification of appropriate transition kernels $K_i$ for this problem was challenging due both to the high dimension and the empirical observation that, for small $\delta$, mixing of the chains tends to be poor. This is likely due to the nonlinearity of the information operator which leads to highly a complex posterior structure. 
For this reason a gradient-based sampler was used to construct the transition kernel; the Metropolis-adjusted Langevin algorithm (MALA) \citep{Roberts:1996}. 

Denote by $u^k$ the coefficients $[u^k_j]_{j=1}^N$ at iteration $k$ of MALA. Then, recall that MALA has proposals given by
\begin{equation*}
	u^{k+1} = u^k + \tau_i \Gamma \nabla \log \pi_i(u^k) + \sqrt{2\tau_i \Gamma} W
\end{equation*}
where $W$ is a standard Gaussian distribution and $\Gamma \in \reals^{N\times N}$ is a positive definite preconditioning matrix. 
The $\tau_i$ were taken to be fixed for each kernel $K_i$ to a value found empirically to provide a reasonable acceptance rate. $\pi_i$ denotes the unnormalised target distribution for $K_i$, here given by
\begin{equation*}
	\pi_i(u^k) = \phi\left(\frac{\norm{A x^N - a}}{\delta_i}\right) q^N(u^k)
\end{equation*}
where $x^N = \sum_{i=0}^N u_i \phi_i$ and $q^N(\cdot)$ denotes the prior density of the coefficients $[u_j]_{j=1}^N$.

To ensure proposals were scaled to match the decay of the prior for the coefficients, we took $\Gamma=\textrm{diag}(\gamma)$, the diagonal matrix which has the coefficients $\gamma_i$ on its diagonal. 
Even with such a transition kernel, mixing is generally poor. To compensate $k$ was taken to be large; for $n = 12,17$ we took $k=10,000$, while for $n=22$ we took $k=40,000$. We note that such a large number of temperature levels and transitions makes computation expensive, highlighting the importance of future work toward methods for approximating the Bayesian posterior in a more computationally efficient manner.

\subsection{Monte Carlo Details for Poisson Equation}

The posterior distribution was obtained by use of the PT algorithm, for $m=20$ temperatures equally spaced on a logarithmic scale between $10^{-2}$ and $10^{-4}$. 
The transition kernels $K_i$ were given by 10 iterations of a MALA sampler, with preconditioner as described earlier and parameter $\tau$ chosen to achieve a good acceptance rate.
The number of iterations $P$ was taken to be $10^6$ when $n = 25$ and $10^7$ when $n=25$ or $n=36$.

\section{Truncation of the Prior Distribution (Proof of Theorem \ref{thm:truncation error})}

In this section we present the proof of Theorem \ref{thm:truncation error} in the main text.
We use a general result on the well-posedness of Bayesian inverse problems:

\begin{theorem}[Theorem 4.6 in \cite{Sullivan:2016wp}] \label{thm:tim}
	Let $\mathcal{X}$ and $\mathcal{A}$ be separable quasi-Banach spaces over $\mathbb{R}$.
	Suppose that
	\begin{equation}
		\frac{\mathrm{d} \mu_\delta^a}{\mathrm{d} \mu} = \frac{\exp(-\Phi_\delta(x;a))}{Z_\delta^a} 
	\end{equation}
	where the potential function $\Phi_\delta$ satisfies:
	\begin{enumerate}
		\item[S0] $\Phi_\delta(x; \cdot)$ is continuous for each $x \in \mathcal{X}$, $\Phi_\delta(\cdot; a)$ is measurable for each $a \in \mathcal{A}$, and for every $r > 0$, there exists $M_{0, r , \delta} \in \mathbb{R}$ such that, for all $(x, a) \in \mathcal{X} \times \mathcal{A}$ with $\| x \|_{\mathcal{X}} < r$ and $\| a \|_{\mathcal{A}} < r$,
		\[
			| \Phi_\delta(x; a) | \leq M_{0, r , \delta}.
		\]
		
		\item[S1] For every $r > 0$, there exists a measurable $M_{1, r , \delta} \colon \mathbb{R}_{+} \to \mathbb{R}$ such that, for all $(x, a) \in \mathcal{X} \times \mathcal{A}$ with $\| a \|_{\mathcal{A}} < r$,
		\[
			\Phi_\delta(x; a) \geq M_{1, r , \delta} \bigl( \| x \|_{\mathcal{X}} \bigr) .
		\]

		\item[S2] For every $r > 0$, there exists a measurable $M_{2, r , \delta} \colon \mathbb{R}_{+} \to \mathbb{R}_{+}$ such that, for all $(x, a, \tilde{a}) \in \mathcal{X} \times \mathcal{A} \times \mathcal{A}$ with $\| a \|_{\mathcal{A}} < r$, $\| \tilde{a} \|_{\mathcal{A}} < r$,
		\[
			| \Phi_\delta(x; a) - \Phi_\delta(x; \tilde{a}) | \leq \exp \bigl( M_{2, r , \delta} \bigl( \| x \|_{\mathcal{X}} \bigr) \bigr) \| a - \tilde{a} \|_{\mathcal{A}} .
		\]

	\end{enumerate}
	Let $\Phi_{\delta,N}$ be an approximation to $\Phi_\delta$ that satisfies (S1-S3) with $M_{i, r , \delta}$ independent of $N$, and such that
	\begin{enumerate}
		\item[S3] $\Psi \colon \mathbb{N} \to \mathbb{R}_{+}$ is such that, for every $r > 0$, there exists a measurable $M_{3, r , \delta} \colon \mathbb{R}_{+} \to \mathbb{R}_{+}$, such that, for all $(x, a) \in \mathcal{X} \times \mathcal{A}$ with $\| a \|_{\mathcal{A}} < r$,
		\[
			| \Phi_{\delta,N}(x; a) - \Phi_\delta(x; a) | \leq \exp \bigl( M_{3, r , \delta} \bigl( \| x \|_{\mathcal{X}} \bigr) \bigr) \Psi(N) .
		\]

		\item[S4] For some $r > 0$,
		\begin{equation}
			\mathbb{E}_{X \sim \mu} \bigl[ \exp ( 2 M_{3, r , \delta} ( \| X \|_{\mathcal{X}} ) - M_{1, r , \delta} ( \| X \|_{\mathcal{X}} ) ) \bigr] < \infty.
		\end{equation}
	\end{enumerate}
	
	Let $d_{\text{H}}$ denote the Hellinger distance on $\mathcal{P}_{\mathcal{X}}$.
	Then there exists a constant $C_\delta$, independent of $N$, such that 
	\[
		d_{\text{H}} \bigl( \mu_{\delta,N}^a, \mu_\delta^a \bigr) \leq C_\delta \Psi(N)
	\]
	where $\mu_{\delta,N}^a$ is the posterior distribution based on the potential function $\Phi_{\delta,N}$ instead of $\Phi_\delta$.
\end{theorem}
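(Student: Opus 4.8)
The plan is to run the standard Stuart-type stability argument for Bayesian inverse problems, working throughout with densities relative to the common reference (prior) measure $\mu$. Write $\rho(x) = \exp(-\Phi_\delta(x;a))$, $\rho_N(x) = \exp(-\Phi_{\delta,N}(x;a))$, and $Z = Z_\delta^a = \mu(\rho)$, $Z_N = Z_{\delta,N}^a = \mu(\rho_N)$, and fix $r > \|a\|_{\mathcal{A}}$ large enough that S4 applies. First I would record that $Z, Z_N \in (0,\infty)$: finiteness is part of the hypothesis that the stated densities define probability measures, while positivity follows from S0, since on the set $\{\|x\|_{\mathcal{X}} < r\}$ --- which has positive $\mu$-measure for $r$ large --- the potentials are bounded above by $M_{0,r,\delta}$, so $\rho, \rho_N \ge e^{-M_{0,r,\delta}}$ there. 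Then, writing the squared Hellinger distance and inserting $\pm\,\sqrt{\rho_N}/\sqrt{Z}$ so that $\sqrt{\rho_N/Z_N} - \sqrt{\rho/Z} = (\sqrt{\rho_N}-\sqrt{\rho})/\sqrt{Z} + \sqrt{\rho_N}\bigl(1/\sqrt{Z_N} - 1/\sqrt{Z}\bigr)$ and applying $(u+v)^2 \le 2u^2 + 2v^2$, one obtains
\[
2\, d_{\text{H}}^2(\mu_{\delta,N}^a,\mu_\delta^a) \;\le\; \frac{2}{Z}\int \bigl(\sqrt{\rho_N} - \sqrt{\rho}\bigr)^2\,\mu(\wrt x) \;+\; 2\,Z_N\bigl(\tfrac{1}{\sqrt{Z_N}} - \tfrac{1}{\sqrt{Z}}\bigr)^2 \;=:\; \tfrac{2}{Z}\,I_1 + 2\,I_2 ,
\]
and it remains to bound $I_1$ and $I_2$ by $O(\Psi(N)^2)$.

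For $I_1$, use the elementary bound $|e^{-s/2}-e^{-t/2}| \le \tfrac12\, e^{-\min(s,t)/2}|s-t|$ with $s = \Phi_{\delta,N}(x;a)$, $t = \Phi_\delta(x;a)$. The lower bound S1 --- which holds for both $\Phi_\delta$ and, by hypothesis with the \emph{same} $M_{1,r,\delta}$, for $\Phi_{\delta,N}$ --- gives $e^{-\min(s,t)/2} \le e^{-M_{1,r,\delta}(\|x\|_{\mathcal{X}})/2}$, while S3 gives $|s-t| \le e^{M_{3,r,\delta}(\|x\|_{\mathcal{X}})}\Psi(N)$. Squaring and integrating,
\[
I_1 \le \tfrac14\,\Psi(N)^2\,\mathbb{E}_{X\sim\mu}\bigl[\exp\bigl(2 M_{3,r,\delta}(\|X\|_{\mathcal{X}}) - M_{1,r,\delta}(\|X\|_{\mathcal{X}})\bigr)\bigr] =: \tfrac14\,K_{r,\delta}\,\Psi(N)^2 ,
\]
which is finite \emph{precisely} by S4. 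This is the step where the coupled integrability condition does all the work; everything else is routine.

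For $I_2$, the same estimates (now with $e^{-\min(s,t)}$ in place of $e^{-\min(s,t)/2}$ and using $M_{3,r,\delta}\ge 0$ to replace $e^{M_{3,r,\delta}}$ by $e^{2M_{3,r,\delta}}$) give $|Z_N - Z| \le \int |\rho_N - \rho|\,\mu(\wrt x) \le K_{r,\delta}\,\Psi(N)$, so $Z_N \to Z$ and hence $Z_N \ge Z/2$ for all $N$ beyond some $N_0$; the mean value theorem for $z\mapsto z^{-1/2}$ then yields $|Z_N^{-1/2} - Z^{-1/2}| \le \tfrac12 (Z/2)^{-3/2}|Z_N - Z|$, whence $I_2 \le C\, Z^{-2}K_{r,\delta}^2\,\Psi(N)^2$ for $N\ge N_0$. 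Combining the two bounds gives $d_{\text{H}}^2(\mu_{\delta,N}^a,\mu_\delta^a) \le C_\delta^2\,\Psi(N)^2$ for $N\ge N_0$ with $C_\delta$ depending only on $Z$ and $K_{r,\delta}$ (hence on $\delta$ and $r$) but not on $N$; taking square roots, and enlarging $C_\delta$ to absorb the finitely many $N < N_0$, finishes the proof. I do not expect a substantive obstacle here: the only points demanding care are (i) that S1 and S3 be invoked for the approximating potentials $\Phi_{\delta,N}$ with constants independent of $N$, so that the pointwise bounds are genuinely $N$-uniform, and (ii) that S4 is exactly the hypothesis that makes those pointwise bounds $\mu$-integrable. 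The hypothesis S2 (local Lipschitz continuity in the datum $a$) plays no role in this particular conclusion; it is used only for the companion well-posedness statement establishing continuity of $a \mapsto \mu_\delta^a$.
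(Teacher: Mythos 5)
Your argument is correct: this is the standard Hellinger-stability estimate for Bayesian inverse problems (insert $\pm\sqrt{\rho_N}/\sqrt{Z}$, bound the two resulting terms via the mean value theorem for $e^{-s/2}$ together with S1 and S3, and control the normalising constants using S0 for positivity and S4 for integrability), and you correctly identify S4 as the step that makes the pointwise bounds integrable and S2 as irrelevant to this particular conclusion. The paper itself does not prove this theorem — it imports it verbatim as Theorem~4.6 of the cited reference — and your proof is essentially the argument given there (the quasi-Banach generalisation of the Stuart-type well-posedness proof), so there is nothing further to compare.
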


This allows us to establish conditions on $A$ and $\mu$ that guarantee stability under truncation of the prior:

\begin{proof}[Proof of Theorem \ref{thm:truncation error}]
	Let $\varphi$ be as in Section \ref{sec:rcp_approximation}, and let 
	\begin{align*}
		\Phi_\delta(x;a) & = \varphi\left( \frac{\|A(x) - a\|_{\mathcal{A}}}{\delta} \right) \\
		\Phi_{\delta,N}(x;a) & = \varphi\left( \frac{\|A \circ P_N (x) - a\|_{\mathcal{A}}}{\delta} \right) .
	\end{align*}
	Our task is to check the conditions of Theorem \ref{thm:tim} hold for $\Phi_\delta$ and $\Phi_{\delta,N}$.
	\begin{enumerate}
		\item[S0] First, note that $\Phi_\delta(x;\cdot)$ is continuous (since $\varphi$ is continuous from Assumption \ref{varphi_assumption} and $\Phi_\delta(x;\cdot)$ is a composition of continuous functions) and that $\Phi_\delta(\cdot;a)$ is measurable (since $\phi$ is measurable and $\Phi_\delta(\cdot;a)$ is a composition of measurable functions).
		Second, note that $\varphi$ is a continuous bijection from $(0,\infty)$ to itself with $\varphi(0) = 0$.
		Thus $\varphi^{-1}$ exists and we can consider
		\begin{align*}
			\delta \varphi^{-1} \sup \{ |\Phi_\delta(x;a)| : \|x\|_{\mathcal{X}} , \|a\|_{\mathcal{A}} < r \}
			& = \sup \{ \|A(x) - a\|_{\mathcal{A}} : \|x\|_{\mathcal{X}} , \|a\|_{\mathcal{A}} < r \} \\
			& \leq \sup_{x \in \mathcal{X}} \|A(x)\|_{\mathcal{A}} + r \\
			& \leq \infty \quad \text{(Assumption \ref{assumption:bounded_info})} .
		\end{align*}
		Thus we can take $M_{0,r,\delta} = \varphi(\frac{1}{\delta} \sup_{x \in \mathcal{X}} \|A(x)\|_{\mathcal{A}} + \frac{r}{\delta} )$.

		\item[S1] Since $\Phi_\delta(x;a) \geq 0$ we can take $M_{1,r,\delta} = 0$.

		\item[S2] Given $r > 0$ let $R = \frac{1}{\delta}\sup_{x \in \mathcal{X}} \|A(x)\|_{\mathcal{A}} + \frac{r}{\delta}$, which is finite by Assumption \ref{assumption:bounded_info}.
		The upper bound
		\begin{align*}
			|\Phi_\delta(x;a) - \Phi_\delta(x;\tilde{a})|
			& = \left|\varphi\left(\frac{\|A(x) - a\|_{\mathcal{A}}}{\delta} \right) - \varphi\left(\frac{\|A(x) - \tilde{a}\|_{\mathcal{A}}}{\delta} \right) \right| \\
			& \leq C_R \left|\frac{\|A(x) - a\|_{\mathcal{A}}}{\delta} - \frac{\|A(x) - \tilde{a}\|_{\mathcal{A}}}{\delta} \right| \quad \text{(Assumption \ref{assumption:local_Lipschitz})} \\
			& \leq \frac{C_R}{\delta} \|a - \tilde{a}\|_{\mathcal{A}} \quad \text{(reverse triangle inequality)} 
		\end{align*}
		demonstrates that we can take $M_{2,r,\delta} = \max\{0 , \log(\frac{C_R}{\delta})\}$.
	\end{enumerate}

	Minor variation on the above arguments show that S1-3 also hold for $\Phi_{\delta,N}$ with the same constants $M_{i,r,\delta}$.

	\begin{enumerate}
		\item[S3] Let $C_R$ be defined as in S2.
		The upper bound
		\begin{align*}
			|\Phi_{\delta,N}(x;a) - \Phi_\delta(x;a)|
			& = \left|\varphi\left(\frac{\|A \circ P_N(x) - a\|_{\mathcal{A}}}{\delta} \right) - \varphi\left(\frac{\|A(x) - a\|_{\mathcal{A}}}{\delta} \right) \right| \\
			& \leq C_R \left|\frac{\|A \circ P_N(x) - a\|_{\mathcal{A}}}{\delta} - \frac{\|A(x) - a\|_{\mathcal{A}}}{\delta} \right| \quad \text{(Assumption \ref{assumption:local_Lipschitz})} \\
			& \leq \frac{C_R}{\delta} \|A \circ P_N(x) - A(x)\|_{\mathcal{A}} \quad \text{(reverse triangle inequality)} \\
			& \leq \frac{C_R}{\delta} \exp(m(\|x\|_{\mathcal{X}})) \Psi(N) \quad \text{(Assumption \ref{assumption:prior_truncation})}
		\end{align*}
		demonstrates that we can take $M_{3,r,\delta}(\|x\|_{\mathcal{X}}) = \max\{0 , \log(\frac{C_R}{\delta}) + m(\|x\|_{\mathcal{X}})\}$.

		\item[S4] Let $C_R$ be defined as in S2.
		The upper bound
		\begin{align*}
			\mathbb{E}_{X \sim \mu} [\exp(2M_{3,r,\delta}(\|X\|_{\mathcal{X}}) - M_{1,r,\delta}(\|X\|_{\mathcal{X}}))]  & = \mathbb{E}_{X \sim \mu} [\exp(2\max\{0, \log(C_R / \delta) + m(\|X\|_{\mathcal{X}})\} ) ] \\
			& \leq 1 + \frac{C_R}{\delta} \mathbb{E}_{X \sim \mu}[\exp(2m(\|X\|_{\mathcal{X}}))] \\
			& < \infty \quad \text{(Assumption \ref{assumption:prior_truncation})}
		\end{align*}
		establishes the last of the conditions for Theorem \ref{thm:tim} to hold.
	\end{enumerate}
	Thus from Theorem \ref{thm:tim}, $d_{\text{H}} \bigl( \mu_{\delta,N}^a, \mu_\delta^a \bigr) \leq C_\delta \Psi(N)$.
	The proof is completed since Assumption \ref{assumption:norm_bounds} implies that $d_{\mathcal{F}} \leq C_{\mathcal{F}}^{-1} d_{\text{TV}}$ where $d_{\text{TV}}$ is the total variation distance based on $\mathcal{F} = \{f : \|f\|_\infty \leq 1\}$; in turn it is a standard fact that $d_{\text{TV}} \leq \sqrt{2} d_{\text{H}}$.
\end{proof}

\end{document}